\DeclareMathOperator*{\argmax}{arg\,max}
\newtheorem{theorem}{Theorem}[section]
\newtheorem{lemma}[theorem]{Lemma}
\newtheorem{definition}[theorem]{Definition}
\newtheorem{corollary}[theorem]{Corollary}
\newcommand{\ignore}[1]{}
\newcommand{\hide}[1]{}
\newcommand{\A}{\mathcal{A}}
\newcommand{\M}{\mathcal{M}}
\def\Rset{\mathbb{R}}
\DeclareMathOperator*{\conv}{conv}
\DeclareMathOperator{\Reg}{\mathsf{Reg}}
\DeclareMathOperator{\poly}{poly}
\DeclarePairedDelimiter{\norm}{\|}{\|}
\newcommand{\cA}{\mathcal{A}}
\newcommand{\cB}{\mathcal{B}}
\newcommand{\cC}{\mathcal{C}}
\newcommand{\cD}{\mathcal{D}}
\newcommand{\cF}{\mathcal{F}}
\newcommand{\cM}{\mathcal{M}}
\newcommand{\cP}{\mathcal{P}}
\newcommand{\cR}{\mathcal{R}}
\newcommand{\cS}{\mathcal{S}}
\newcommand{\cU}{\mathcal{U}}
\newcommand{\cX}{\mathcal{X}}
\newcommand{\eps}{{\varepsilon}}
\DeclareMathOperator{\BR}{\mathsf{BR}}
\newcommand{\csp}{\varphi}
\newcommand{\esh}[1]{}
\newcommand{\nat}[1]{}
\newcommand\jon[1]{}
\newcommand{\optdist}{\mathcal{D}}
\newcommand{\values}{\mathcal{F}}
\newcommand{\OPT}{\textsf{OPT}}
\newcommand{\bwabort}{\textsc{BlackwellAbort}}
\newcommand{\abort}{\textsc{Abort}}
\title{Learning to Play Against Unknown Opponents}
\author{
Eshwar Ram Arunachaleswaran\thanks{University of Pennsylvania. 
Email: \texttt{eshwarram.arunachaleswaran@gmail.com}. This work was done in part while the author was visiting the Simons Institute for the Theory of Computing }
\and 
Natalie Collina\thanks{University of Pennsylvania.
Email: \texttt{ncollina@seas.upenn.edu} }
\and
Jon Schneider\thanks{Google Research. Email: \texttt{jschnei@google.com}}
}
\begin{document}

\maketitle

\begin{abstract}
    
We consider the problem of a learning agent who has to repeatedly play a general sum game against a strategic opponent who acts to maximize their own payoff by optimally responding against the learner's algorithm. The learning agent knows their own payoff function, but is uncertain about the payoff of their opponent (knowing only that it is drawn from some distribution $\mathcal{D}$). What learning algorithm should the agent run in order to maximize their own total utility, either in expectation or in the worst-case over $\mathcal{D}$? 

When the learning algorithm is constrained to be a no-regret algorithm, we demonstrate how to efficiently construct an optimal learning algorithm (asymptotically achieving the optimal utility) in polynomial time for both the in-expectation and worst-case problems, independent of any other assumptions.
When the learning algorithm is not constrained to no-regret, we show how to construct an $\varepsilon$-optimal learning algorithm (obtaining average utility within $\varepsilon$ of the optimal utility) for both the in-expectation and worst-case problems in time polynomial in the size of the input and $1/\varepsilon$, when either the size of the game or the support of $\mathcal{D}$ is constant. Finally, for the special case of the maximin objective, where the learner wishes to maximize their minimum payoff over all possible optimizer types, we construct a learner algorithm that runs in polynomial time in each step and guarantees convergence to the optimal learner payoff. All of these results make use of recently developed machinery that converts the analysis of learning algorithms to the study of the class of corresponding geometric objects known as menus.

\end{abstract}

\section{Introduction}

How should we design learning algorithms that act on our behalf in repeated strategic interactions (e.g., repeatedly bidding into an auction, playing a repeated game against an opponent, repeatedly negotiating a contract, etc.)?

One approach (arguably, the standard approach in the literature) is to design learning algorithms that satisfy strong worst-case counterfactual guarantees. For example, it is possible to design learning algorithms that guarantee \emph{low (external) regret}: that after $T$ rounds, the gap between the utility achieved by the algorithm and the utility achieved by the best fixed action in hindsight grows sublinearly in $T$. This approach has the strength that it assumes essentially nothing about the behavior of the opponent (that is, it works for any adversarially chosen sequence of opponent actions). But this strength is also, in some ways, a weakness: in most strategic environments, the opponent will be a rational agent who acts not completely adversarially, but to maximize their own payoff. Paradoxically, this can lead to situations where a learning algorithm performs quite poorly despite satisfying these guarantees; for example, \cite{braverman2018selling} show that if a bidder uses certain low regret algorithms to decide how to bid in a repeated auction, the auctioneer can take advantage of this to extract the full welfare of the buyer. Moreover, it is common to have at least some information about our opponent’s incentives, even if we do not understand them completely, and it is wasteful to not take advantage of this information. 

A second approach (the one we take in this paper) is to take inspiration from Bayesian mechanism design and consider the problem of designing a learning algorithm to play against a \emph{prior distribution} of possible opponents. Each opponent in the support of this distribution has their own utility function, is aware of the learning algorithm the learner is using, and (non-myopically) chooses their sequence of actions to maximize their own utility. We would like to design a learning algorithm that maximizes the utility of the learner over the course of this process. In other words, we want to solve a Stackelberg equilibrium problem in \emph{algorithm space}, where the learner leads by committing to a specific learning algorithm, and their opponent follows by playing the sequence of actions that optimally responds to this learning algorithm.

In contrast to the first approach (where efficient algorithms achieving a variety of different regret guarantees have been designed), fairly little is currently understood about this second approach. When the distribution $\cD$ is a singleton distribution (i.e., the learner knows the optimizer's payoff exactly), recent work \cite{zuo2015optimal,collina2023efficient} has resolved the algorithmic question of efficiently finding this optimal commitment. However in the more general setting where the learner possesses some uncertainty about the optimizer's payoff (the setting where one would most naturally want to employ a learning algorithm), this question remains open. 

\subsection{Our Results}

In this paper, we address this question by providing efficient algorithms for this problem in a variety of different contexts. More formally, we consider a setting where two players -- a learner and an optimizer -- repeatedly play a general-sum normal-form game $G$, where the learner has $m$ pure strategies and the optimizer has $n$ pure strategies. In this game, the payoff function $u_L$ of the learner is fixed and known to both parties, but the optimizer’s payoff function $u_O$ is sampled (once, before the game) from a known prior distribution $\mathcal{D}$ with support size $k$. 

At the beginning of the game, the learner commits to playing a specific learning algorithm $\mathcal{A}$, which describes their action at time $t$ as a function of the actions taken by the optimizer from rounds $1$ through $t-1$. The optimizer (of a specific type, sampled from $\mathcal{D}$) best responds to this commitment by playing the sequence of actions which maximizes their own utility. We are now faced with the following computational question: can we efficiently construct a learning algorithm $\mathcal{A}$ that maximizes the utility of the learner? 

In this paper we study this computational question in a handful of different settings. We prove the following main results:

\begin{enumerate}
\item We first study this question under the constraint that our learning-algorithm $\mathcal{A}$ is \emph{no-regret} (obtaining $o(T)$ regret after $T$ rounds). We show that, under this condition, we can efficiently (in polynomial time in $n$, $m$, and $k$) construct a learning algorithm $\mathcal{A}$ that achieves and computes the optimal utility for the learner (Theorem \ref{thm:proof_of_nr_alg}). 

\item We then turn our attention to the problem of constructing the optimal learning algorithm under no additional constraints. In this setting, we provide a method to construct an $\eps$-approximately optimal learning algorithm $\cA$ in time 

$$\left(\frac{mnk}{\eps}\right)^{O(\min(mn, k))}.$$

\noindent
This results in an efficient construction when \emph{either} the size $mn$ of the game or the size $k$ of the support is constant-sized (Theorem \ref{thm:main_result_without_nr}).

\item We then provide some evidence to show that the unconstrained problem above is hard when both $mn$ and $k$ are large. In particular, we prove that it is NP-hard to take as input a $k$-tuple $(u_1, u_2, \dots, u_k)$ of real numbers and decide whether there exists a learning algorithm $\cA$ where the optimizer of type $i$ receives utility at most $u_i$ when best responding to $\cA$ (Theorem \ref{thm:hardness}). This operation is a key subproblem in both of our approaches above, and implies that any efficient construction of an optimal learning algorithm must take a substantially different approach.

\item Finally, complementing this hardness result, we demonstrate an explicit efficient optimal learning algorithm under no additional constraints for the \emph{maximin} objective, where the learner wants to maximize the utility they receive from the worst type of optimizer. Notably, this construction guarantees that the learner receives (nearly) their maximin value $V$ while sidestepping the actual problem of computing $V$ (a problem we conjecture is NP-hard).

\end{enumerate}

Naively, the above computational problems are quite difficult, as they involve optimizing over the space of \emph{learning algorithms}: complex, infinite-dimensional objects defined via sequences of multivariate functions (e.g., the function which maps the history at time $t$ to the action played at time $t$). We handle this by applying a sequence of reductions from this general optimization problem to more tractable optimization problems. 

First, we employ a framework introduced by \cite{arunachaleswaran2024paretooptimal} that allows us to translate utility-theoretic statements about learning algorithms to equivalent statements about convex subsets of $\Rset^{mn}$ called \emph{menus}. At a high level, given any learning algorithm $\cA$, the menu $\cM(\cA) \subset \Rset^{mn}$ contains all distributions over pairs of pure strategies (``correlated strategy profiles'') that the optimizer can asymptotically induce by playing against this learning algorithm. Importantly, these correlated strategy profiles contain all the information necessary to compute quantities such as the utilities of both the learner and optimizer along with various forms of regret incurred by the learner. The menu $\cM(\cA)$ of a learning algorithm $\cA$ therefore forms a sufficient summary of all relevant properties of $\cA$; instead of thinking of an opponent as playing an arbitrary sequence of $T$ actions, we can think of this opponent as selecting a specific correlated strategy profile outcome from the menu $\cM(\cA)$ and inducing it by playing the corresponding sequence of actions. 

This reduces an optimization problem over the set of all algorithms to an optimization problem over a collection of convex sets (possible menus corresponding to a learning algorithm). Although convex sets are much ``nicer'' mathematical objects than learning algorithms, this optimization problem is still quite difficult -- the space of convex sets is still infinite-dimensional, even when they are subsets of a finite-dimensional Euclidean space. Here we reduce the problem further by applying a variant of the revelation principle and considering the resulting matching from the $k$ different optimizer types to the correlated strategy profiles they choose from the menu, which we call a \emph{CSP assignment}. This CSP assignment is $kmn$-dimensional (specifying one correlated strategy profile in $\Rset^{mn}$ for each optimizer type), and so if we can provide an effective description of the set of valid CSP assignments (e.g., via an efficient separation oracle), we can efficiently compute the CSP assignment which optimizes the learner's utility (and with a little more work, the optimal learning algorithm that induces this CSP assignment). 

We are now left with the problem of characterizing the set of valid CSP assignments, which in turn involves understanding the space of valid menus. In the setting where we restrict our algorithm to also be no-regret, we show how to write this set as an explicit $kmn$-dimensional polytope, thus allowing us to compute the optimal algorithm in polynomial time. Our main tool here is a characterization by \cite{arunachaleswaran2024paretooptimal}, showing that the menu of any no-regret algorithm contains the menu of a no-swap-regret\footnote{No-swap-regret is a similar but stronger counterfactual guarantee than no-regret that ensures that the regret is small on every subset of rounds where the learner played a specific action. See Section \ref{sec:model} for a precise definition.} algorithm, that all no-swap-regret algorithms have the same menu, and that all convex sets containing this menu are menus of some algorithm. Intuitively, this means that we can think of any generic no-regret algorithm as directly offering the CSP assignment to their opponent, and ``falling back'' to an arbitrary no-swap-regret algorithm if their opponent ever attempts to deviate. In the more general setting, we no longer have such a strong characterization of unconstrained learning algorithms. Instead, we demonstrate how to construct a separation oracle for the set of CSP assignments by reducing this to the problem of deciding whether a specific set is approachable (in the context of Blackwell Approachability~\cite{blackwell1956analog}) in a specific vector-valued game (with vector-valued payoffs of dimension $\min(mn, k)$). This in turn allows us to use an algorithm of~\cite{mannor2009approachability} that decides the approachability of a set in an arbitrary instance of Blackwell approachability (albeit in time exponential in the dimension of vector-valued payoffs).

Finally, one may ask if it is truly necessary to pay this exponential cost in the dimension $\min(mn, k)$ when designing unconstrained learning algorithms. On one hand, this is a central roadblock to the Blackwell approachability based algorithms above -- \cite{mannor2009approachability} show that the problem of deciding whether a specific set is NP-hard. Nonetheless, we show how to sidestep this obstacle for the specific objective of the maximin utility\footnote{We briefly emphasize here that although we have stated the above results in the context of maximizing the expected utility of the learner when the optimizer is drawn from a known prior distribution, this optimization framework is fairly generic and allows for optimizing many other objectives over the space of learning algorithms (essentially, any objective which can be written as a concave function of the CSP assignment). } of the learner (the worst-case utility against any optimizer in the support of $\cD$). The key idea is to essentially run the same algorithm as before (computing a candidate optimal menu), but to initially be very optimistic about which menus are approachable. As we actually run our learning algorithm, either we will approach this menu without issue, or at some point our approachability algorithm will fail to play a valid action -- at this point, we can restart the algorithm, being slightly more cautious about which menus are approachable. Interestingly, unlike all previous algorithms we have discussed, this algorithm does not allow us to directly compute the true optimal menu or even the learner's optimal maximin value (in some sense, it relies on the optimizer to perform part of the computation). It is also essential to our approach that the maximin objective is in some sense ``one-dimensional'' -- we leave it as an interesting open question whether this technique can be generalized to the original expected utility objective.

\subsection{Related Work}

Our paper most directly builds off of the work of \cite{arunachaleswaran2024paretooptimal}, who originally introduced the framework of considering asymptotic menus of learning algorithms. This framework was introduced by \cite{arunachaleswaran2024paretooptimal} with the goal of characterizing the set of \emph{Pareto-optimal} learning algorithms. In the context of this paper, these are exactly the learning algorithms that are optimal for the learner to commit to against some distribution $\cD$ over optimizer payoffs. Surprisingly, \cite{arunachaleswaran2024paretooptimal} show that in some games $G$ (i.e., for some payoff functions $u_L$ for the learner), standard no-regret algorithms such as multiplicative weights and follow-the-regularized-leader are Pareto-dominated, implying that in these games they will never appear as solutions to our optimization problem. 

Single-shot (i.e., not repeated) instances of Stackelberg games have been somewhat well studied in the economics and computational game theory literature~\cite{Stack35}. In these games, a leader commits to an optimal strategy against a rational follower. The specific variant of the problem that we study is one where the leader has some uncertainty over the follower's type, in the form of a prior distribution over types of the follower. This problem was introduced by~\cite{conitzer2006computing}, who called it the Bayesian Stackelberg Problem. Our problem can be phrased as a Bayesian Stackelberg problem where the learner's action set is the set of all learning algorithms and the optimizer's action set is the set of all action sequences (although framing it directly in this way is not particularly helpful from a computational point of view, since both these action spaces are incredibly large). Interestingly, \cite{conitzer2006computing} show that this problem is NP-hard even for explicit bimatrix games, making it somewhat striking that we can efficiently compute the optimal no-regret algorithm to commit to. Other papers, such as ~\cite{paruchuri2008efficient} have attempted to come up with good heuristics and algorithms to solve this problem for various special cases corresponding to applications of interest, including airport security, cybersecurity, etc. 

There is also a rich line of work focused on the problem of learning the (single-shot) Stackelberg equilibrium of an unknown game via repeated interaction with an agent of some specified behavior. For example, one can try to estimate this Stackelberg equilibrium by repeatedly interacting with an agent that myopically best-responds every round \cite{letchford2009learning, blum2014learning, zhao2023online}, a no-regret agent \cite{brown2023is, goktas2022robust}, a no-swap-regret agent / an agent that best-responds to calibrated action forecasts \cite{brown2023is, haghtalab2024calibrated}, or an agent belonging to a parametrized class of behaviors \citep{sessa2020learning}. It is worth emphasizing that this is a fairly different computational problem than the problem we study in this work -- although the algorithms we commit to may ``try'' to learn the payoffs of their opponent in a similar way to these algorithms, we 1. do not make any behavioral assumptions about the opponents (allowing them to react fully rationally and non-myopically), and 2. attempt to find the Bayesian Stackelberg equilibrium of the full repeated game, not the single-shot Stackelberg equilibrium. Indeed, the recent work of \cite{ananthakrishnan2024knowledge} shows that (in some sense) there can be a gap between these two quantities; there are cases where a learner provably cannot (i.e., in no Nash equilibrium of the game) achieve their expected Stackelberg value against a distribution of opponents. 

Our paper also contributes to a growing body of research focused on the strategic manipulability of learning algorithms within game-theoretic contexts. One of the foundational contributions to this area is by \cite{braverman2018selling}, who examined these issues within the framework of non-truthful auctions involving a single buyer. Since then, similar dynamics have been explored across various economic models, such as alternative auction settings \cite{deng2019prior, cai2023selling, kolumbus2022and, kolumbus2022auctions}, contract theory \cite{guruganesh2024contracting}, Bayesian persuasion \cite{chen2023persuading}, general games \cite{deng2019strategizing, brown2023is}, and Bayesian games \cite{MMSSbayesian}. Both \cite{deng2019strategizing} and \cite{MMSSbayesian} establish that no-swap-regret is a necessary criterion to prevent optimizers from gaining by manipulating learners. 


We crucially use the existence of efficient no-swap regret algorithms in constructing our optimal no-regret algorithm in Section~\ref{sec:nr}. The first no-swap-regret algorithms were introduced by \cite{foster1997calibrated}, who also demonstrated that their dynamics converge to correlated equilibria. Since then, numerous researchers have developed learning algorithms aimed at minimizing swap regret in games \cite{hart2000simple, blum2007external, peng2023fast, dagan2023external}. Our polynomial-time algorithm for designing general menus for the maximin objective is similar to (and partially motivated by) the technique of \emph{semi-separation} recently introduced by \cite{daskalakis2024efficient}, who use this technique to design learning algorithms that minimize a variant of swap regret (linear swap regret) despite this variant being computationally hard to compute.

\section{Model and Preliminaries}\label{sec:model}





\paragraph{Notation}
Unless otherwise specified, all norms $\norm{x}$ of a vector $x \in \Rset^d$ refer to the Euclidean norm. Given any convex set $K$, we write $K^{\eps} = \{x \in \Rset^{d} \mid \exists\, y \in K \text{ s.t. } \norm{x - y} \leq \eps\}$ to denote the $\eps$-expansion of $K$. We write $\Delta_{d}$ to denote the $d$-simplex $\{ x \in \Rset^{d} \mid x_i \geq 0, \sum x_i = 1\}$.

\subsection{Online Learning}

We consider a setting where two players, a \emph{learner} $L$ and an \emph{optimizer} $O$, repeatedly play a two-player bimatrix game $G$ for $T$ rounds. The game $G$ has $m$ pure actions for the learner and $n$ pure actions for the optimizer. The learner has a fixed payoff function $u_{L}: [m] \times [n] \rightarrow [-1, 1]$, known to both parties. The optimizer has one of $k$ finitely many possible payoff functions $u_{O,i}: [m] \times [n] \rightarrow [-1, 1]$, where their type $i \in [k]$ is drawn from a known distribution $\optdist$ once at the beginning of the game. We will let $\alpha_{i} \in [0, 1]$ denote the probability associated with type $i$. 

During each round $t$, the learner picks a mixed strategy $x_t \in \Delta_{m}$ while the optimizer simultaneously picks a mixed strategy $y_{t} \in \Delta_{n}$; the learner then receives reward $u_{L}(x_t, y_t)$, and the optimizer with private type $i \in [k]$ receives reward $u_{O,i}(x_t, y_t)$ (where here we have linearly extended $u_{L}$ and $u_{O,i}$ to bilinear functions over the domain $\Delta_{m} \times \Delta_{n}$). Both the learner and optimizer observe the full mixed strategy of the other player (the ``full-information'' setting). For any mixed strategy $y \in \Delta_n$ for the optimizer, we define $\BR_{L}(y) = \arg\max_{x \in \Delta_m} u_L(x, y)$ to be the set of best responses for the learner.

The learner will employ a \emph{learning algorithm} $\cA$ to decide how to play. For our purposes, a learning algorithm is a family of horizon-dependent algorithms $\{A^T\}_{T \in \mathbb{N}}$. Each $A^{T}$ describes the algorithm the learner follows for a fixed time horizon $T$. Each horizon-dependent algorithm is a mapping from the history of play to the next round's action, denoted by a collection of $T$ functions $A^T_1, A^T_2 \cdots A^T_T$, each of which deterministically map the transcript of play (up to the corresponding round) to a mixed strategy to be used in the next round, i.e., $x_t = A^T_t(y_1, y_2,\cdots, y_{t-1})$. 

We will be interested in settings where a learner (not knowing $u_O$, but knowing $u_L$ and $\optdist$) commits to a specific learning algorithm at the beginning of the game. This algorithm is then revealed to the optimizer, who will approximately best-respond by selecting a sequence of actions that approximately (up to sublinear $o(T)$ factors, breaking ties in the learner's favor) maximizes their payoff.

\subsection{Menus}
\label{sec:menus}

This faces the learner with the very complex task of \emph{optimization over all learning algorithms} -- complex mathematical objects described by families of functions. One main contribution of this paper is to demonstrate that this optimization task is (in many cases) feasible, by following a framework introduced in \cite{arunachaleswaran2024paretooptimal} that allows us to translate statements about learning algorithms to statements about convex sets called \emph{menus}.

Note that every transcript of play of two players induces a distribution over pairs of actions they played. In particular, each transcript $\{(x_1,y_1), (x_2, y_2), \dots, (x_T,y_T) \}$ corresponds to an $mn$-dimensional \emph{correlated strategy profile (CSP)} $\phi \in \Delta_{mn}$, that is defined as 
 \[ \phi = \frac{1}{T} \sum_{t=1}^T x_t \otimes y_t. \]
 
A CSP $\phi$ contains enough information to answer many questions about the outcome of this game: for example, with the CSP $\phi$ we can compute the average utility $u_L(\phi) = \frac{1}{T} \sum_{t=1}^{T} u_L(x_t, y_t)$ of the learner (and likewise, of any of the types of optimizers). 

For this reason, instead of understanding the full complexities of a learning algorithm $\A$, it is sufficient to understand which CSPs it is possible for an optimizer to implement against this learning algorithm. Informally, this is exactly what a menu is: the \emph{asymptotic menu} (or simply, \emph{menu}) $\M(\A) \subseteq \Delta_{mn}$ is a closed, convex subset of CSPs with the property that any $\phi \in \M(\A)$ is close to some CSP that the optimizer can induce by playing some sequence of actions $y_1, y_2, \dots, y_T$ against this learning algorithm, for sufficiently large $T$.

For completeness, we spell out the exact correspondence between learning algorithms and menus (following \cite{arunachaleswaran2024paretooptimal}) in Appendix~\ref{app:algtomenus}. For the remainder of the paper, however, instead of thinking about the setting of algorithmic commitment described previously, we will instead consider the following equivalent setting of \emph{menu commitment}:

\begin{enumerate}
    \item The learner commits to a menu $\cM \subseteq \Delta_{mn}$ which is a valid asymptotic menu of some learning algorithm $\cA$.
    \item The optimizer (of type $i$) then chooses the CSP $\phi \in \cM$ that maximizes their utility $u_{O,i}(\phi)$, breaking ties in favor of the learner.
    \item The learner then receives utility $u_L(\phi)$.
\end{enumerate}

For any asymptotic menu $\M$, define $V_{L}(\M, u_O)$ to be the utility the learner ends up with under this process. Specifically, define $V_{L}(\M, u_O, \varepsilon) = \max \{ u_{L}(\csp) \mid \csp \in \cM, u_O(\csp) \ge \max_{\csp' \in \M} u_{O}(\csp') - \varepsilon\}$ and $V_{L}(\M, u_O) = V_{L}(\M, u_O, 0)$. 
We refer to any CSP in the set $\argmax_{\csp \in \cM} \{ u_{L}(\csp) \mid u_O(\csp) \ge \max_{\csp \in \M} u_{O}(\csp) \}$ as the point picked or selected by the optimizer $u_O$ in the menu $\cM$. 

The menu commitment problem above is complicated by the fact that learner must commit to a menu $\cM$ which is the asymptotic menu of some learning algorithm $\cA$ (we will often refer to such sets as \emph{valid menus}); without this constraint, the learner could simply pick their favorite CSP in all of $\Delta_{mn}$ and force the optimizer to play it. Luckily, it is possible to provide an extremely precise characterization of all possible valid menus.

\begin{theorem}[~\cite{arunachaleswaran2024paretooptimal}]\label{thm:characterization}
A closed, convex subset $\M \subseteq \Delta_{mn}$ is a valid menu iff for every $y \in \Delta_{n}$, there exists a $x \in \Delta_{m}$ such that $x \otimes y \in \M$. 
\end{theorem}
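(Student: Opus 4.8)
The plan is to prove both directions of the characterization. For the forward direction, suppose $\M = \M(\cA)$ is the asymptotic menu of some learning algorithm $\cA$, and fix any optimizer mixed strategy $y \in \Delta_n$. I would consider the ``stubborn'' optimizer who plays $y$ in every round regardless of history. Against this opponent, the learner's algorithm $\cA$ (for horizon $T$) produces some sequence of mixed strategies $x_1, \dots, x_T$, and the induced CSP is $\phi_T = \left(\frac{1}{T}\sum_{t=1}^T x_t\right) \otimes y = \bar x_T \otimes y$ for some $\bar x_T \in \Delta_m$. By compactness of $\Delta_m$, the sequence $\bar x_T$ has a convergent subsequence with limit $x^\star \in \Delta_m$, and since the menu is the closure of the set of asymptotically achievable CSPs, $x^\star \otimes y \in \M(\cA) = \M$. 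This gives the required $x \in \Delta_m$ with $x \otimes y \in \M$.

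For the reverse direction, suppose $\M \subseteq \Delta_{mn}$ is closed and convex and satisfies: for every $y \in \Delta_n$ there is $x$ with $x \otimes y \in \M$. I need to construct a learning algorithm whose asymptotic menu is exactly $\M$. The natural idea is: the algorithm watches the empirical distribution $\bar y_t = \frac{1}{t}\sum_{s \le t} y_s$ of the optimizer's play so far, and plays a best response (from the learner's point of view) within the ``fiber'' of $\M$ above $\bar y_t$ — i.e., it plays (a point close to) the $x$ such that $x \otimes \bar y_t \in \M$, choosing among such $x$ to steer toward the learner-preferred CSP. More carefully, one shows (i) that any CSP $\phi \in \M$ \emph{can} be induced: if the optimizer wants to induce $\phi$, write $\phi = \sum_j \lambda_j (x_j \otimes y_j)$ in a form where the optimizer can replay the marginal sequence $y_1, \dots$ with the right frequencies and the algorithm's fiber-response reproduces the $x_j$'s; and (ii) that \emph{no} CSP outside $\M$ can be induced, because at every round the played pair $(x_t, y_t)$ — and hence by convexity the time-average — stays consistent with $\M$ in the sense that its optimizer-marginal $y_t$ always has the algorithm responding inside the fiber above the running average. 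The condition ``for every $y$ there exists $x$ with $x\otimes y \in \M$'' is precisely what guarantees this fiber is always nonempty, so the algorithm is well-defined against every opponent.

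The main obstacle — and the step I would spend the most care on — is the reverse direction's claim that the constructed algorithm has asymptotic menu \emph{exactly} $\M$, not something larger or smaller. The ``not smaller'' part (every $\phi \in \M$ is asymptotically achievable) requires exhibiting, for each $\phi \in \M$, an explicit optimizer strategy; since $\phi \in \Delta_{mn} \subseteq \Delta_m \otimes \Delta_n$ need not factor as a single product $x \otimes y$, the optimizer must induce it as a time-average of several product CSPs, and one must check the algorithm's fiber-responses along the way actually produce the intended learner-marginals (this is where a careful ordering/blocking of the optimizer's action sequence, together with a continuity/Lipschitz argument on how the fiber point varies with $\bar y_t$, is needed — and one may need to appeal to a selection argument or allow $\eps$-slack, matching the ``close to some CSP'' language in the menu definition). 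The ``not larger'' part (nothing outside $\M$ is achievable) is a convexity argument: every realized CSP is an average of products $x_t \otimes y_t$ where each $x_t$ was chosen from the $\M$-fiber over the current empirical $\bar y_{t-1}$, so one argues the running average CSP stays within (a vanishing neighborhood of) $\M$. I would also remark that this reverse direction is really the substantive content of Theorem~\ref{thm:characterization} and is the part inherited from \cite{arunachaleswaran2024paretooptimal}; if a complete construction is deferred there, I would state the algorithm precisely, prove the ``not larger'' inclusion in full, and sketch the ``not smaller'' inclusion via the blocking construction, citing the prior work for the detailed approachability-style argument that the fiber-tracking dynamics converge.
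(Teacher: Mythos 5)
Your forward direction is correct and matches the standard argument: against a stubborn opponent playing $y$ forever, the learner's time-averaged response factors as $\bar x_T \otimes y$, and a compactness/subsequence argument plus closedness of $\M$ gives $x^\star \otimes y \in \M$.

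The reverse direction, however, departs from the argument in \cite{arunachaleswaran2024paretooptimal} (which the paper explicitly describes) and has a genuine gap. The paper's route is to observe that the hypothesis on $\M$ is precisely \emph{response-satisfiability} of the set $\M$ in the Blackwell approachability instance with vector-valued payoff $u(x,y)=x\otimes y$ (Theorem~\ref{thm:blackwell}); Blackwell's theorem then hands you an algorithm whose time-averaged CSP converges to $\M$ against \emph{any} opponent, and a separate ``padding'' step (Lemma~3.5 of \cite{arunachaleswaran2024paretooptimal}, sketched in Appendix~\ref{app:explicit_algorithms}) extends this so that every point of $\M$ is in fact achievable, not just some sub-menu. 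Your proposed algorithm --- at time $t$, play some $x_t$ with $x_t\otimes\bar y_{t-1}\in\M$, choosing the learner-preferred such $x_t$ --- does not in general drive $\frac{1}{T}\sum_t x_t\otimes y_t$ into $\M$. The condition $x_t\otimes\bar y_{t-1}\in\M$ controls the pairing of $x_t$ with the \emph{past} empirical marginal $\bar y_{t-1}$, not with the actual $y_t$, and there is no cancellation argument that makes this discrepancy vanish on average. Concretely, take $m=n=2$ and $\M=\{\phi:\phi_{11}+\phi_{22}\ge\phi_{12}+\phi_{21}\}$ (response-satisfiable: match the majority action in $\bar y$). If the learner always selects the \emph{extreme} point of the fiber over $\bar y_{t-1}$ (e.g.\ $(1,0)$ whenever $\bar y_{t-1,1}>1/2$), an alternating opponent keeps the running CSP around the opposite half-space, bounded away from $\M$. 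So ``stay in the fiber over the running average'' is not a sufficient invariant; the Blackwell algorithm's choice must depend on how and in what direction the running CSP is currently deviating from $\M$, not merely on a fiber membership condition. Your ``not smaller'' step has the same underlying problem: because the fiber over $\bar y_{t-1}$ is typically a large set and your selection rule within it optimizes $u_L$, the opponent cannot steer the learner's marginal to the specific $x_j$'s needed to synthesize an arbitrary target $\phi\in\M$. The fix in the cited work is to let the opponent explicitly ``nominate'' a target point during a padding phase, then track it, falling back to the approachability dynamics upon any deviation. I would replace the fiber-tracking construction with the response-satisfiability $\Rightarrow$ Blackwell approachability $\Rightarrow$ padding pipeline.
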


An immediate corollary of this property is that menus are upwards closed under inclusion.
\begin{corollary}
\label{corollary:upwards_closed}
    If $\cA \subseteq \cB \subseteq \Delta_{mn}$ and $\cA$ is a valid menu, then $\cB$ is also a valid menu.
\end{corollary}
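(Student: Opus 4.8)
The plan is to derive this immediately from the characterization in Theorem~\ref{thm:characterization}, since that characterization is visibly monotone with respect to set inclusion. Concretely, I would first note that the statement should be read with $\cB$ a closed, convex subset of $\Delta_{mn}$ (this is implicit in calling $\cB$ a ``valid menu''); the content of the corollary is then purely about the covering condition ``for every $y \in \Delta_n$ there exists $x \in \Delta_m$ with $x \otimes y \in \M$.''

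The main step is a one-line observation: fix any $y \in \Delta_n$. Since $\cA$ is a valid menu, Theorem~\ref{thm:characterization} guarantees there exists $x \in \Delta_m$ with $x \otimes y \in \cA$. But $\cA \subseteq \cB$, so $x \otimes y \in \cB$ as well. As $y$ was arbitrary, $\cB$ satisfies the covering condition of Theorem~\ref{thm:characterization}, and being closed and convex, $\cB$ is therefore a valid menu.

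I expect there to be essentially no obstacle here — the corollary is ``immediate'' exactly because the defining condition only asks for the \emph{existence} of a suitable point $x \otimes y$ in the set, a property that can only be preserved or strengthened when the set grows. The one point worth stating carefully is the closed/convex hypothesis on $\cB$: if one does not assume it, the conclusion can fail (e.g.\ $\cB$ could be a non-convex superset), so I would make that assumption explicit in the proof, or equivalently remark that the covering condition alone is monotone and the remaining menu axioms (closedness, convexity) are hypotheses on $\cB$.
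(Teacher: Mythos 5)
Your proof is correct and takes exactly the approach the paper intends: the paper presents the corollary as ``immediate'' from Theorem~\ref{thm:characterization}, and your monotonicity observation (a witness $x \otimes y \in \cA \subseteq \cB$ persists when the set grows) is precisely that argument. Your extra remark that closedness and convexity of $\cB$ must be taken as hypotheses is a sensible clarification, not a deviation.
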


In fact, an artifact of the proof of Theorem~\ref{thm:characterization} is that any menu with an efficient representation can be implemented via an efficient learning algorithm that asymptotically realizes the original menu. The main idea of this algorithm is to run an algorithm for Blackwell approachability~\cite{blackwell1956analog} to converge to the appropriate set of CSPs, combining this with a ``padding'' argument to guarantee that every point in this set is indeed approachable. For the sake of completeness, we outline the same sequence of ideas in Appendix~\ref{app:explicit_algorithms}.

\subsection{Menus and Regret}

Many learning algorithms are designed to possess specific counterfactual regret guarantees. Two of these properties -- no-regret and no-swap-regret -- are of particular importance in this paper.

A learning algorithm $\cA$ is a \emph{no-regret algorithm} if it is the case that, regardless of the sequence of actions $(x_1, x_2, \dots, x_T)$ taken by the optimizer, the learner's utility satisfies:

\begin{equation*}
\sum_{t=1}^{T} u_{L}(x_t, y_t) \geq \left(\max_{x^* \in [m]} \sum_{t=1}^{T} u_{L}(x^*, y_t)\right) - o(T).
\end{equation*}

A learning algorithm $\cA$ is a \emph{no-swap-regret algorithm} if it is the case that, regardless of the sequence of actions $(x_1, x_2, \dots, x_T)$ taken by the optimizer, the learner's utility satisfies:

\begin{equation*}
\sum_{t=1}^{T} u_{L}(x_t, y_t) \geq \max_{\pi: [m] \rightarrow [m]} \sum_{t=1}^{T} u_{L}(\pi(x_t), y_t) - o(T).
\end{equation*}

\noindent
Here the maximum is over all swap functions $\pi: [m] \rightarrow [m]$ (extended linearly to act on elements $x_t$ of $\Delta_m$). It is a fundamental result in the theory of online learning that both no-swap-regret algorithms and no-regret algorithms exist (see \cite{cesa2006prediction}). 

The above properties on algorithms directly translate into properties of the menus that they induce. We say that a given CSP $\csp$ is \emph{no-regret} if it satisfies the no-regret constraint

\begin{equation}\label{eq:no-regret-constraint}
\sum_{i\in[m]}\sum_{j\in[n]} \csp_{ij}u_{L}(i, j) \geq \max_{i^{*} \in [m]}\sum_{i\in [m]}\sum_{j\in[n]} \csp_{ij}u_{L}(i^*, j).
\end{equation}

\noindent
Similarly, we say that the CSP $\csp$ is \emph{no-swap-regret} if, for each $j \in [n]$, it satisfies

\begin{equation}\label{eq:no-swap-regret-constraint}
\sum_{i \in [m]} \csp_{ij}u_{L}(i, j) \geq \sum_{j\in[n]}\max_{i_j^{*} \in [m]} \sum_{i\in [m]}\csp_{ij}u_{L}(i_j^*, j).
\end{equation}

For a fixed $u_L$, we will define the \emph{no-regret menu} $\M_{NR}$ to be the convex hull of all no-regret CSPs, and the \emph{no-swap-regret menu} $\M_{NSR}$ to be the convex hull of all no-swap-regret CSPs. The following result states that the asymptotic menu of any no-(swap-)regret algorithm is contained in the no-(swap-)regret menu.

\begin{theorem}[\cite{arunachaleswaran2024paretooptimal}]
\label{thm:nr_nsr_containment}
If a learning algorithm $\cA$ is no-regret, then $\M(\cA) \subseteq \M_{NR}$. If $\cA$ is no-swap-regret, then $\M(\cA) \subseteq \M_{NSR}$.
\end{theorem}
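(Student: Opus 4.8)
The plan is to prove the contrapositive statement at the level of CSPs: I will show that if an algorithm $\cA$ is no-regret, then every CSP in $\M(\cA)$ satisfies the no-regret constraint~\eqref{eq:no-regret-constraint}, so that $\M(\cA)$ is contained in the set of no-regret CSPs and hence (being convex) in their convex hull $\M_{NR}$; the no-swap-regret case is entirely analogous using~\eqref{eq:no-swap-regret-constraint}. The key bridge is the defining property of the asymptotic menu recalled in Section~\ref{sec:menus}: any $\csp \in \M(\cA)$ is (arbitrarily close to) a CSP $\tfrac{1}{T}\sum_{t=1}^T x_t \otimes y_t$ that the optimizer can induce against $\cA$ for $T$ large, where $x_t = A^T_t(y_1,\dots,y_{t-1})$.

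First I would fix $\csp \in \M(\cA)$ and an error tolerance $\delta > 0$, and invoke the algorithm-to-menu correspondence to obtain, for all sufficiently large $T$, a sequence $(y_1,\dots,y_T)$ whose induced CSP $\csp^{(T)} = \tfrac1T\sum_t x_t\otimes y_t$ satisfies $\norm{\csp^{(T)} - \csp} \le \delta$. Next I would apply the no-regret guarantee of $\cA$ to this particular optimizer sequence: it gives
\begin{equation*}
\sum_{t=1}^T u_L(x_t,y_t) \;\ge\; \max_{i^*\in[m]} \sum_{t=1}^T u_L(i^*, y_t) \;-\; o(T).
\end{equation*}
Then I would rewrite both sides in terms of $\csp^{(T)}$: dividing by $T$ and using bilinearity, the left side is exactly $\sum_{i,j}\csp^{(T)}_{ij} u_L(i,j)$, and for each fixed $i^*$ the quantity $\tfrac1T\sum_t u_L(i^*,y_t)$ equals $\sum_{i,j}\csp^{(T)}_{ij} u_L(i^*,j)$ (since the marginal of $\csp^{(T)}$ on the optimizer's coordinate is $\tfrac1T\sum_t y_t$, and $u_L(i^*,\cdot)$ depends only on that marginal). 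Hence $\csp^{(T)}$ satisfies~\eqref{eq:no-regret-constraint} up to an additive $o(1)$ term. Letting $T\to\infty$ shows $\csp^{(T)}$ approaches the closed set of no-regret CSPs; combined with $\norm{\csp^{(T)}-\csp}\le\delta$ and then $\delta\to 0$, we conclude $\csp$ itself is a no-regret CSP.

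Finally, since this holds for every $\csp \in \M(\cA)$, the menu $\M(\cA)$ is contained in the (closed) set of no-regret CSPs, and as $\M(\cA)$ is convex it lies in the convex hull $\M_{NR}$, which is the claim; repeating the argument verbatim with the swap-regret inequality and the per-column form~\eqref{eq:no-swap-regret-constraint} gives $\M(\cA)\subseteq\M_{NSR}$ when $\cA$ is no-swap-regret. I expect the only delicate point to be the two limit interchanges — passing from the finite-horizon near-realizability of $\csp$ to an exact statement about $\csp$ — which requires knowing that the no-regret menu is closed (so that a limit of approximately-no-regret CSPs is no-regret); this is immediate from the fact that~\eqref{eq:no-regret-constraint} is a finite system of non-strict linear inequalities in $\csp$, hence defines a closed set. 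Everything else is bookkeeping with bilinear forms and marginals.
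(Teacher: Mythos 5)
The paper does not actually supply its own proof of this theorem — it is cited verbatim from \cite{arunachaleswaran2024paretooptimal} — so there is no in-paper argument to compare against. That said, your proof is the natural one and is essentially correct: unfold the definition of $\M(\cA)$, apply the sublinear-regret guarantee to the approximating finite-horizon transcript, express both sides of the regret inequality as bilinear functionals of the empirical CSP $\csp^{(T)}$, and pass to the limit using that the no-regret set is closed (a finite intersection of halfspaces, once the $\max$ over $i^*$ is unrolled). The swap-regret case goes through identically with the per-action-column constraints.

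One point worth tightening. You invoke the informal description from Section~\ref{sec:menus} that each $\csp\in\M(\cA)$ is near a \emph{single} achievable CSP $\tfrac1T\sum_t x_t\otimes y_t$. The formal definition in Appendix~\ref{app:algtomenus} is slightly weaker: $\M(\cA)$ is the Hausdorff limit of $\M(A^T)$, each of which is the \emph{convex hull} of achievable CSPs, so a priori a menu point is only near a convex combination of achievable ones. This does not break your argument, but the clean way to close the gap is to observe that \eqref{eq:no-regret-constraint} (resp.\ \eqref{eq:no-swap-regret-constraint}) defines a \emph{convex} set — a linear form bounded below by a pointwise maximum of linear forms — so a convex combination of $o(1)$-approximately-no-regret CSPs is still $o(1)$-approximately-no-regret, and your closedness argument then finishes as written. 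Also note the tacit uniformity you rely on: the $o(T)$ term in the regret bound must hold uniformly over optimizer sequences (which it does by definition), so that the same $T$ can be chosen large enough for both the $o(T)/T$ bound and the $\delta$-proximity to $\csp$. With those two remarks made explicit, the argument is complete.
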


Note that both $\M_{NR}$ and $\M_{NSR}$ themselves are valid asymptotic menus, since for any $y \in \Delta_{n}$, they contain some point of the form $x \otimes y$ for some $x \in \BR_{L}(y)$ (thus satisfying the characterization in Theorem \ref{thm:characterization}). Additionally, we can express each of these menus concisely as a polytope with a polynomial number of half-space constraints. We make critical use of the following properties (proved in~\cite{arunachaleswaran2024paretooptimal}) of all no-swap-regret algorithms and the set $\M_{NSR}$.

First, we see that all no-swap-regret algorithms have the same asymptotic menu, which is exactly the set $\M_{NSR}$.

\begin{theorem}[~\cite{arunachaleswaran2024paretooptimal}]\label{thm:unique_nsr_menu}
If $\cA$ is a no-swap-regret algorithm, then $\M(\cA) = \M_{NSR}$.
\end{theorem}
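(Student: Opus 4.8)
The statement to prove is Theorem~\ref{thm:unique_nsr_menu}: every no-swap-regret algorithm $\cA$ has asymptotic menu exactly $\M_{NSR}$. By Theorem~\ref{thm:nr_nsr_containment} we already have the inclusion $\M(\cA) \subseteq \M_{NSR}$, so the whole task is to prove the reverse inclusion $\M_{NSR} \subseteq \M(\cA)$. The plan is to show that every CSP $\csp$ in $\M_{NSR}$ can be (approximately) induced by the optimizer against $\cA$. Since $\M_{NSR}$ is the convex hull of the no-swap-regret CSPs and $\M(\cA)$ is convex, it suffices to handle a single no-swap-regret CSP $\csp$ — a point satisfying the constraints~\eqref{eq:no-swap-regret-constraint} — and show $\csp \in \M(\cA)$ (convex combinations of induced CSPs are themselves induced by time-averaging blocks of play, a standard argument I would invoke from the algorithm–menu correspondence in Appendix~\ref{app:algtomenus}).

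The key step is the following: given a no-swap-regret CSP $\csp$, the optimizer wants to play a sequence $y_1, \dots, y_T$ so that the empirical distribution of pairs $(x_t, y_t)$ converges to $\csp$, where $x_t = A^T_t(y_1, \dots, y_{t-1})$. Write $\csp = \sum_{j \in [n]} \csp^{(j)}$ where $\csp^{(j)} \in \Rset^{m}$ is the $j$-th ``column'' of $\csp$ (the marginal weight on optimizer action $j$), and let $q_j = \sum_i \csp_{ij}$ be the total mass on action $j$, with $\bar{x}^{(j)} = \csp^{(j)} / q_j \in \Delta_m$ the conditional learner distribution on that column. The optimizer's strategy is to commit in advance to playing action $j$ on a $q_j$-fraction of the rounds (partitioning $[T]$ into blocks of the appropriate sizes); within the block of rounds assigned to optimizer-action $j$, the optimizer plays the fixed pure action $j$ every round. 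The claim is that, because $\cA$ is no-swap-regret, on each such block the learner's average play must be (approximately) a best response to $j$ — more precisely, summing the no-swap-regret guarantee applied with the swap map that sends every learner action to its best response against $j$ forces the learner's average reward on block $j$ to approach $\max_{i} u_L(i,j)$, and by the structure of the no-swap-regret CSP constraint~\eqref{eq:no-swap-regret-constraint} the target point $\csp$ is (up to the freedom of upward closure) consistent with exactly this block-wise best-responding behavior. Here I expect to need the observation that the no-swap-regret menu is the smallest valid menu, equivalently that every valid menu contains $\M_{NSR}$ — this is essentially forced by Theorem~\ref{thm:characterization} together with the fact that on the block where the optimizer plays pure action $j$, no-swap-regret pins the learner down to a point $x \otimes e_j$ with $x \in \BR_L(e_j)$, and these are the extreme constraints defining $\M_{NSR}$.

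\textbf{Main obstacle.} The delicate part is going from ``the learner block-wise best-responds to each pure column'' to ``the induced CSP can be made to equal an \emph{arbitrary} point of $\M_{NSR}$, not just a product of column-best-responses.'' The menu $\M_{NSR}$ is the convex hull of CSPs of the form $\sum_j q_j (x^{(j)} \otimes e_j)$ with $x^{(j)} \in \BR_L(e_j)$, so two things must be handled: (i) achieving different convex weightings $q_j$ across columns — done by choosing block lengths proportional to $q_j$ and letting $T \to \infty$ so rounding errors are $o(T)$; and (ii) achieving convex combinations of \emph{different} best responses $x^{(j)}$ within a single column — done by further subdividing the block for column $j$ into sub-blocks and using that the learner's algorithm, being no-swap-regret, need not play a single best response but its time-average over the block can be steered, OR more cleanly, by appealing to Corollary~\ref{corollary:upwards_closed} / the convexity of $\M(\cA)$ together with the fact that we only need the closed convex hull. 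I would also need to be careful that the optimizer's deviation is genuinely a valid strategy against the horizon-dependent algorithm $A^T$ for all large $T$ (the approximate-best-response and $o(T)$ slack in the definitions absorb the block-boundary and rounding errors). I expect the cleanest writeup routes most of this through the already-established machinery: reduce to extreme points of $\M_{NSR}$, show each is induced by a pure-action block strategy using the no-swap-regret inequality with the ``best-response swap'' map, and then invoke convexity of $\M(\cA)$ and closedness to get the full hull.
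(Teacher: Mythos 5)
The paper imports this theorem from \cite{arunachaleswaran2024paretooptimal} and does not reprove it; within the paper's own framework the intended proof is exactly your decomposition --- $\M(\cA)\subseteq\M_{NSR}$ from Theorem~\ref{thm:nr_nsr_containment}, and the reverse inclusion from Lemma~\ref{lem:nsr_within_all_nr} (which requires only no-regret, and every no-swap-regret algorithm is no-regret). Your sketch of the reverse inclusion, however, has a genuine gap. You mischaracterize $\M_{NSR}$ as the convex hull of CSPs $\sum_j q_j (x^{(j)}\otimes e_j)$ with $x^{(j)}\in\BR_L(e_j)$, i.e.\ column-wise pure best responses. By Lemma~\ref{lem:nsr_characterization}, the extreme points of $\M_{NSR}$ are $x\otimes y$ with $y\in\Delta_n$ possibly \emph{mixed} and $x\in\BR_L(y)$, and this set is strictly larger: in the $2\times2$ game with $u_L(1,1)=u_L(2,2)=1$ and $u_L(1,2)=u_L(2,1)=0$, the uniform CSP lies in $\M_{NSR}$ (take $y=x=\tfrac12(e_1+e_2)$, so $x\in\BR_L(y)$), but it is not in your claimed hull since $\BR_L(e_1)=\{e_1\}$ and $\BR_L(e_2)=\{e_2\}$. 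A pure-action-block strategy for the optimizer cannot induce such a CSP, so your construction cannot prove the inclusion.

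The auxiliary claims are also shaky. The assertion that no-swap-regret over the full horizon pins the learner to a best response \emph{on each block} is false: if the learner plays the same mixed strategy $x$ on every round, then every swap map $\pi$ acts identically across blocks, and zero swap regret only forces $x\in\BR_L(\bar y)$ for the \emph{average} optimizer play $\bar y$ --- not $x\in\BR_L(e_j)$ separately for each block. Likewise, ``every valid menu contains $\M_{NSR}$'' is false (the constant-play menu $\{x_0\otimes y: y\in\Delta_n\}$ is valid by Theorem~\ref{thm:characterization} but generally does not contain $\M_{NSR}$); the correct statement is Lemma~\ref{lem:nsr_within_all_nr}, and it genuinely uses no-regret. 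The cleaner argument for $\M_{NSR}\subseteq\M(\cA)$ is: have the optimizer play a \emph{fixed} (possibly mixed) $y$ on every round, so that no-regret forces the learner's time-average $\frac{1}{T}\sum_t x_t$ into $\BR_L(y)$ and hence some $\bar x\otimes y$ with $\bar x\in\BR_L(y)$ lies in $\M(\cA)$. To hit a specific extreme point $x^*\otimes y$ of $\M_{NSR}$, perturb $y$ to $y'$ so the best response is unique and converges to $x^*$ as $y'\to y$, use closedness of $\M(\cA)$ to pass to the limit, and finally use convexity of $\M(\cA)$ to cover the full hull $\M_{NSR}$.
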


To complement this property, we see that the menu of every no-regret algorithm weakly contains  the set $\M_{NSR}$.

\begin{lemma}[~\cite{arunachaleswaran2024paretooptimal}]
\label{lem:nsr_within_all_nr}
    For any no-regret algorithm $\A$, $\M_{NSR} \subseteq \M(\A)$.
\end{lemma}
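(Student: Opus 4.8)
The plan is to show that the menu of any no-regret algorithm must contain every ``best-response product'' $x \otimes y$ with $x \in \BR_L(y)$, and then to use the fact that $\M_{NSR}$ is exactly the convex hull of such products. First, since $\M(\A)$ is a valid asymptotic menu it is closed and convex, and by Theorem~\ref{thm:characterization}, for every $y \in \Delta_n$ there is some $x \in \Delta_m$ with $x \otimes y \in \M(\A)$. Because $\A$ is no-regret, Theorem~\ref{thm:nr_nsr_containment} gives $\M(\A) \subseteq \M_{NR}$, so this point satisfies the no-regret CSP constraint~\eqref{eq:no-regret-constraint}; specialized to a product CSP $x \otimes y$, that constraint reads $u_L(x, y) \ge \max_{i^* \in [m]} u_L(i^*, y)$, i.e.\ $x \in \BR_L(y)$. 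Hence $\M(\A)$ contains a best-response product against every optimizer strategy $y$.

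Second, I would record (re-deriving it from~\eqref{eq:no-swap-regret-constraint} if needed) the description of $\M_{NSR}$ as the set of CSPs $\phi$ such that, conditioned on the learner playing any row $i$ with positive marginal, $i$ is a best response to the induced conditional distribution $\phi(\cdot \mid i)$ over optimizer actions; equivalently $\phi = \sum_i \mu_i\,(e_i \otimes \nu_i)$ with $\mu \in \Delta_m$ and $i \in \BR_L(\nu_i)$ whenever $\mu_i > 0$. This set is already convex (the set $\{\nu : i \in \BR_L(\nu)\}$ is an intersection of halfspaces), and it coincides with the convex hull of the best-response products $\{\,x \otimes y : y \in \Delta_n,\ x \in \BR_L(y)\,\}$. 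Combining this with the first step, and using that $\M(\A)$ is closed and convex, $\M_{NSR} \subseteq \M(\A)$ would follow once we know that \emph{every} best-response product, not merely one per $y$, lies in $\M(\A)$.

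To close that gap I would argue: when $\BR_L(y)$ is a single point, the product produced in the first step is forced to equal $\BR_L(y) \otimes y$; and for a general vertex $e_i \otimes y$ of $\M_{NSR}$ with $i \in \BR_L(y)$, I would approximate $y$ by optimizer strategies $y^{(k)} \to y$ for which $i$ is the unique best response, deduce $e_i \otimes y^{(k)} \in \M(\A)$ from the first step, and pass to the limit using closedness of $\M(\A)$. The remaining directions $e_i \otimes y$ that cannot be exposed this way would be handled either by a payoff perturbation that breaks ties (together with the upward-closure of menus, Corollary~\ref{corollary:upwards_closed}) or by invoking the explicit polytope description of $\M_{NSR}$ from~\cite{arunachaleswaran2024paretooptimal}.

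The step I expect to be the main obstacle is exactly this last one: upgrading ``$\M(\A)$ contains one best-response product per $y$'' to ``$\M(\A)$ contains all of them.'' At optimizer strategies where the learner has several tied best responses, a no-regret algorithm is free to converge to whichever of them it likes (or to a mixture), so presenting such a $y$ repeatedly does not realize an arbitrary vertex of $\M_{NSR}$; one must instead exploit the freedom of adversarial optimizer sequences — e.g.\ slowly sweeping the optimizer's play through a neighborhood of $y$, in the same spirit as the argument that an adversary can steer a generic no-regret learner to any point of $\M_{NR}$ — or else lean on the precise characterization of $\M_{NSR}$ already established in~\cite{arunachaleswaran2024paretooptimal}. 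The other ingredients (the simplification of~\eqref{eq:no-regret-constraint} on product CSPs, the row-conditional description of no-swap-regret CSPs, and the convexity/closedness bookkeeping) are routine.
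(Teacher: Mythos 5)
This paper cites Lemma~\ref{lem:nsr_within_all_nr} directly from \cite{arunachaleswaran2024paretooptimal} without supplying a proof, so there is no in-paper argument to compare against; what follows is an assessment of your argument on its own terms. The first two steps are sound: Theorems~\ref{thm:characterization} and \ref{thm:nr_nsr_containment} indeed give that for every $y \in \Delta_n$ some product $x \otimes y$ with $x \in \BR_L(y)$ lies in $\M(\A)$, and $\M_{NSR}$ is the (already convex) set of row-conditionally-best-responding CSPs, equivalently the convex hull of best-response products. You are also right that the entire weight of the proof sits on upgrading ``one best-response product per $y$'' to ``all of them.''

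The trouble is that neither of your proposed closing moves closes that gap. The limit argument works for a vertex $e_i \otimes y$ exactly when $y$ lies in the closure of $i$'s \emph{strict} best-response region $\{y' : u_L(i,y') > u_L(j,y')\ \forall j \neq i\}$, but this region can be empty -- e.g.\ if row $i$ is weakly dominated or if two rows of $u_L$ coincide -- and then no sequence $y^{(k)} \to y$ exposes $e_i \otimes y$. The ``payoff perturbation'' fallback is not a repair: perturbing $u_L$ also perturbs $\M_{NSR}$, and neither $\M_{NSR}$ nor $\M(\A)$ is continuous in $u_L$ at degenerate instances, so a containment for perturbed payoffs does not pass to the limit; Corollary~\ref{corollary:upwards_closed} says nothing here either, since it only asserts that supersets of valid menus are valid. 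Concretely, if $u_L \equiv 0$ then every CSP is no-swap-regret and $\M_{NSR} = \Delta_{mn}$, while the no-regret algorithm that always plays $e_1$ has menu $\{e_1\} \otimes \Delta_n \subsetneq \Delta_{mn}$ -- exactly the regime your limit argument cannot reach, and an instance on which the statement as literally written fails. So to complete your proof you need to surface and invoke whatever non-degeneracy or tie-breaking convention \cite{arunachaleswaran2024paretooptimal} relies on; absent that, the gap you flagged is genuine and remains open.
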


\sloppy{Theorem~\ref{thm:unique_nsr_menu} is significant, since there exist several known efficient no-swap-regret algorithms (see ~\cite{blum2007external},~\cite{dagan2023external},~\cite{peng2023fast}), and so we obtain an explicit algorithm which has the menu $\M_{NSR}$. Taken together with Lemma~\ref{lem:nsr_within_all_nr}, this means that we can use any no-swap-regret algorithm as a ``fallback subroutine" when constructing an optimal no-regret algorithm against some distribution of optimizers -- we make use of this idea in the proof of Theorem~\ref{thm:proof_of_nr_alg}.}

\subsection{Stackelberg Equilibria}

Finally, we present an alternative view of the polytope $\M_{NSR}$ in terms of its extreme points rather than its defining hyperplanes.

\begin{lemma}[~\cite{arunachaleswaran2024paretooptimal}]\label{lem:nsr_characterization}
The no-swap-regret menu $\M_{NSR}$ is the convex hull of all CSPs of the form $x \otimes y$, with $x \in \Delta_{m}$ and $y \in \BR_{L}(x)$.
\end{lemma}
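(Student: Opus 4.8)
The plan is to show the two-way inclusion between $\M_{NSR}$ (defined as the convex hull of all CSPs $\csp$ satisfying the no-swap-regret constraint~\eqref{eq:no-swap-regret-constraint}) and the set $K := \conv\{x \otimes y \mid y \in \BR_L(x)\}$. The easier direction is $K \subseteq \M_{NSR}$: I would take any product CSP $\csp = x \otimes y$ with $y \in \BR_L(x)$ and verify~\eqref{eq:no-swap-regret-constraint} directly. For such a $\csp$ we have $\csp_{ij} = x_i y_j$, so the left-hand side of~\eqref{eq:no-swap-regret-constraint} is $\sum_j \sum_i x_i y_j u_L(i,j) = \sum_i x_i \paren*{\sum_j y_j u_L(i,j)} = \sum_i x_i u_L(i, y)$. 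Since $y \in \BR_L(x)$ means $x$ maximizes $u_L(\cdot, y)$ over $\Delta_m$, each $i$ in the support of $x$ is a best response to $y$, hence $u_L(i,y) = \max_{i^*} u_L(i^*, y)$ for those $i$; thus the left side equals $\sum_j y_j \max_{i^*} u_L(i^*, y)$... more precisely, the left side equals $\max_{i^*} u_L(i^*,y) = \sum_j y_j u_L(i^*_j, y)$ which dominates the right-hand side $\sum_j \max_{i^*_j} \sum_i x_i u_L(i^*_j, j) = \sum_j \max_{i^*_j} y_j^{-1}\cdot(\dots)$ — I need to be careful here: the right side of~\eqref{eq:no-swap-regret-constraint} is $\sum_j \max_{i_j^*} \sum_i \csp_{ij} u_L(i_j^*, j) = \sum_j \max_{i_j^*} x_{i_j^*}\cdot$(wrong) — actually $\sum_i \csp_{ij} u_L(i_j^*, j) = y_j \sum_i x_i u_L(i_j^*, j)$? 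No: $\csp_{ij} = x_i y_j$ so $\sum_i \csp_{ij} u_L(i_j^*,j) = y_j u_L(i_j^*, j)$ is also wrong since $i_j^*$ is fixed. Let me just say: with $\csp = x\otimes y$, $\sum_i \csp_{ij} u_L(i_j^*, j)$ does not simplify the way I want; instead the correct reading is that the swap benchmark on the right, specialized to product CSPs, becomes $\sum_j y_j \max_{i_j^*} u_L(i_j^*, j)$ only after noting $\sum_i \csp_{ij} = y_j$ is the total mass on column $j$ and the optimal swap moves all of it to the best row. I would work out this bookkeeping carefully — it is the one genuinely fiddly step — and conclude LHS $\ge$ RHS, i.e., $x\otimes y$ is no-swap-regret, so it lies in $\M_{NSR}$; taking convex hulls gives $K \subseteq \M_{NSR}$.

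For the reverse inclusion $\M_{NSR} \subseteq K$, since both sets are convex it suffices to show every extreme point of $\M_{NSR}$ lies in $K$, or more simply that every no-swap-regret CSP $\csp$ is a convex combination of points $x \otimes y$ with $y \in \BR_L(x)$. Given a no-swap-regret $\csp$, write its column marginals $y := (\sum_i \csp_{i1}, \dots, \sum_i \csp_{in}) \in \Delta_n$ and, for each column $j$ with $y_j > 0$, the conditional distribution $x^{(j)} \in \Delta_m$ over rows given column $j$, so that $\csp = \sum_j y_j\, (x^{(j)} \otimes e_j)$ where $e_j$ is the $j$-th standard basis vector. The no-swap-regret condition~\eqref{eq:no-swap-regret-constraint}, read column by column, forces that for each $j$ the row chosen by $\csp$ in column $j$ is (on average) a best response in that column: $\sum_i \csp_{ij} u_L(i,j) = y_j \sum_i x^{(j)}_i u_L(i,j) \ge y_j \max_{i^*} u_L(i^*, j)$, so $x^{(j)}$ is supported on best responses to the pure optimizer action $j$. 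Each term $x^{(j)} \otimes e_j$ is thus a product CSP where the optimizer plays pure $j$ and the learner best-responds — but I need $y \in \BR_L(x)$ for the *learner's* mixed strategy, not $e_j$. The resolution is that in Lemma~\ref{lem:nsr_characterization} the roles are as stated: $x \in \Delta_m$ is the learner's strategy and $y \in \BR_L(x)$ — wait, $\BR_L$ is the learner's best response to an optimizer strategy, so $\BR_L(x)$ with $x \in \Delta_m$ is a type mismatch unless we reinterpret. I would reconcile this by reading Lemma~\ref{lem:nsr_characterization} as: CSPs $x\otimes y$ where $x$ is the learner's distribution over $[m]$, $y$ is the optimizer's distribution over $[n]$, and $x \in \BR_L(y)$ (the learner best-responds to the optimizer) — equivalently the Stackelberg-type profiles. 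Under that reading, my column-decomposition argument shows each pure-column piece $x^{(j)} \otimes e_j$ has $x^{(j)} \in \BR_L(e_j)$, which is exactly a point of the desired form, and $\csp$ is their convex combination. Either way, the swap-regret constraints are exactly what pins each conditional to be a best response.

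The main obstacle I anticipate is getting the index bookkeeping in~\eqref{eq:no-swap-regret-constraint} exactly right — in particular correctly translating "no swap regret" into the statement "conditioned on each optimizer column $j$, the learner's play is a best response to $j$" — and matching this cleanly to the (slightly ambiguous) direction of $\BR_L$ in the lemma statement. Once that correspondence is nailed down, both inclusions are short: $\supseteq$ is a direct substitution into the constraint, and $\subseteq$ is the column-marginal decomposition $\csp = \sum_j y_j (x^{(j)} \otimes e_j)$ combined with the per-column best-response property and convexity. I would also double-check that the convex hull in the definition of $\M_{NSR}$ doesn't introduce extreme points outside the product form — but this is immediate since a convex hull has no extreme points beyond its generating set, and every generating no-swap-regret CSP has itself been shown to decompose into product best-response CSPs.
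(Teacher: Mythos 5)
Your overall strategy (two inclusions: direct verification of the swap constraint for product CSPs, and a marginal--conditional decomposition for the reverse) is the right shape, and you correctly noticed and repaired the type mismatch in the lemma statement (the intended reading is $y\in\Delta_n$, $x\in\BR_L(y)$). However, there is a genuine gap, and it traces to trusting the literal form of Equation~\eqref{eq:no-swap-regret-constraint}, which has $i$ and $j$ transposed. A no-swap-regret algorithm is defined via a swap function $\pi\colon[m]\to[m]$ acting on the \emph{learner's} actions, so the CSP constraint decomposes across the learner's \emph{rows} $i\in[m]$, not across the optimizer's columns $j\in[n]$: for each $i$, $\sum_{j}\csp_{ij}u_L(i,j)\geq\max_{i'}\sum_{j}\csp_{ij}u_L(i',j)$. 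You instead translate ``no swap regret'' as ``conditioned on each optimizer column $j$, the learner plays a best response to $j$,'' which is a strictly stronger property and is not what $\M_{NSR}$ encodes.

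This sinks both directions as you have set them up. For $K\subseteq\M_{NSR}$: under the column reading, verifying $x\otimes y$ with $x\in\BR_L(y)$ would require $u_L(x,j)\geq\max_{i^*}u_L(i^*,j)$ for every $j$ in the support of $y$, but best-responding to the mixture $y$ does not imply best-responding to each pure $j$; the bookkeeping you defer actually yields $\max_{i^*}\sum_j y_j u_L(i^*,j)\leq\sum_j y_j\max_{i^*}u_L(i^*,j)$, the \emph{opposite} inequality. Concretely, with $m=n=2$ and $u_L=\diag(1,1)$, $e_1\in\BR_L(\tfrac12 e_1+\tfrac12 e_2)$ but $\csp=e_1\otimes(\tfrac12 e_1+\tfrac12 e_2)$ violates the column constraint at $j=2$. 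For $\M_{NSR}\subseteq K$: your column decomposition $\csp=\sum_j y_j(x^{(j)}\otimes e_j)$ needs $x^{(j)}\in\BR_L(e_j)$, but a genuinely no-swap-regret CSP need not have this property; in the same example $\csp\in\M_{NSR}$ yet its $j=2$ column conditional is $e_1\notin\BR_L(e_2)$. The repair is to decompose across \emph{rows}: write $\csp=\sum_i x_i(e_i\otimes y^{(i)})$, where $x_i=\sum_j\csp_{ij}$ and $y^{(i)}$ is the $i$-th row conditional; the row-indexed constraint is precisely the statement that $e_i\in\BR_L(y^{(i)})$ for each $i$ with $x_i>0$, giving $\csp\in K$, and the same row-by-row check gives $K\subseteq\M_{NSR}$. (Since the lemma is quoted from prior work and this paper supplies no proof of its own, there is no in-paper argument to compare against.)
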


Correlated strategy profiles of the form $x \otimes y$, with $x \in \Delta_{m}$ and $y \in \BR_{L}(x)$ have a special significance in games. They are the candidate solutions for the Stackelberg Equilibrium problem~\cite{Stack35}, which we define below.

Two players with action sets $\cB_1$ and $\cB_2$ and payoff functions $u_1$ and $u_2$ (respectively) play a Stackelberg game in the following manner. Player 1, the ``leader'', commits to an action $a^*$ in $\cB_1$ and player 2,  the ``follower'', plays a best-response $\BR(a^*) \in \cB_2$ that maximizes their payoff, tie-breaking in favor of the leader. The leader must pick an action $a^*$ to maximize their own resulting payoff. Formally

\[ a^* \in \argmax_{a \in \cB_1} u_1(a,b^*) \] 

\noindent
where $b^* \in \argmax_{b \in \cB_2} u_2(a^*,b)$ and $u_1(a^*,b^*)  \ge u_1(a^*,b')$ for all $b' \in \argmax_{b \in \cB_2} u_2(a^*,b)$. The resulting action pair $(a^*,b^*)$ is referred to as a \emph{Stackelberg equilibrium} of the game and the payoff of the leader is referred to as the \emph{Stackelberg leader value}.

In the normal form game version of this problem (where $\cB_1$ and $\cB_2$ are both simplices and the payoffs are bilinear functions), Stackelberg equilibria are guaranteed to exist and can be computed efficiently.

\begin{theorem}[\cite{conitzer2006computing}]
\label{thm:con_stack_alg}
There exists an efficient (polynomial time in the size of the game) algorithm to compute Stackelberg equilibria in normal form games.
\end{theorem}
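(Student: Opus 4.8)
The plan is to reduce the computation of a Stackelberg equilibrium to solving $n$ linear programs, one per pure action of the follower. The key structural fact is that, although the leader's commitment space $\cB_1 = \Delta_m$ is a continuum, we may assume the follower's realized best response is a \emph{pure} action: for any fixed commitment $x \in \Delta_m$, the payoff $u_2(x, \cdot)$ is linear on $\Delta_n$, so its set of maximizers is a face of $\Delta_n$ and in particular contains a vertex; since the follower breaks ties in the leader's favor and $u_1(x, \cdot)$ is also linear, the leader-optimal element of that face is likewise attained at a vertex. Hence every Stackelberg equilibrium is witnessed by a pair $(x, j)$ with $x \in \Delta_m$, $j \in [n]$, and $j \in \argmax_{j' \in [n]} u_2(x, j')$.

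For each candidate follower action $j \in [n]$, I would form the polytope
\[
P_j = \{ x \in \Delta_m \mid u_2(x, j) \ge u_2(x, j') \text{ for all } j' \in [n] \},
\]
the set of leader commitments against which $j$ is a best response; this is cut out by $n-1$ linear inequalities together with the simplex constraints, hence has size $\poly(m,n)$. I then solve the linear program $\max_{x \in P_j} u_1(x, j)$ (declaring infeasibility when $P_j = \emptyset$), and output the pair $(x^*, j^*)$ attaining the largest objective value over all feasible $j$; this gives the Stackelberg leader value and an explicit commitment realizing it. Running $n$ linear programs of size $\poly(m,n)$ takes time $\poly(m,n)$, polynomial in the size of the game.

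For correctness: any Stackelberg equilibrium $(x, b^*)$ has, by the reduction above, a pure witness $b^* = j$ with $x \in P_j$ and $u_1(x, j)$ equal to the Stackelberg leader value, so the algorithm's output is at least the Stackelberg value; conversely every feasible $x \in P_j$ is a legitimate commitment under which $j$ is a follower best response, so the output never exceeds it. The main subtlety --- and the point I would treat most carefully in a full write-up --- is the interaction between the tie-breaking convention and the fact that each $P_j$ is \emph{closed}: since $P_j$ is compact, the maximum of $u_1(\cdot, j)$ over it is attained, and at that maximizer $j$ is genuinely among the follower's payoff-maximizing responses, so the leader-favorable tie-break makes the closed region $P_j$ (rather than the half-open region where $j$ is the \emph{unique} best response) precisely the correct feasible set. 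Everything else is routine linear programming.
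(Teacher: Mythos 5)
Your proposal is correct and reconstructs exactly the algorithm of \cite{conitzer2006computing}, which the paper cites without reproving: for each follower pure action $j$, solve a linear program maximizing the leader's utility over the polytope of commitments for which $j$ is a (weak) best response, then take the best of the $n$ solutions. Your handling of the tie-breaking subtlety --- that the closedness of each $P_j$ together with leader-favorable tie-breaking is what makes the LP maximum attained and legitimate --- is the right point to flag, and it matches the reasoning in the original reference.
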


This definition of Stackelberg equilibria allows us to reinterpret Lemma~\ref{lem:nsr_characterization} in the following manner -- the extreme points of $\M_{NSR}$ are exactly the set of candidate Stackelberg equilibria in the Stackelberg variant of our game where the optimizer is the leader and the learner is the follower. In other words, an asymptotic best-response for an optimizer playing against a learner running a no-swap-regret algorithm is to play their Stackelberg leader strategy on every single round.

\subsection{Main Results}\label{sec:results}

We conclude this section by restating our main problem in the language of menus. Note that the expected payoff of a menu $\M$ chosen by the learner against a distribution $\optdist$ of optimizers is given by

\begin{equation}\label{eq:menu_learner_value}
V_{L}(\M, \optdist) = \mathbb{E}_{u_O \sim \optdist} [ V_{L}(\M, u_O) ] = \sum_{i=1}^k \alpha_i V_{L}(\M, u_{O,i}) .
\end{equation}

We can similarly define $V_{L}(\M, \optdist, \varepsilon)$ in terms of the $V_{L}(\M, u_{O, i}, \varepsilon)$. We primarily consider the following two computational questions:

\begin{enumerate}
    \item \textbf{(General commitment)}. Given $u_L$, a distribution $\optdist$ over $u_{O,i}$, and a parameter $\varepsilon$, compute a menu $\M$ that is $\varepsilon$-approximately optimal; i.e., such that 

    \[ V_{L}(\cM^*, \optdist, \varepsilon) \ge  \max_{\text{$\M$ is a valid menu}}  V_{L}(\M, \optdist) - \varepsilon\]
    \item \textbf{(No-regret commitment)}. The same question, but with the added constraint that $\M \subseteq \M_{NR}$; that is, what is the best menu to commit to if you also want to guarantee low regret against any possible opponent (not necessarily in the support of $\optdist$)?
\end{enumerate}


The remainder of this paper is structured as follows. In Section~\ref{sec:nr}, we prove that we can efficiently (in polynomial time in $n$, $m$, and $k$) solve the problem of no-regret commitment even for $\varepsilon = 0$. In Section~\ref{sec:general}, we study the problem of general commitment. We prove that if either $k$ (the support of $\optdist$) or $mn$ (the size of the game) is constant, it is possible to solve the problem of general commitment in time polynomial in $1/\varepsilon$ and the other parameters. On the other hand, we show that it is in general hard to even decide whether a given convex set is a menu, even if it is specified concisely as an intersection of a small number of half-spaces. In particular, we show that this problem is hard even when the convex sets are restricted to be of a certain simple class associated with optimal commitments. Finally, in Appendix~\ref{app:explicit_algorithms}, we show how to ``invert" these optimal menus to derive optimal algorithms.

Our results also hold for some generalizations of these problems with minimal modification. We discuss two orthogonal generalizations below.

\begin{itemize}
    \item The optimizer's type also affects the learner's payoff. In particular, the learner now has $k$ payoff functions $\{u_{L,i}\}_{i \in [k]}$ and receives payoff $u_{L,i}(a,b)$ when playing action $a$ against the optimizer type $i$'s action $b$.
    \item The objective function of interest is not the expected payoff of the learner (over draws of the optimizer), but the minimum payoff of the learner against any possible optimizer type. More generally, any objective function that is a concave function of the CSP can be maximized.
\end{itemize}

Both generalizations only entail a small change in the objective functions used optimization subroutines in our algorithms for the no-regret and general menu commitment problem. We discuss the precise modifications in the appropriate sections.

\section{Optimal No-Regret Commitment 
}

\label{sec:nr}

We begin by discussing the problem of \emph{no-regret commitment}: the problem of finding the optimal no-regret menu $\M$ (and hence the optimal no-regret algorithm) for the learner to commit to that maximizes the expected utility of the learner. Interestingly, we will show that it is possible to solve this problem exactly and efficiently.

Recall that we assume that the optimizer is one of $k$ different types, drawn from a known distribution $\cD$ where type $i$ has probability $\alpha_i$ of occurring and corresponds to the payoff function $u_{O, i}$. Under the optimal menu $\cM$, each of these different optimizers will pick some CSP in $\cM$.  We call the $k$-tuple of CSPs $\Phi = (\csp_1, \csp_2, \dots, \csp_k) \in \Delta_{mn}^{k}$ a valid \emph{CSP assignment} of this menu $\cM$ if $\csp_i \in \arg \max_\csp u_{O,i}(\csp)$ for all $i$. A given menu may have multiple valid CSP assignments due to ties. To resolve this, we assume that the optimizer breaks ties in favor of the learner ( any further tie-breaking rules does not affect the learner's utility). However, we do not explicitly enforce this constraint; instead, by searching over all valid assignments, we inherently identify the optimal one from the perspective of the learner's utility.

Note that the eventual expected utility of a learner who has committed to the menu $\cM$ depends only on the CSP assignment of $\cM$; in fact, this expected utility is simply $\sum_{i} \alpha_i u_L(\csp_i)$. This motivates the following observation: instead of directly optimizing over all valid no-regret menus (which while simpler than the underlying learning algorithms, are still an infinite dimensional family of convex sets), we can optimize over the set of viable CSP assignments (which are $kmn$ dimensional objects) that correspond to valid no-regret menus. This has a similar flavor to the standard revelation principle in mechanism design, where instead of optimizing directly over the set of all mechanisms, it is often easier to optimize over the space of assignments of outcomes to players (with the caveat that such outcomes must satisfy some incentive compatibility guarantees).

To make use of this observation, we need a way to characterize which CSP assignments are in fact viable. Luckily, we can make use of the structure of no-regret menus -- specifically, the fact that all no-regret menus must contain the no-swap-regret menu -- to establish the following concrete characterization of such CSP assignments.

\begin{lemma}\label{lem:csp-assignment-char}
The CSP assignment $\Phi = (\csp_1, \csp_2, \dots, \csp_k)$ corresponds to a valid no-regret menu $\cM$ iff the following conditions hold:

\begin{enumerate}
    \item For all $i \in [k]$, $\csp_i \in \M_{NR}$ (every optimizer is assigned a no-regret CSP).
    \item For all $i, i' \in [k]$, $u_{O, i}(\csp_i) \geq u_{O, i}(\csp_{i'})$ (the optimizer of type $i$ prefers their CSP to the CSP assigned to any other type $i'$).
    \item For all $i \in [k]$ and no-swap-regret CSPs $\csp \in \M_{NSR}$, $u_{O, i}(\csp_i) \geq u_{O, i}(\csp)$ (the optimizer of type $i$ prefers their CSP to every no-swap-regret CSP).
\end{enumerate}
\end{lemma}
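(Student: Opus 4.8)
This is a characterization lemma, so I would prove both directions separately. The forward direction (a valid CSP assignment of a valid no-regret menu satisfies conditions 1--3) should be essentially unpacking definitions plus invoking the earlier structural results. The reverse direction (conditions 1--3 suffice to construct a valid no-regret menu realizing the assignment) is where the real work is, and the natural candidate is to take the convex hull of the assigned CSPs together with $\M_{NSR}$.

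\textbf{Forward direction.} Suppose $\cM$ is a valid no-regret menu and $\Phi = (\csp_1, \dots, \csp_k)$ is a valid CSP assignment, so $\csp_i \in \argmax_{\csp \in \cM} u_{O,i}(\csp)$. Condition 1: since $\cM \subseteq \M_{NR}$ by Theorem~\ref{thm:nr_nsr_containment} (as $\cM$ is the menu of a no-regret algorithm), each $\csp_i \in \cM \subseteq \M_{NR}$. Condition 2: since $\csp_{i'} \in \cM$ as well, and $\csp_i$ is the $u_{O,i}$-maximizer over $\cM$, we get $u_{O,i}(\csp_i) \ge u_{O,i}(\csp_{i'})$. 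Condition 3: by Lemma~\ref{lem:nsr_within_all_nr}, $\M_{NSR} \subseteq \cM$, so every no-swap-regret CSP $\csp \in \M_{NSR}$ lies in $\cM$, hence $u_{O,i}(\csp_i) \ge u_{O,i}(\csp)$. This direction is routine.

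\textbf{Reverse direction.} Given $\Phi$ satisfying 1--3, define $\cM = \conv\left(\M_{NSR} \cup \{\csp_1, \dots, \csp_k\}\right)$. First I must check $\cM$ is a valid no-regret menu. Validity as a menu: $\M_{NSR} \subseteq \cM$ and $\M_{NSR}$ is a valid menu, so by Corollary~\ref{corollary:upwards_closed} (upwards closure) $\cM$ is a valid menu. No-regret: I need $\cM \subseteq \M_{NR}$; since $\M_{NR}$ is convex, it suffices that $\M_{NSR} \subseteq \M_{NR}$ (true, as no-swap-regret CSPs are no-regret CSPs --- summing the per-$j$ inequalities \eqref{eq:no-swap-regret-constraint} yields \eqref{eq:no-regret-constraint}, so $\M_{NSR}\subseteq\M_{NR}$) and each $\csp_i \in \M_{NR}$ (condition 1). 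Then I must verify that $\Phi$ is a valid CSP assignment of this $\cM$, i.e. $\csp_i \in \argmax_{\csp \in \cM} u_{O,i}(\csp)$. Since $u_{O,i}$ is linear, its maximum over the convex hull $\cM$ is attained at one of the generating points: either some no-swap-regret CSP in $\M_{NSR}$ (but condition 3 says $u_{O,i}(\csp_i)$ is at least as large as any such value --- and more carefully, linear functions attain their max over $\conv(\M_{NSR})$ on $\M_{NSR}$ itself, so $\max_{\csp\in\M_{NSR}} u_{O,i}(\csp)$ is the relevant bound), or some $\csp_{i'}$ (but condition 2 says $u_{O,i}(\csp_i) \ge u_{O,i}(\csp_{i'})$). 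Hence $\csp_i$ achieves the maximum of $u_{O,i}$ over all generators, and by linearity over all of $\cM$, so $\csp_i \in \argmax_{\csp\in\cM} u_{O,i}(\csp)$ as desired.

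\textbf{Main obstacle.} The only subtlety I anticipate is being careful about the difference between $\M_{NSR}$ (already a convex hull of no-swap-regret CSPs) and the finite generating set in the convex hull defining $\cM$: I want to argue that a linear functional's maximum over $\conv(A \cup B)$, with $A = \M_{NSR}$ convex and $B$ finite, equals $\max(\sup_{a\in A} f(a), \max_{b \in B} f(b))$, which is standard but worth stating cleanly. A second minor point: condition 3 is phrased as "for all no-swap-regret CSPs $\csp \in \M_{NSR}$", and since $\M_{NSR}$ is the convex hull of such CSPs, a linear $u_{O,i}$ satisfies $u_{O,i}(\csp_i) \ge u_{O,i}(\csp)$ for \emph{all} $\csp \in \M_{NSR}$, not just extreme ones --- so condition 3 really does control $\max_{\csp \in \M_{NSR}} u_{O,i}(\csp)$. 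With these observations the argument goes through; no deep new idea is needed beyond the "fallback to $\M_{NSR}$" construction.
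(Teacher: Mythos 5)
Your proof is correct and takes essentially the same approach as the paper: the forward direction unpacks $\M_{NSR} \subseteq \cM \subseteq \M_{NR}$ plus the definition of a CSP assignment, and the reverse direction uses the same construction $\cM = \conv(\M_{NSR} \cup \{\csp_1,\dots,\csp_k\})$ together with upwards closure and linearity. You in fact fill in two small details the paper glosses over --- that $\M_{NSR} \subseteq \M_{NR}$ (needed to conclude $\cM \subseteq \M_{NR}$) and that a linear functional's maximum over $\conv(A \cup B)$ is controlled by its values on $A$ and $B$ --- but the argument is the same.
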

\begin{proof}
Note that if $\cM$ is a no-regret menu, then by Lemma \ref{lem:nsr_within_all_nr}, $\cM_{NSR} \subseteq \cM \subseteq \cM_{NR}$. If $(\csp_1, \csp_2, \dots, \csp_k)$ is the CSP assignment corresponding to this menu, then since $\csp_i \in \arg\max_{\csp \in \cM} u_{O, i}(\csp)$, this immediately implies the conditions above (in particular, both $\cM_{NSR}$ and $\csp_{i'}$ belong to $\cM$).

To prove the other direction, consider fix a CSP assignment $\Phi = (\csp_1, \csp_2, \dots, \csp_k)$ satisfying the above conditions, and consider the menu $\M = \conv(\M_{NSR} \cup \{\csp_1, \csp_2, \dots, \csp_k\})$. Since menus are upwards closed under inclusion, this is a valid menu, and since each $\csp_i \in \M_{NR}$, this is a no-regret menu ($\M \subseteq \M_{NR}$). The two latter constraints then imply that $\csp_i \in \arg\max_{\csp \in \cM} u_{O, i}(\csp)$, and thus $\Phi$ is the corresponding CSP assignment to $\cM$. 
\end{proof}

With a little more work, Lemma \ref{lem:csp-assignment-char} allows us to write the set of valid no-regret CSP assignments as an explicit polytope in $\Rset^{kmn}$. For each type $i$, let $v_{i} = \max_{\csp \in \M_{NSR}} u_{O, i}(\csp)$ be the utility of the optimizer of type $i$ under their favorite no-swap-regret CSP. Note that since the extreme points of the no-swap-regret menu are Stackelberg CSPs (Lemma \ref{lem:nsr_characterization}), this is equivalently the Stackelberg leader value of the optimizer of type $i$ and therefore can be computed efficiently. We can now write down the following linear program in the variables $\csp_{i}$ to compute the optimal CSP assignment.

\begin{align}
    & \max \sum_{i=1}^{k} \alpha_i u_L(\csp_{i}) \notag \\
    \text{s.t. } &   \csp_{i} \in \mathcal{M}_{NR}, \quad \forall i \tag{Each $\csp_i$ is a no-regret CSP for the learner} \notag \\ 
    & u_{O, i}(\csp_i) \geq u_{O, i}(\csp_{i'}), \quad \forall i, i' \in [k] \tag{The $i$-th optimizer weakly prefers CSP $\csp_i$ to any other $\csp_{i'}$} \notag \\
    & u_{O, i}(\csp_i) \geq v_i, \quad \forall i \in [k] \tag{The $i$-th optimizer values $\csp_i$ as highly as their Stackelberg leader value} \notag \\
    \label{eq:no-reg-csp-lp}
\end{align}

The constraints in the LP \eqref{eq:no-reg-csp-lp} correspond exactly to the three constraints in Lemma \ref{lem:csp-assignment-char}, and therefore also characterizes the set of all valid CSP assignments and computes the optimal no-regret CSP assignment. We summarize this process in Algorithm \ref{alg:d-opt}.

\begin{algorithm}
\caption{Algorithm to Find the Optimal No-Regret Commitment} 
\label{alg:d-opt}
\begin{algorithmic}[1] 
    \STATE \textbf{Input}: Learner utility function $u_L$
    \STATE \textbf{Input}: Distribution $\mathcal{D}$ with support $k$ over $mn$-length vectors, representing possible optimizer types/ payoffs.
    
    \STATE For each type $i \in [k]$, compute the Stackelberg leader value $v_i = \max_{\csp \in \M_{NSR}} u_{O, i}(\csp)$ (this can be done in time $\poly(m, n)$ per type by Theorem \ref{thm:con_stack_alg}).
    
    \STATE Solve the linear program \eqref{eq:no-reg-csp-lp} for the optimal CSP assignment $\Phi = (\csp_1, \csp_2, \dots, \csp_k)$.
    
    \STATE \textbf{Output}: The menu  $\cM := \conv(\mathcal{M}_{NSR} \cup \{\csp_1, \csp_2, \dots, \csp_k\})$.
\end{algorithmic}
\end{algorithm}

\begin{theorem}\label{thm:proof_of_nr_alg} Algorithm~\ref{alg:d-opt} finds the optimal no-regret menu $\M$ to commit to for a learner with payoff $u_L$ against a distribution $\cD$ of optimizers. This algorithm runs in time polynomial in the size of the game and the support of the distribution $\cD$. 
\end{theorem}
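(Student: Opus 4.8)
The plan is to establish two claims: (i) the menu $\cM := \conv(\M_{NSR} \cup \{\csp_1,\dots,\csp_k\})$ returned by Algorithm~\ref{alg:d-opt} attains $\mathrm{OPT} := \max_{\M \text{ valid no-regret}} V_L(\M, \optdist)$, and (ii) every step runs in time $\poly(m,n,k)$. The correctness half rides entirely on Lemma~\ref{lem:csp-assignment-char} together with the observation (already made in the text) that the learner's value against a menu is determined by its CSP assignment via~\eqref{eq:menu_learner_value}.

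First I would check that the feasible region of the LP~\eqref{eq:no-reg-csp-lp} is exactly the set of CSP assignments meeting the three conditions of Lemma~\ref{lem:csp-assignment-char}. Conditions 1 and 2 are copied verbatim as LP constraints; for condition 3, linearity of $u_{O,i}$ together with Lemma~\ref{lem:nsr_characterization} (which says $\M_{NSR}$ is the convex hull of the candidate Stackelberg CSPs) gives $\max_{\csp \in \M_{NSR}} u_{O,i}(\csp) = v_i$, so ``$\csp_i$ is preferred by type $i$ to every no-swap-regret CSP'' collapses to the single inequality $u_{O,i}(\csp_i) \ge v_i$. Hence, by Lemma~\ref{lem:csp-assignment-char}, the LP-feasible points are precisely the valid CSP assignments of valid no-regret menus, and the objective $\sum_i \alpha_i u_L(\csp_i)$ is the learner's value for the corresponding menu.

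For optimality I would argue both inequalities. For $\text{LP-opt} \ge \mathrm{OPT}$: take an optimal valid no-regret menu $\M'$ and its learner-favorable CSP assignment $\Phi'$; by Lemma~\ref{lem:csp-assignment-char} this $\Phi'$ is LP-feasible with objective $\sum_i \alpha_i V_L(\M', u_{O,i}) = V_L(\M', \optdist) = \mathrm{OPT}$ by~\eqref{eq:menu_learner_value}. For $\text{LP-opt} \le \mathrm{OPT}$: let $\Phi = (\csp_1,\dots,\csp_k)$ be LP-optimal; the ``other direction'' of Lemma~\ref{lem:csp-assignment-char} says $\cM = \conv(\M_{NSR} \cup \{\csp_1,\dots,\csp_k\})$ (the algorithm's output) is a valid no-regret menu with $\Phi$ a valid CSP assignment, so $\csp_i \in \argmax_{\csp \in \cM} u_{O,i}(\csp)$; since the optimizer breaks ties in the learner's favor, $V_L(\cM, u_{O,i}) \ge u_L(\csp_i)$, and summing gives $V_L(\cM, \optdist) \ge \text{LP-opt}$, while trivially $V_L(\cM, \optdist) \le \mathrm{OPT}$. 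Chaining, $\mathrm{OPT} \ge V_L(\cM, \optdist) \ge \text{LP-opt} \ge \mathrm{OPT}$, so $\cM$ is optimal.

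For the runtime: by Lemma~\ref{lem:nsr_characterization} each $v_i$ equals the Stackelberg leader value of the normal-form game in which the optimizer of type $i$ is the leader and the learner the follower, hence is computable in $\poly(m,n)$ time by Theorem~\ref{thm:con_stack_alg}; the LP has $kmn$ variables, the constraint $\csp_i \in \M_{NR}$ expands to the simplex constraints plus the $m$ linear no-regret inequalities of~\eqref{eq:no-regret-constraint}, and the remaining constraints are linear in the $\csp_i$ because each $u_{O,i}$ is bilinear, so the whole LP has $\poly(m,n,k)$ size and is solved in polynomial time; forming the output menu is immediate. (The same pipeline handles the stated generalizations by swapping the objective for $\sum_i \alpha_i u_{L,i}(\csp_i)$ or $\min_i u_L(\csp_i)$, each still concave over the same polytope.) The one spot needing care is the tie-breaking bookkeeping: we never enforce learner-favorable tie-breaking inside the LP, and the argument above shows this is harmless precisely because the LP optimum is sandwiched between the two directions of Lemma~\ref{lem:csp-assignment-char}; beyond that, I expect no serious obstacle.
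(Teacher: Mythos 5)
Your proof is correct and takes essentially the same approach as the paper's: use Lemma~\ref{lem:csp-assignment-char} to identify the LP's feasible set with the set of valid CSP assignments for no-regret menus, and observe that the LP objective equals the learner's menu value via~\eqref{eq:menu_learner_value}. Your version is a bit more explicit than the paper's — it spells out both directions of the optimality sandwich and is careful about why tie-breaking is handled correctly even though it is never enforced in the LP, points the paper treats more tersely (and it also quietly corrects a small typo: the paper writes the no-regret check as ``$n$ pure learner actions'' when it should be $m$).
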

\begin{proof}
Since $v_i = \max_{\csp \in \M_{NSR}} u_{O, i}(\csp)$, the three constraints in the LP \eqref{eq:no-reg-csp-lp} correspond (in order) to the three constraints in Lemma \ref{lem:csp-assignment-char}, any solution $\Phi = (\csp_1, \csp_2, \dots, \csp_k)$ to the LP \eqref{eq:no-reg-csp-lp} is the CSP assignment for a valid menu $\M$ (in particular, the $\M$ output by Algorithm~\ref{alg:d-opt}). Since the expected utility of a learner using this menu is precisely $\sum_{i} \alpha_i u_L(\csp_i)$, the CSP assignment returned by the LP \eqref{eq:no-reg-csp-lp} is the CSP assignment of the no-regret menu that optimizes this expected utility. Finally, the algorithm runs in polynomial time -- each computation of the Stackelberg value $v_i$ takes polynomial time, and the final LP has $mnk$ variables and $O(k^2 + kn)$ constraints (verifying that $\csp_i \in \M_{NR}$ can be done by checking that the regret with respect to each of the $n$ pure learner actions is non-positive).
\end{proof}

We conclude this section with several remarks on how Theorem \ref{thm:proof_of_nr_alg} can be generalized to other settings. First, note that we can replace the expected utility objective in the LP \eqref{eq:no-reg-csp-lp} with any concave function of the variables $\csp_i$ and still efficiently solve the resulting program. This lets us capture the generalizations mentioned at the end of Section \ref{sec:results}. Namely, if we want to e.g. maximize the worst-case utility against any of the optimizers, we can replace the current objective with the concave function $\min_{i \in [k]} u_L(\csp_i)$. Similarly, if the type $i$ affects the learner's utility, we can instead consider the linear objective $\sum_{i} u_{L, i}(\csp_i)$. 

Secondly, the crucial structural property we used about no-regret menus is that every such menu $\cM$ both contains the valid menu $\cM_{NSR}$ and is contained in the valid menu $\cM_{NR}$. The procedure in Algorithm \ref{alg:d-opt} can be adjusted to optimize over any ``sandwich set'' of menus; i.e., all menus $\cM$ satisfying $\cM_{1} \subseteq \cM \subseteq \cM_2$ for given valid menus $\cM_1$ and $\cM_2$. One main difficulty in solving the general menu commitment problem is that the set of all valid menus is \emph{not} of this form.

Finally, it is natural to wonder if this procedure is necessary at all, or whether the no-swap-regret menu $\cM_{NSR}$ (the most restrictive of all no-regret menus) always leads to the best outcome for the learner. The following theorem (whose proof is deferred to Appendix \ref{app:omitted}) shows that there are cases where no-swap-regret algorithms are strictly sub-optimal.

\begin{theorem}
\label{thm:nsr_not_optimal}
    There exists learner utility $u_{L}$ and a distribution over optimizer types $\mathcal{D}$ (with support $k=1$) such that the menu output by Algorithm~\ref{alg:d-opt} is strictly better than the no-swap regret menu $\cM_{NSR}$. 
\end{theorem}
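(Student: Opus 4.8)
The plan is to exhibit an explicit instance with a single optimizer type ($k=1$) on which Algorithm~\ref{alg:d-opt} --- equivalently, the linear program~\eqref{eq:no-reg-csp-lp} --- strictly beats the value of committing to $\M_{NSR}$. One should first observe that this is impossible when the learner has only $m=2$ actions: the only nontrivial swap is the transposition $1\leftrightarrow 2$, and external no-regret already forbids a profitable transposition, so $\M_{NR}=\M_{NSR}$ in that case. Accordingly the instance will use $m=3$ learner actions and $n=2$ optimizer actions, which I denote $L$ and $R$.

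Concretely, I would give the learner a ``safe'' action $1$ with $u_L(1,L)=u_L(1,R)=\tfrac12$, a ``specialist'' action $2$ with $u_L(2,L)=1$ and $u_L(2,R)=-1$, and a ``specialist'' action $3$ with $u_L(3,L)=-1$ and $u_L(3,R)=1$; and I would let the optimizer earn $1$ only at the profile $(1,L)$ and $0$ everywhere else, so that $u_O(\csp)=\csp_{1L}$. For this $u_L$ one computes that $\BR_L(y)$ is action $2$ when the optimizer plays $L$ with probability above $3/4$, action $3$ when it plays $L$ with probability below $1/4$, and action $1$ in between; by Lemma~\ref{lem:nsr_characterization}, $\M_{NSR}$ is then the convex hull of six explicit ``optimizer-leads, learner-best-responds'' CSPs. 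The unique maximizer of $u_O$ over $\M_{NSR}$ is the CSP $\psi^\star$ in which the optimizer plays $L$ with probability $3/4$ and the learner responds with action $1$, giving the optimizer's Stackelberg value $v_1=3/4$; and since every point of $\M_{NSR}$ with $u_O=3/4$ has the learner playing action $1$ with probability one, the learner's value from committing to $\M_{NSR}$ equals $u_L(1,\cdot)=\tfrac12$.

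The crux is the correlated strategy profile $\csp^\star$ in which the learner plays action $1$ exactly when the optimizer plays $L$ (joint probability $3/4$) and plays action $3$ exactly when the optimizer plays $R$ (joint probability $1/4$). Then $u_O(\csp^\star)=3/4=v_1$, while $u_L(\csp^\star)=\tfrac34\cdot\tfrac12+\tfrac14\cdot 1=\tfrac58>\tfrac12$. Moreover $\csp^\star$ is a no-regret CSP: it induces the optimizer marginal $(3/4,1/4)$, against which actions $1$ and $2$ are both best responses with value $\tfrac12$, so the constraint~\eqref{eq:no-regret-constraint} reduces to $\tfrac58\ge\tfrac12$. (It is also not no-swap-regret --- conditioned on the learner playing $1$ the optimizer always plays $L$, against which action $2$ strictly beats action $1$ --- and in particular $\csp^\star\notin\M_{NSR}$, as otherwise the value of committing to $\M_{NSR}$ computed above would be at least $\tfrac58$.) Since $k=1$, the type-versus-type constraints of~\eqref{eq:no-reg-csp-lp} are vacuous, so $\csp^\star$ is feasible for that program; hence its optimum --- and therefore the learner value of the menu returned by Algorithm~\ref{alg:d-opt} --- is at least $u_L(\csp^\star)=\tfrac58>\tfrac12$, which proves the theorem.

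The main obstacle is choosing the instance, not verifying it. The tempting first attempts --- in which the optimizer simply has a dominant favorite learner action --- all fail, because then the optimizer forces that action in every sufficiently rich menu and the learner is pinned to its best response there. The construction above sidesteps this by making $u_O$ depend on the full profile (only $(1,L)$ pays the optimizer): the optimizer's Stackelberg behavior within $\M_{NSR}$ genuinely constrains the learner, but a suitably \emph{correlated} no-regret CSP --- one that is not no-swap-regret, in which the learner deliberately declines to best-respond against $L$ --- hands the optimizer that same value while letting the learner collect its genuine best response against $R$ on the remaining mass. The remaining steps (the $m=2$ impossibility, reading off the six extreme points of $\M_{NSR}$ from Lemma~\ref{lem:nsr_characterization}, the no-regret check for $\csp^\star$, and feasibility in~\eqref{eq:no-reg-csp-lp}) are all routine calculations.
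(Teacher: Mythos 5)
Your proof is correct and takes essentially the same route as the paper's: an explicit $3\times 2$ game with a single optimizer type in which a no-regret CSP outside $\cM_{NSR}$ gives the learner strictly higher utility while the optimizer (weakly) prefers it to every point of $\cM_{NSR}$. The only minor difference is that your optimizer is exactly indifferent between $\csp^\star$ and its Stackelberg CSP (so feasibility in \eqref{eq:no-reg-csp-lp} and the optimizer's final choice both lean on the model's tie-breaking in the learner's favor), whereas the paper's instance hands the optimizer a strict gain; your $m=2$ impossibility remark is a nice extra the paper does not include.
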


\section{Optimal General Commitment}
\label{sec:general}

We now lift the constraint that the menus we consider must correspond to no-regret algorithms, and study the problem of general menu commitment. This will turn out to be a considerably harder problem than the no-regret commitment problem studied in the previous section, and we will no longer be able to establish an efficient algorithm that exactly optimizes over the space of all possible menus. Our ultimate goal in this section is to establish the following theorem, which shows that whenever \emph{either} the size $mn$ of the game or the size $k$ of the support of the optimizer distribution $\cD$ is constant, then there exists a polynomial-time algorithm for computing an $\eps$-approximate optimal menu. 

\begin{theorem}
\label{thm:main_result_without_nr}
    Let $d = \min(mn, k)$. There exists an algorithm that runs in time 
    \begin{align*}
    \left(\frac{mnk}{\eps}\right)^{O(d)}
    \end{align*}
    and constructs an $\eps$-approximate optimal menu $\cM^\eps$. Moreover, this menu has an efficient membership oracle (running in time $\poly(m, n, k)$) and therefore corresponds to a learning algorithm with a polynomial per-round time complexity.
\end{theorem}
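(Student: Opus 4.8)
The plan is to follow the same reduction-to-CSP-assignments strategy used in Section~\ref{sec:nr}, but without the luxury of the "sandwich" structure $\M_{NSR} \subseteq \M \subseteq \M_{NR}$. A CSP assignment is a tuple $\Phi = (\csp_1,\dots,\csp_k) \in \Delta_{mn}^k$, and the learner's objective $\sum_i \alpha_i u_L(\csp_i)$ is linear in $\Phi$, so it suffices to (i) characterize the set $\cS$ of \emph{valid} CSP assignments — those realizable as the optimizer-selected points of some valid menu, with incentive constraints $u_{O,i}(\csp_i) \ge u_{O,i}(\csp_{i'})$ across types — and (ii) optimize a linear function over (an $\eps$-approximation of) $\cS$ via the ellipsoid method, which requires a separation/membership oracle. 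The key structural fact we need is: a tuple $\Phi$ satisfying the cross-type incentive constraints is a valid CSP assignment iff there is a valid menu $\M$ containing every $\csp_i$ such that no optimizer type strictly prefers any point of $\M$ to their own $\csp_i$. Equivalently, letting $\M(\Phi)$ be the \emph{smallest} valid menu containing $\{\csp_1,\dots,\csp_k\}$ (which by Corollary~\ref{corollary:upwards_closed} and Theorem~\ref{thm:characterization} is well-defined — it is $\conv$ of the $\csp_i$'s together with, for each $y\in\Delta_n$ not already covered, a best-response point $x\otimes y$), $\Phi$ is valid iff $u_{O,i}(\csp_i) \ge \max_{\csp\in\M(\Phi)} u_{O,i}(\csp)$ for all $i$. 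So the membership problem for $\cS$ reduces to: given $\Phi$, decide whether each type's assigned utility dominates what that type could get from the forced-in best-response points — i.e., whether a certain explicitly described convex set (parametrized by the $k$ utility thresholds $u_i := u_{O,i}(\csp_i)$) is a valid menu.

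This is exactly the subproblem flagged as \textbf{NP-hard in general} in Theorem~\ref{thm:hardness}, which is why the running time must degrade with $d = \min(mn,k)$. The approach to circumvent this is to reformulate "is this set a valid menu" as a Blackwell approachability question: by Theorem~\ref{thm:characterization}, a closed convex $\M$ is a valid menu iff for every $y\in\Delta_n$ there is $x\in\Delta_m$ with $x\otimes y \in \M$; this is precisely the statement that the learner (as the approaching player, choosing $x_t$) can force the average CSP into $\M$ against an adversarial optimizer (choosing $y_t$) — a Blackwell approachability instance whose target set is $\M$. For our membership oracle we do not need $\M$ itself to be approachable but rather the target set encoding "type $i$'s forced best-response utility does not exceed $u_i$", which lives naturally in a $k$-dimensional (one coordinate per type) or $mn$-dimensional vector-valued payoff space — hence dimension $d = \min(mn,k)$ after the appropriate projection. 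We then invoke the algorithm of \cite{mannor2009approachability}, which decides approachability of an arbitrary convex target set in a vector-valued game in time exponential in the payoff dimension, i.e. $(\text{size})^{O(d)}$ here. This yields a separation oracle for $\cS$ (for the incentive constraints, which are linear, separation is trivial); discretizing the utility thresholds to granularity $\eps$ and running the ellipsoid method over $\Rset^{kmn}$ with this oracle produces an $\eps$-approximately optimal $\Phi^\eps$ in time $(mnk/\eps)^{O(d)}$.

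Finally, to turn $\Phi^\eps$ into the claimed menu $\cM^\eps$ with an efficient membership oracle: take $\cM^\eps := \M(\Phi^\eps)$, the smallest valid menu containing the $\csp_i^\eps$. Membership in $\conv(\{\csp_i^\eps\} \cup \{\text{best-response padding}\})$ can be checked by a small LP in $\poly(m,n,k)$ time (the padding set is the union over pure optimizer actions $j$ of segments $\{x\otimes e_j : x\in\BR_L(e_j)\}$, or more carefully the construction from the proof of Theorem~\ref{thm:characterization}), and by Theorem~\ref{thm:characterization} together with the constructive algorithm sketched in Appendix~\ref{app:explicit_algorithms}, this menu is realized by a learning algorithm with $\poly(m,n,k)$ per-round complexity. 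The approximation guarantee $V_L(\cM^\eps,\optdist,\eps) \ge \max_{\M} V_L(\M,\optdist) - \eps$ follows because the true optimal menu's CSP assignment is $\eps$-feasible for our discretized program (rounding its thresholds down costs at most $\eps$ in each $u_{O,i}$ by Lipschitzness of the bilinear utilities on the simplex), and conversely our returned assignment is genuinely realizable up to the same slack.

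\textbf{Main obstacle.} The crux is Step (i): showing that the membership problem for valid CSP assignments reduces \emph{cleanly} to a Blackwell approachability instance whose vector-valued payoff dimension is $\min(mn,k)$ rather than $mn$. Getting the dimension down to $k$ when $k < mn$ requires observing that only $k$ scalar quantities (the per-type worst-case regrets/utilities over the forced points) matter, so one can project the $mn$-dimensional CSP space onto a $k$-dimensional space of "what each type cares about" before invoking \cite{mannor2009approachability}; making this projection compatible with the approachability formulation — and checking the padding construction of Theorem~\ref{thm:characterization} still goes through after projection — is the delicate part. Everything else (ellipsoid method, discretization, LP for the final membership oracle) is routine.
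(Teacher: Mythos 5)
Your overall plan matches the paper's: reduce to CSP assignments, encode validity via Blackwell approachability, invoke the Mannor--Tsitsiklis decision procedure, and run a cutting-plane method with the resulting oracle, with the key observation that the approachability instance can be formulated in dimension $\min(mn,k)$ by projecting onto per-type utilities when $k < mn$. However, there are two substantive gaps.

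First, the ``smallest valid menu $\M(\Phi)$'' is not a well-defined object. Menus are upward-closed (Corollary~\ref{corollary:upwards_closed}) but \emph{not} closed under intersection: for a fixed $y$, the witnesses $x$ with $x\otimes y$ in two valid menus may differ, so their intersection can violate Theorem~\ref{thm:characterization}. Consequently the quantity $\max_{\csp\in\M(\Phi)}u_{O,i}(\csp)$ that your characterization hinges on depends on an arbitrary choice of which best-response padding point $x\otimes y$ to include for each $y$, and searching over those choices is exactly the hard part. The paper sidesteps this by working with the \emph{largest} candidate set $\cC(\Phi)=\{\csp : u_{O,i}(\csp)\le u_{O,i}(\csp_i)\ \forall i\}$ and asking whether \emph{that} polytope is a valid menu (Lemma~\ref{lem:csp-assignment-char-general}) --- an equivalent condition by upward closure, but one that produces a unique, explicitly described set. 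You actually arrive at the correct formulation mid-paragraph (``a certain explicitly described convex set parametrized by the $k$ utility thresholds''), but then revert to $\M(\Phi^\eps)$ when constructing the output menu. This matters for the ``efficient membership oracle'' claim: $\cC(\Phi^*,\eps)$ is an explicit intersection of $k$ half-spaces, so membership is trivial, whereas your $\M(\Phi^\eps)$ involves a padding set over all $y\in\Delta_n$ (not just pure $e_j$) and a choice of best responses, and it is not clear this admits a $\poly(m,n,k)$-time membership test.

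Second, you treat the jump from ``Mannor--Tsitsiklis says not approachable'' to ``we have a separating hyperplane for $\cS$ in $\Rset^{kmn}$'' as immediate, but it is not: the certificate returned is a direction $y\in\Delta_n$ in strategy space, and $\cS$ is a convex set of CSP \emph{assignments}, so one must convert $y$ into a linear inequality in the $\Phi$-variables that all of $\cS$ satisfies and your queried $\Phi$ violates. The paper does this via LP duality on the half-spaces defining $\cC(\Phi)$ (Lemma~\ref{lem:separate}), producing a nonnegative combination $h\in\Rset^k_{\ge 0}$ of the type-utility constraints whose induced inequality $\sum_i h_i u_{O,i}(\csp_i') \le \sum_i h_i u_{O,i}(\csp_i)$ separates. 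Without this step, you have a membership oracle but not a separation oracle, and the ellipsoid method as invoked does not go through. (Relatedly, the paper avoids any $\eps$-discretization of thresholds; the approximation parameter enters only through a weak-separation-style oracle and a relaxed menu $\cC(\Phi^*,\eps)$, via Lemmas~\ref{lem:vl_and_c} and~\ref{lem:csp-expand}.)
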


Our high-level strategy will be similar to the one employed previously for the problem of no-regret commitment. As in that case, instead of thinking directly about the problem of menu design, we will attempt to design the optimal CSP assignment $\Phi = (\csp_1, \csp_2, \dots, \csp_k)$ that corresponds to a valid menu. The main difference between these two settings is that we will lose the structural property present for no-regret menus that any valid no-regret menu must contain $\M_{NSR}$ as a sub-menu. In particular, we will not be able to always construct our optimal menu by taking the convex hull of our ``fallback'' menu $\M_{NSR}$ with the CSPs in our assignment $\Phi$. Instead, for any CSP assignment $\Phi$, we will have to determine whether there exists a viable fallback menu $\M'$ that can take the place of $\M_{NSR}$ in this construction.

To formalize this, given any CSP assignment $\Phi = (\csp_1, \csp_2, \dots, \csp_k)$, we define its \emph{candidate menu} $\cC(\Phi) := \{ \csp \in \Delta^{mn} \,\mid\,u_{O,i}(\csp) \le u_{O,i}(\csp_i) \,\forall i \in [k] \}$ to be the set of CSPs $\csp$ where the optimizer of type $i$ prefers their assigned CSP $\csp_i$ to $\csp$ for all types $i$. Intuitively, this is the largest possible ``fallback menu'' we could offer where each optimizer would select their assigned CSP over any CSP in this set (with the important caveat that this set may not in fact be a valid menu). In particular, we can characterize the set of valid CSP assignments $\Phi$ in the following manner.

\begin{lemma}\label{lem:csp-assignment-char-general}
The CSP assignment $\Phi = (\csp_1, \csp_2, \dots, \csp_k)$ corresponds to a valid menu $\cM$ iff the following conditions hold:

\begin{enumerate}
    \item For each $i \in [k]$, $\csp_i \in \cC(\Phi)$.
    \item The set $\cC(\Phi)$ is a valid menu.
\end{enumerate}

\noindent
If these two conditions hold, then $\Phi$ is the CSP assignment for the menu $\cC(\Phi)$.
\end{lemma}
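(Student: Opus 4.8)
The plan is to prove both directions of the iff, mirroring the structure of the proof of Lemma~\ref{lem:csp-assignment-char} but using the candidate menu $\cC(\Phi)$ in place of the explicit "sandwich" construction $\conv(\M_{NSR} \cup \{\csp_i\})$.

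For the forward direction, suppose $\Phi = (\csp_1, \dots, \csp_k)$ is the CSP assignment of some valid menu $\cM$. By definition, $\csp_i \in \arg\max_{\csp \in \cM} u_{O,i}(\csp)$, which in particular means $\csp_i \in \cM$ and $u_{O,i}(\csp_i) \ge u_{O,i}(\csp)$ for every $\csp \in \cM$. The latter says exactly that $\cM \subseteq \cC(\Phi)$. Combined with $\csp_i \in \cM \subseteq \cC(\Phi)$, this gives condition~1. For condition~2, we use that $\cM$ is a valid menu and Corollary~\ref{corollary:upwards_closed} (menus are upward closed under inclusion): since $\cM \subseteq \cC(\Phi) \subseteq \Delta_{mn}$ and $\cM$ is a valid menu, $\cC(\Phi)$ is also a valid menu. (Here I should double-check that $\cC(\Phi)$ is closed and convex so that Corollary~\ref{corollary:upwards_closed} applies cleanly — it is an intersection of finitely many half-spaces $\{u_{O,i}(\csp) \le u_{O,i}(\csp_i)\}$ with the simplex $\Delta_{mn}$, hence a closed convex polytope.)

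For the reverse direction, assume conditions~1 and~2 hold, and take $\cM := \cC(\Phi)$. By condition~2 this is a valid menu. It remains to check that $\Phi$ is the CSP assignment corresponding to $\cM$, i.e., that $\csp_i \in \arg\max_{\csp \in \cC(\Phi)} u_{O,i}(\csp)$ for each $i$. By condition~1, $\csp_i \in \cC(\Phi)$, so $\csp_i$ is feasible. By the very definition of $\cC(\Phi)$, every $\csp \in \cC(\Phi)$ satisfies $u_{O,i}(\csp) \le u_{O,i}(\csp_i)$, so $\csp_i$ is a maximizer. Hence $\csp_i$ is a valid choice for the type-$i$ optimizer in $\cM$, and $\Phi$ is (a) valid CSP assignment of $\cM = \cC(\Phi)$, which also establishes the final sentence of the lemma.

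The main subtlety — and the reason this lemma is genuinely weaker/harder to use than Lemma~\ref{lem:csp-assignment-char} — is condition~2: unlike the no-regret case, where $\conv(\M_{NSR} \cup \{\csp_i\})$ is automatically a valid menu by upward-closedness, here $\cC(\Phi)$ need not satisfy the characterization of Theorem~\ref{thm:characterization} (for every $y \in \Delta_n$ there is $x \in \Delta_m$ with $x \otimes y \in \cC(\Phi)$). So the proof itself is short, but I should flag explicitly that verifying condition~2 is the nontrivial algorithmic task deferred to the rest of the section (the Blackwell-approachability reduction). One small point to be careful about: I am using that if $\Phi$ arises as \emph{a} valid CSP assignment of \emph{some} menu $\cM$ then it also arises as the CSP assignment of $\cC(\Phi)$; tie-breaking in favor of the learner does not cause trouble because the learner's induced utility depends only on which $\csp_i$ is selected, and the set of $u_{O,i}$-maximizers over $\cC(\Phi)$ contains $\csp_i$ — so the learner can be made no worse off. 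I would state this cleanly but not belabor it.
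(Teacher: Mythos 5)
Your proof is correct and follows essentially the same approach as the paper: both directions hinge on the observation that any menu $\cM$ whose CSP assignment is $\Phi$ must satisfy $\cM \subseteq \cC(\Phi)$, combined with upward-closedness of valid menus (Corollary~\ref{corollary:upwards_closed}). The only cosmetic difference is that you derive condition~1 in the forward direction via the inclusion chain $\csp_i \in \cM \subseteq \cC(\Phi)$, whereas the paper observes directly that the incentive-compatibility inequalities $u_{O,i}(\csp_i) \ge u_{O,i}(\csp_{i'})$ imply $\csp_i$ satisfies the defining constraints of $\cC(\Phi)$; these are logically equivalent.
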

\begin{proof}
First, note that if both conditions hold, then by the definition of $\cC(\Phi)$ the optimizer of type $i$ will prefer $\csp_i$ to any other CSP in the menu $\cC(\Phi)$, and therefore $\Phi$ is the CSP assignment for the valid menu $\cC(\Phi)$.

On the other hand, if $\Phi$ is the CSP assignment for some valid menu $\M$, note that we must have $\M \subseteq \cC(\Phi)$ (any $\csp \in \M \setminus \cC(\Phi)$ would be better for one of the optimizers than their assigned CSP, by the construction of $\cC(\Phi)$). It follows that $\cC(\Phi)$ is a valid menu, since menus are upwards closed under inclusion (Corollary \ref{corollary:upwards_closed}). Moreover, in any CSP assignment we must have $u_{O, i}(\csp_i) \geq u_{O, i}(\csp_{i'})$ for any $i, i' \in [k]$, and therefore each $\csp_i$ is guaranteed to belong to $\cC(\Phi)$. 
\end{proof}

We again wish to use this characterization to optimize over the space of valid CSP assignments. To this end, let $\cP \subseteq \Delta_{mn}^{k}$ be the set of all CSP assignments that correspond to a valid menu. We will use the characterization in Lemma~\ref{lem:csp-assignment-char-general} to write $\cP$ as the intersection of the following two convex sets. First, we have the \emph{incentive compatible set} $\cR$ containing all CSP assignments satisfying the first condition of Lemma~\ref{lem:csp-assignment-char-general}, namely,

\[ \cR := \{ \Phi \in \Delta^{mnk} \mid u_{O,i}(\csp_i) \ge u_{O,i}(\csp_{i'}) \,\forall i,i' \in [k] \}.\]

Second, we have the \emph{valid candidate set} $\cS$ containing all CSP assignments satisfying the second condition of Lemma~\ref{lem:csp-assignment-char-general}, namely,

\[ \cS := \{ \Phi \in \Delta_{mn}^{k} \mid \cC(\Phi) \text{ is a valid menu}\}\]

Note that the description of $\cR$ above implies that it is a convex polytope. Although less obvious, it can also be shown that the set $\cS$ is convex.

\begin{lemma}\label{lem:S_is_convex}
The set $\cS$ is a convex subset of $\Delta_{mn}^{k}$.
\end{lemma}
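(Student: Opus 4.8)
The plan is to take two CSP assignments $\Phi^{(0)} = (\csp_1^{(0)}, \dots, \csp_k^{(0)})$ and $\Phi^{(1)} = (\csp_1^{(1)}, \dots, \csp_k^{(1)})$ in $\cS$, fix $\lambda \in [0,1]$, and show that $\Phi^{(\lambda)} := (1-\lambda)\Phi^{(0)} + \lambda \Phi^{(1)}$ also lies in $\cS$, i.e., that $\cC(\Phi^{(\lambda)})$ is a valid menu. By Theorem~\ref{thm:characterization}, it suffices to show that for every $y \in \Delta_n$ there exists $x \in \Delta_m$ with $x \otimes y \in \cC(\Phi^{(\lambda)})$. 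Fix such a $y$. Since $\Phi^{(0)} \in \cS$, there is $x^{(0)} \in \Delta_m$ with $x^{(0)} \otimes y \in \cC(\Phi^{(0)})$, meaning $u_{O,i}(x^{(0)} \otimes y) \le u_{O,i}(\csp_i^{(0)})$ for all $i$; similarly there is $x^{(1)} \in \Delta_m$ with $u_{O,i}(x^{(1)} \otimes y) \le u_{O,i}(\csp_i^{(1)})$ for all $i$.

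The key step is to produce a single $x \in \Delta_m$ that works for the mixture. The natural candidate $x = (1-\lambda)x^{(0)} + \lambda x^{(1)}$ does \emph{not} obviously work, because $x \otimes y \ne (1-\lambda)(x^{(0)}\otimes y) + \lambda(x^{(1)}\otimes y)$ in general is false --- actually it \emph{is} true, since $((1-\lambda)x^{(0)} + \lambda x^{(1)}) \otimes y = (1-\lambda)(x^{(0)} \otimes y) + \lambda (x^{(1)} \otimes y)$ by bilinearity of the outer product in the first argument. So set $x := (1-\lambda)x^{(0)} + \lambda x^{(1)} \in \Delta_m$. Then for each type $i$, using linearity (bilinearity) of $u_{O,i}$,
\[
u_{O,i}(x \otimes y) = (1-\lambda)\,u_{O,i}(x^{(0)} \otimes y) + \lambda\, u_{O,i}(x^{(1)} \otimes y) \le (1-\lambda)\,u_{O,i}(\csp_i^{(0)}) + \lambda\, u_{O,i}(\csp_i^{(1)}) = u_{O,i}(\csp_i^{(\lambda)}),
\]
where the last equality is again linearity of $u_{O,i}$ in its CSP argument. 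Hence $x \otimes y \in \cC(\Phi^{(\lambda)})$, and since $y$ was arbitrary, Theorem~\ref{thm:characterization} gives that $\cC(\Phi^{(\lambda)})$ is a valid menu, so $\Phi^{(\lambda)} \in \cS$.

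I expect the only real subtlety to be the observation that $(\,(1-\lambda)x^{(0)} + \lambda x^{(1)}\,) \otimes y$ equals the convex combination of $x^{(0)} \otimes y$ and $x^{(1)} \otimes y$ --- i.e., that we can interpolate the "witness" mixed strategies $x^{(j)}$ directly rather than interpolating the CSPs, which lets a convex combination of witnesses for $\Phi^{(0)}$ and $\Phi^{(1)}$ serve as a witness for $\Phi^{(\lambda)}$. Once this is noted, the rest is just bilinearity of $u_{O,i}$ and an application of the characterization theorem. One should also note $\Delta_{mn}^k$ is convex and $\Phi^{(\lambda)} \in \Delta_{mn}^k$, so $\cS \subseteq \Delta_{mn}^k$ is genuinely a convex subset.
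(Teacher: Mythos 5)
Your proof is correct and matches the paper's argument essentially step for step: fix $y$, take the two witness strategies $x^{(0)}, x^{(1)}$ from the response-satisfiability characterization (Theorem~\ref{thm:characterization}), interpolate them to get $x = (1-\lambda)x^{(0)} + \lambda x^{(1)}$, and use bilinearity of $u_{O,i}$ to conclude $x \otimes y \in \cC(\Phi^{(\lambda)})$. The one point you flag as a possible subtlety (interpolating witnesses rather than CSPs) is the same bilinearity observation the paper uses implicitly, so there is no substantive difference.
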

\begin{proof}
Recall (by Theorem \ref{thm:characterization}) that $\cM$ is a valid menu iff for every $y \in \Delta_n$, there exists an $x \in \Delta_{m}$ such that $x \otimes y \in \cM$. Let $\Phi_1 = (\csp_{1, 1}, \csp_{1, 2}, \dots, \csp_{1, k})$ and $\Phi_2 = (\csp_{2, 1}, \csp_{2, 2}, \dots, \csp_{2, k})$ be two elements of $\cS$. We will show for any $\Phi' = \lambda \Phi_1 + (1-\lambda)\Phi_2$ (where $\lambda \in [0,1]$) that $\cC(\Phi')$ is a valid menu and thus that $\Phi' \in \cS$. 

Fix any $y \in \Delta_n$. It suffices to show that there exists an $x \in \Delta_m$ such that $x \otimes y \in \cC(\Phi')$. Since $\Phi_1$ and $\Phi_2$ belong to $\cS$, there exist elements $x_1, x_2 \in \Delta_m$ such that $x_1 \otimes y \in \cC(\Phi_1)$ and $x_2 \otimes y \in \cC(\Phi_2)$. In particular, we have that $u_{O, i}(x_1, y) \leq u_{O, i}(\csp_{1, i})$ and $u_{O, i}(x_2, y) \leq u_{O, i}(\csp_{2, i})$ for all $i \in [k]$. But since $u_{O, i}$ is bilinear, this implies that $u_{O, i}(\lambda x_1 + (1-\lambda) x_2, y) \leq \lambda u_{O, i}(\csp_{1, i}) + (1-\lambda) u_{O, i}(\csp_{2, i}) = u_{O, i}(\lambda \csp_{1, i} + (1-\lambda)\csp_{2, i}) = u_{O, i}(\csp'_{i})$. Therefore $x' = \lambda x_1 + (1-\lambda)x_2$ has the property that $x' \otimes y \in \cC(\Phi')$, and therefore $\Phi' \in \cS$. 
\end{proof}

From Lemma~\ref{lem:csp-assignment-char-general}, it immediately follows that $\cP = \cR \cap \cS$. Note furthermore that we have an explicit (and efficiently checkable) description of $\cR$ as a convex polytope with a polynomial number of facets. If we had a similar description for $\cS$, we would be able to efficiently optimize over all of $\cP$ and solve the general menu commitment problem. Unfortunately, no such succinct description exists for $\cS$ in general (as we will later show, it is NP-hard to even decide membership in $\cS$). Instead, we will show how to construct an approximate separation oracle for $\cS$ that runs in time exponential in $\min(mn, k)$, but polynomial in all other parameters. This oracle will eventually allow us to algorithmically construct an $\epsilon$-approximate optimal menu. 

To formally establish this, we introduce the following relaxations of the above definitions. For a CSP assignment $\Phi$, let $\cC(\Phi, \eps)$ denote the $\eps$-approximate candidate menu defined via $\cC(\Phi, \eps) := \{ \csp \in \Delta^{mn} \,\mid\,u_{O,i}(\csp) \le u_{O,i}(\csp_i) + \eps \,\forall i \in [k] \}$. 

\begin{lemma}\label{lem:vl_and_c}
If $\cC(\Phi, \eps)$ is a valid menu with the property that $\csp_i \in \cC(\Phi, \eps)$ for all $i \in [k]$, then

\[
V_L(\cC(\Phi, \eps), \cD, \eps) \geq V_{L}(\Phi).
\]
\end{lemma}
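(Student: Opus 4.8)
The plan is to unwind the definitions and observe that the point $\csp_i$ in the assignment is itself an admissible witness for the $\eps$-approximate optimizer response in the menu $\cC(\Phi,\eps)$, so the learner's value against type $i$ on this menu is at least $u_L(\csp_i)$; averaging over types then gives the claim. Concretely, recall that $V_L(\Phi) = \sum_{i=1}^k \alpha_i u_L(\csp_i)$ by definition of a CSP assignment's value, and that $V_L(\cC(\Phi,\eps),\cD,\eps) = \sum_{i=1}^k \alpha_i V_L(\cC(\Phi,\eps), u_{O,i}, \eps)$ by \eqref{eq:menu_learner_value} (applied with the $\eps$-slack). So it suffices to prove, for each fixed $i \in [k]$, that
\[ V_L(\cC(\Phi,\eps), u_{O,i}, \eps) \;\ge\; u_L(\csp_i). \]

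First I would fix $i$ and exhibit $\csp_i$ as the witness. By hypothesis $\csp_i \in \cC(\Phi,\eps)$, so it is a feasible point of the menu. Next, by the very definition $\cC(\Phi,\eps) = \{\csp \in \Delta^{mn} \mid u_{O,j}(\csp) \le u_{O,j}(\csp_j) + \eps \ \forall j\}$, every $\csp \in \cC(\Phi,\eps)$ satisfies $u_{O,i}(\csp) \le u_{O,i}(\csp_i) + \eps$ (taking the $j=i$ constraint). Taking the supremum over $\csp \in \cC(\Phi,\eps)$ gives $\max_{\csp' \in \cC(\Phi,\eps)} u_{O,i}(\csp') \le u_{O,i}(\csp_i) + \eps$, i.e. $u_{O,i}(\csp_i) \ge \max_{\csp' \in \cC(\Phi,\eps)} u_{O,i}(\csp') - \eps$. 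Hence $\csp_i$ lies in the feasible region $\{\csp \in \cC(\Phi,\eps) \mid u_{O,i}(\csp) \ge \max_{\csp'} u_{O,i}(\csp') - \eps\}$ of the optimization defining $V_L(\cC(\Phi,\eps), u_{O,i}, \eps)$, so that value is at least $u_L(\csp_i)$, as desired. Summing $\alpha_i \cdot (\text{this inequality})$ over $i$ yields $V_L(\cC(\Phi,\eps),\cD,\eps) \ge \sum_i \alpha_i u_L(\csp_i) = V_L(\Phi)$.

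I do not anticipate a real obstacle here: the lemma is essentially a bookkeeping statement confirming that the $\eps$ appearing in the relaxed candidate menu $\cC(\Phi,\eps)$ is exactly calibrated to the $\eps$ appearing in the approximate learner value $V_L(\cdot,\cdot,\eps)$. The one thing to be careful about is that the argument only shows the optimizer of type $i$ is \emph{within $\eps$} of their optimal utility at $\csp_i$ (not exactly optimal), which is why the statement is phrased with $V_L(\cdot,\cdot,\eps)$ on the left-hand side rather than $V_L(\cdot,\cdot,0)$; the validity of $\cC(\Phi,\eps)$ as a menu is assumed in the hypothesis, so no additional work (e.g., invoking Theorem~\ref{thm:characterization}) is needed for this particular lemma.
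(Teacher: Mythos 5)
Your proof is correct and follows essentially the same approach as the paper: decompose $V_L(\cC(\Phi,\eps),\cD,\eps)$ into a sum over types, then exhibit $\csp_i$ as a feasible candidate in the optimization defining $V_L(\cC(\Phi,\eps),u_{O,i},\eps)$ using the $j=i$ constraint of $\cC(\Phi,\eps)$. The only difference is that you spell out the bookkeeping more explicitly than the paper does.
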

\begin{proof}
\sloppy{Recall that $V_{L}(\cC(\Phi, \eps), \cD, \eps) = \sum_{i} \alpha_i V_{L}(\cC(\Phi, \eps), u_{O, i}, \eps)$. It therefore suffices to show that $V_{L}(\cC(\Phi, \eps), u_{O, i}, \eps) \geq u_{L}(\csp_i)$. But this directly follows from the definitions of $\cC(\Phi, \eps)$ and $V_{L}(\cM, \cD, \eps)$; in particular, $\csp_i$ has the property that $u_{O, i}(\csp_i) \geq \max_{\csp' \in \cC(\Phi, \eps)} u_{O,i}(\csp') - \eps$ and so it is a candidate CSP in the computation of $V_L(\cC(\Phi, \eps), u_{O, i}, \eps)$.}
\end{proof}

We likewise define the relaxed candidate set $\cS(\eps)$ via 

\[ \cS(\eps) := \{ \Phi \in \Delta_{mn}^{k} \mid \cC(\Phi, \eps) \text{ is a valid menu}\}\]

\noindent
We will work with \emph{approximate separation oracles} for $\cS$ of the following form: given as input a CSP assignment $\Phi \in \Delta_{mn}^k$ and a precision parameter $\delta$, the oracle either returns that $\Phi \in \cS(\delta)$ or returns a hyperplane separating $\Phi$ from $\cS$. This can be thought of as analogous to the standard definition of a weak separation oracle for the set $\cS$ (which either returns that $\Phi \in \cS^{\delta}$ or a separating hyperplane). 





Below, we show that such an  oracle for $\cS$ allows us to compute an $\eps$-approximate optimal menu (thus letting us eventually establish Theorem~\ref{thm:main_result_without_nr}). Here we make one additional technical assumption (common in convex optimization, and necessary for execution of cutting-plane methods such as the ellipsoid algorithm): we assume that $\cP$ contains a small ball of radius $r_{\min}$ with $\log(1/r_{\min}) = \poly(m, n, \log(1/\eps))$. Note that since we are only concerned with finding approximately optimal menus, this assumption is essentially true without loss of generality, since we can expand $\cR$ and $\cS$ by $\eps/100$ (thus guaranteeing $r_{\min} = \Omega(\eps)$) without significantly changing the approximation guarantee; nonetheless, for ease of exposition we will proceed under the assumption that this ball exists. 

\begin{lemma}\label{lem:from-weak-membership}
Assume we have access to an approximate separation oracle for the set $\cS$ (as defined above). Then there exists an algorithm that constructs an $\eps$-approximate optimal menu $\cM^{\eps}$ that makes at most $\poly(m, n, \log (1/\eps))$ oracle calls with precision $\eps$. 
\end{lemma}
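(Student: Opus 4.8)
The plan is to run a cutting-plane (ellipsoid) method with a sliding objective to approximately maximize the linear functional $V_L(\Phi) = \sum_{i=1}^{k} \alpha_i u_L(\csp_i)$ over the convex body $\cP = \cR \cap \cS$; by the reduction of Lemma~\ref{lem:csp-assignment-char-general} this maximum, call it $\opt$, equals the optimal learner value $\max_{\cM \text{ valid}} V_L(\cM, \cD)$, so it suffices to output a valid menu of value at least $\opt - \eps$. The engine is a combined separation oracle for $\cP$: on input a CSP assignment $\Phi$, first test membership in the explicit polytope $\cR$ (a $\poly(m,n,k)$-facet check); if a facet is violated, return it — it separates $\Phi$ from $\cP \subseteq \cR$. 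Otherwise, query the given approximate separation oracle for $\cS$ with precision $\eps$: if it returns a hyperplane separating $\Phi$ from $\cS$, return it (it also separates $\Phi$ from $\cP \subseteq \cS$); if instead it reports $\Phi \in \cS(\eps)$, then together with $\Phi \in \cR$ we have \emph{certified} that $\Phi \in \cR \cap \cS(\eps)$.

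The first observation is that a certified point already suffices to build the output. If $\Phi \in \cR \cap \cS(\eps)$, then $\cC(\Phi, \eps)$ is a valid menu (by definition of $\cS(\eps)$), and each $\csp_i$ lies in $\cC(\Phi) \subseteq \cC(\Phi, \eps)$ because $\Phi \in \cR$ forces $u_{O,i'}(\csp_i) \le u_{O,i'}(\csp_{i'})$ for every $i'$. Hence Lemma~\ref{lem:vl_and_c} yields $V_L(\cC(\Phi, \eps), \cD, \eps) \ge V_L(\Phi)$. So for any certified $\Phi^{\star}$ with $V_L(\Phi^{\star}) \ge \opt - \eps$, the menu $\cM^\eps := \cC(\Phi^{\star}, \eps)$ is $\eps$-approximately optimal, and membership in it is decided by checking the $k$ inequalities $u_{O,i}(\csp) \le u_{O,i}(\csp^{\star}_i) + \eps$, i.e.\ in $\poly(m,n,k)$ time — which also establishes the membership-oracle claim of Theorem~\ref{thm:main_result_without_nr}.

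It then remains to locate such a $\Phi^{\star}$. I would run the ellipsoid method starting from a ball containing $\Delta_{mn}^{k}$ (radius $O(\sqrt{k})$), maintaining the best certified pair $(\Phi^{\star}, v^{\star} = V_L(\Phi^{\star}))$ found so far. At each step feed the center to the combined oracle: if it returns a separating hyperplane, use it as the cut; if it certifies the center $\Phi$, update $(\Phi^{\star}, v^{\star})$ when $V_L(\Phi) > v^{\star}$, and in either case add the objective cut $\{\Psi : V_L(\Psi) \ge v^{\star}\}$. For correctness, let $\hat\Phi$ attain $\opt$ and let $B(\Phi_0, r_{\min}) \subseteq \cP$ be the ball assumed in the statement; since $u_L \in [-1,1]$ gives $\opt \le 1$, the shrunken ball $B\!\left((1 - \tfrac{\eps}{2})\hat\Phi + \tfrac{\eps}{2}\Phi_0,\; \tfrac{\eps}{2} r_{\min}\right)$ lies in $\cP$ and has objective value $\ge \opt - \eps$ at every point; call it $B_\eps$. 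Every cut added is either a hyperplane separating some point from $\cP$ (never removing a point of $B_\eps \subseteq \cP$) or an objective cut at a certified threshold $v^{\star}$; so if the run ended with $v^{\star} < \opt - \eps$, no cut ever removed a point of $B_\eps$ (all of whose points have objective $\ge \opt - \eps > v^{\star}$), meaning $B_\eps$ survives inside the final ellipsoid. Choosing $N = \poly\big(kmn, \log(1/\eps), \log(1/r_{\min})\big)$ large enough to force the ellipsoid volume below that of a ball of radius $\tfrac{\eps}{2} r_{\min}$ yields a contradiction, so in fact $v^{\star} \ge \opt - \eps$, and we output $\cC(\Phi^{\star}, \eps)$. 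Under the hypothesis $\log(1/r_{\min}) = \poly(m, n, \log(1/\eps))$, this is $N = \poly(m, n, k, \log(1/\eps))$ iterations, each making exactly one oracle call at precision $\eps$.

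The main obstacle I expect is the unusual one-sidedness of the oracle: an affirmative answer certifies membership only in the \emph{relaxed} set $\cS(\eps)$, not in $\cS$ nor even in a Euclidean neighborhood of $\cS$, so the off-the-shelf equivalence of separation and optimization does not apply verbatim. The resolution is to fold the $\eps$-slack into the very definition of an $\eps$-approximate menu through Lemma~\ref{lem:vl_and_c}, and to drive the ellipsoid by the sliding objective so that its sole failure mode — terminating without a certified point of value $\ge \opt - \eps$ — is ruled out by the volume argument. A secondary point to get right is that every returned hyperplane is a genuine cut for $\cP$ (true since $\cP \subseteq \cR$ and $\cP \subseteq \cS$) and has polynomially bounded bit-length (a standard requirement we may assume of the supplied oracle), so that the ellipsoid iterations stay polynomial.
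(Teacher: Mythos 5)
Your proposal is correct and follows essentially the same route as the paper: combine a separation oracle for $\cR$ with the approximate oracle for $\cS$, optimize $V_L$ over $\cP$ via cutting-plane methods, and output $\cC(\Phi^*,\eps)$ using Lemma~\ref{lem:vl_and_c}. The paper defers the cutting-plane details to a citation, whereas you spell out the sliding-objective ellipsoid and the volume argument needed to cope with the one-sided $\cS(\eps)$ certification — a genuine subtlety that the paper's terse proof leaves implicit — and you also explicitly verify the hypothesis $\csp_i \in \cC(\Phi,\eps)$ of Lemma~\ref{lem:vl_and_c} from $\Phi \in \cR$, which the paper omits.
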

\begin{proof}
Note that by additionally checking the constraints associated with the set $\cR$, we can construct an approximate separation oracle that either returns that $\Phi \in \cR \cap \cS(\eps)$ or a hyperplane separating $\Phi$ from $\cR \cap \cS = \cP$. We can use this primitive in combination with standard cutting-plane methods (see~\cite{lee2018efficient}) to find a CSP assignment $\Phi^* \in \cR \cap \cS(\eps)$ with the property that $V_{L}(\Phi^*) \geq \max_{\Phi \in \cP} V_L(\Phi) - \eps$. This procedure makes at most $\poly(m, n, \log (1/\eps),  \log (1/r_{\min})) = \poly(m, n, \log (1/\eps))$ calls to this oracle.  

We claim that the menu $\cM^{\eps} = \cC(\Phi^*, \eps)$ is an $\eps$-approximate menu. To see this, note that by Lemma \ref{lem:vl_and_c} we have that $V_{L}(\cM^{\eps}, \cD, \eps) \geq V_{L}(\Phi^*)$. By the previous guarantee, this is at least $\max_{\Phi \in \cP} V_L(\Phi) - \eps = \max_{\text{$\M$ is a valid menu}}  V_{L}(\M, \optdist) - \eps$, and therefore $\M^{\eps}$ is $\eps$-optimal.
\end{proof}

\subsection{Approximate oracles for the valid candidate set}

We now turn our attention to constructing explicit approximate separation oracles for the valid candidate set $\cS$. In combination with Lemma \ref{lem:from-weak-membership}, this will let us construct explicit algorithms for producing approximately optimal menus for the general menu commitment problem.

In this section we will demonstrate two different constructions of such an oracle: one that runs in polynomial time when the size $mn$ of the game is constant, and one that runs in polynomial time when the size $k$ of the support of the distribution $\optdist$ is constant. Both constructions will reduce the problem of determining whether a given CSP assignment corresponds to a valid menu to the problem of deciding whether a set is approachable in a specific instance of Blackwell approachability. We therefore begin by reviewing the relevant tools we need from the theory of Blackwell approachability.

The core object of study in Blackwell approachability is a \emph{vector-valued game}: a game where two players take actions in some convex set and receive a ``payoff'' in the form of a bilinear vector-valued function of their two actions. In all of our applications of Blackwell approachability, we will continue to consider games between a learner and an optimizer where the learner plays actions $x \in \Delta_m$ and the optimizer plays actions $y \in \Delta_n$. However, instead of considering the specific (single-dimensional) payoff functions $u_L$ and $u_{O, i}$, we will consider an arbitrary bilinear \emph{$d$-dimensional vector-valued} payoff function $u: \Delta_m \times \Delta_n \rightarrow \Rset^{d}$. 

The goal of Blackwell approachability is to characterize which $d$-dimensional sets $\cM$ the learner can force the time-averaged value of $u(x_t, y_t)$ to approach. The main theorem of Blackwell approachability shows that these sets are precisely the \emph{response-satisfiable} sets.

\begin{definition}[Response-Satisfiability]
\label{def:response_satisfiability}
Let \( \cM \subseteq \mathbb{R}^d \) be a closed convex set, and let \( u: \Delta_m \times \Delta_n \rightarrow \mathbb{R}^d \) be a vector-valued payoff function. The set \( \cM \) is said to be \emph{response-satisfiable} with respect to \( u \) if for every mixed strategy \( y \in \Delta_m \) of the optimizer, there exists a mixed strategy \( x \in \Delta_n \) of the learner such that
\[
u(x, y) \in \cM.
\]
\end{definition}

\begin{theorem}[Blackwell Approachability Theorem]\label{thm:blackwell}
Let \( \cM \subseteq \mathbb{R}^d \) be a closed convex set, and let \( u: \Delta_m \times \Delta_n \rightarrow \mathbb{R}^d \) be a bilinear vector-valued payoff function. If \( \cM \) is response-satisfiable with respect to \( u \), then there exists a learning algorithm \( \mathcal{A} \) for the learner such that, for any sequence of strategies \( y_1, y_2, \dotsc \) chosen by the optimizer, the sequence of the learner's strategies \( x_1, x_2, \dotsc \) produced by \( \mathcal{A} \) satisfies
\[
\lim_{T \rightarrow \infty} d\left( \frac{1}{T} \sum_{t=1}^T u(x_t, y_t), \cM \right) = 0,
\]
where \( d\left( \mathbf{v}, \cM \right) = \inf_{\mathbf{m} \in \cM} \| \mathbf{v} - \mathbf{m} \| \) denotes the Euclidean distance from vector \( \mathbf{v} \) to the set \( \cM \).
\end{theorem}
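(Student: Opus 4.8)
The plan is to follow Blackwell's original argument \cite{blackwell1956analog}, which splits into a one-shot geometric reduction followed by a projection-based potential argument on the running average. Throughout, write $u_{t} = u(x_{t}, y_{t})$ and $\bar u_t = \frac{1}{t}\sum_{s=1}^{t} u_s$ for the time-averaged payoff, and fix a constant $D$ bounding $\|v - w\|$ over all $v \in \cM'$ and $w$ in the (compact) range of $u$, where $\cM'$ is any bounded region containing all projections we use; such a $D$ exists since $u$ is continuous on the compact set $\Delta_m \times \Delta_n$ and $\bar u_t$ always lies in the convex hull of that range.

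\textbf{Step 1 (Halfspace reformulation).} I would first upgrade response-satisfiability to the following one-shot guarantee: for every closed halfspace $H \supseteq \cM$, there exists $x \in \Delta_m$ with $u(x, y) \in H$ for \emph{every} $y \in \Delta_n$. Writing $H = \{ z \in \Rset^d \mid \langle a, z\rangle \le b\}$, consider the zero-sum game with bilinear scalar payoff $(x,y) \mapsto \langle a, u(x,y)\rangle$. Response-satisfiability says that for each $y$ there is an $x$ with $u(x,y) \in \cM \subseteq H$, hence $\langle a, u(x,y)\rangle \le b$, so $\max_{y \in \Delta_n} \min_{x \in \Delta_m} \langle a, u(x,y)\rangle \le b$. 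Since $\Delta_m, \Delta_n$ are compact and convex and the payoff is bilinear, von Neumann's minimax theorem lets me interchange the order to get $\min_{x}\max_{y} \langle a, u(x,y)\rangle \le b$, which is exactly the desired $x$. I expect this to be the crux of the proof; the rest is essentially bookkeeping.

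\textbf{Step 2 (The algorithm and potential).} The algorithm chooses $x_{t+1}$ from the history $y_1, \dots, y_t$ as follows: if $\bar u_t \in \cM$, play arbitrarily; otherwise let $p_t$ be the Euclidean projection of $\bar u_t$ onto $\cM$, and apply Step 1 to the supporting halfspace $H_t = \{ z \mid \langle \bar u_t - p_t, \, z - p_t\rangle \le 0\}$ (which contains $\cM$ by convexity together with the projection inequality) to obtain $x_{t+1}$ with $\langle \bar u_t - p_t, \, u(x_{t+1}, y) - p_t\rangle \le 0$ for all $y \in \Delta_n$, in particular for $y = y_{t+1}$. Using $\bar u_{t+1} = \tfrac{t}{t+1}\bar u_t + \tfrac{1}{t+1} u_{t+1}$ and expanding, $\| \bar u_{t+1} - p_t\|^2 = \tfrac{t^2}{(t+1)^2}\|\bar u_t - p_t\|^2 + \tfrac{2t}{(t+1)^2}\langle \bar u_t - p_t, u_{t+1} - p_t\rangle + \tfrac{1}{(t+1)^2}\|u_{t+1} - p_t\|^2$; the cross term is $\le 0$ by construction and the last term is $\le D^2/(t+1)^2$. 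Since $d(\bar u_{t+1}, \cM)^2 \le \|\bar u_{t+1} - p_t\|^2$ and $\|\bar u_t - p_t\| = d(\bar u_t, \cM)$, multiplying through by $(t+1)^2$ gives the recursion $(t+1)^2 d(\bar u_{t+1}, \cM)^2 \le t^2 d(\bar u_t, \cM)^2 + D^2$.

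\textbf{Step 3 (Conclusion).} Telescoping the recursion from round $1$ yields $t^2 d(\bar u_t, \cM)^2 \le d(\bar u_1, \cM)^2 + (t-1)D^2$, hence $d(\bar u_t, \cM) = O(D/\sqrt{t}) \to 0$ as $T \to \infty$, which is precisely the claimed conclusion (and in fact gives an explicit $O(1/\sqrt{T})$ convergence rate). The only place real care is needed is Step 1 — verifying that the minimax interchange is legitimate and that $H_t \supseteq \cM$ — since Steps 2 and 3 are a standard squared-distance potential computation.
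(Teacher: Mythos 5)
The paper states this theorem as a classical result, citing Blackwell (1956), and does not supply a proof, so there is no in-paper argument to compare against; your proof is a correct and faithful reproduction of Blackwell's original argument. Step~1 is the standard dual (halfspace / ``B-set'') reformulation of response-satisfiability via von Neumann's minimax theorem, and Steps~2--3 are the usual squared-distance potential with a projection-onto-$\cM$ descent direction, yielding the $O(D/\sqrt{T})$ rate. One small gap worth tidying: in the branch where $\bar u_t \in \cM$ you say the learner ``plays arbitrarily'' but do not verify that the recursion still holds there. It does --- taking $p_t = \bar u_t$ gives
\[
d\bigl(\bar u_{t+1}, \cM\bigr) \;\le\; \frac{1}{t+1}\,\lVert u_{t+1} - \bar u_t \rVert \;\le\; \frac{D}{t+1},
\]
so the inequality $(t+1)^2\, d(\bar u_{t+1},\cM)^2 \le t^2\, d(\bar u_t,\cM)^2 + D^2$ continues to hold and the telescoping is unaffected. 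It is also worth making explicit that the bound $D$ can be taken uniform over all rounds because $\bar u_t$ always lies in the compact convex hull $K$ of the range of $u$, and the projection $p_t$ of $\bar u_t$ onto $\cM$ satisfies $\lVert p_t - \bar u_t\rVert \le \lVert m_0 - \bar u_t\rVert$ for any fixed $m_0 \in \cM$, so all projections lie in a fixed bounded region even if $\cM$ itself is unbounded.
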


Blackwell approachability is closely related to the validity of menus in the following sense. Consider the instance of Blackwell approachability where $d = mn$ and $u(x, y) = x \otimes y$ (i.e., the single round CSP where the learner plays $x$ and the optimizer plays $y$). Then a set $\cM \subseteq \Delta_{mn}$ is a valid asymptotic menu iff it is approachable in this instance of Blackwell approachability (indeed, this is just a restatement of Theorems \ref{thm:characterization} and \ref{thm:blackwell}). In particular, to decide whether a CSP assignment $\Phi$ belongs to $\cS$, it suffices to decide whether the set $\cC(\Phi)$ is approachable.

Our main tool to decide whether a given set is approachable is the following theorem of \cite{mannor2009approachability}, which runs in time exponential in $d$ (but polynomial in all other parameters) and approximately decides whether a set is approachable.

\begin{theorem}[~\cite{mannor2009approachability}]
\label{thm:tsitsiklis}
    There is an algorithm which, given a candidate menu $\M \subseteq \Delta_{mn}$ (as an explicit polytope with a polynomial number of defining faces) and a precision parameter $\delta$, runs in time $(mnk/\delta)^{O(d)}$ and correctly either:
    \begin{itemize}
        \item Outputs that the set $\M^{\delta}$ is approachable
        \item Outputs that $\M$ is not approachable, along with a $y \in \Delta_{n}$ certifying that $\M$ is not response satisfiable (i.e., a $y$ s.t. $u(x, y) \not\in \M$ for all $x \in \Delta_n$).
    \end{itemize}
\end{theorem}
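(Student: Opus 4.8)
The statement is essentially the approachability-testing algorithm of \cite{mannor2009approachability} specialized to the vector-valued game at hand, so I would prove it by sketching that specialization. The starting point is Theorem~\ref{thm:blackwell}: a closed convex $\M$ is approachable iff it is response-satisfiable, i.e. iff for every $y \in \Delta_n$ there is an $x \in \Delta_m$ with $u(x,y) \in \M$. For a fixed $y$, the set $\{u(x,y) : x \in \Delta_m\}$ is a compact convex polytope (a linear image of the simplex), so it meets $\M$ iff it is not strictly separated from $\M$ by a hyperplane; consequently $\M$ is response-satisfiable iff for every unit vector $a \in S^{d-1}$,
\[
g(a) \;\le\; h_{\M}(a),
\]
where $g(a) := \max_{y \in \Delta_n} \min_{x \in \Delta_m} \langle a, u(x,y)\rangle$ and $h_{\M}(a) := \max_{v \in \M} \langle a, v\rangle$. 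Here $g(a)$ is the value of a zero-sum matrix game (since $\langle a, u(x,y)\rangle$ is bilinear in $(x,y)$), so $g(a)$ together with an optimal maximizing strategy $y$ is computable by a linear program of size $\poly(m,n)$; and $h_{\M}(a)$, the support function of the explicitly given polytope $\M$, is computable by a linear program with $O(mn + k)$ constraints (the simplex constraints together with the $k$ halfspaces defining $\M = \cC(\Phi)$). The only obstruction to checking ``$g(a) \le h_{\M}(a)$ for all $a$'' directly is the continuum of directions.

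To handle this, I would take a $\delta$-net $N_\delta \subseteq S^{d-1}$ of size $(1/\delta)^{O(d)}$, compute $g(a)$, $h_{\M}(a)$ (and the optimal $y$ witnessing $g(a)$) for each $a \in N_\delta$ via the two linear programs above, and branch: if $g(a) \le h_{\M}(a)$ for every net direction, output that $\M^\delta$ is approachable; otherwise take any net direction $a$ with $g(a) > h_{\M}(a)$, let $y^\star$ be the associated optimal maximizer, and output that $\M$ is not approachable with certificate $y^\star$. The second branch is correct outright: $\min_x \langle a, u(x,y^\star)\rangle = g(a) > h_{\M}(a) = \max_{v\in\M}\langle a, v\rangle$, so $u(x,y^\star) \notin \M$ for every $x$, which by Theorem~\ref{thm:blackwell} certifies non-approachability. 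For the first branch, both $g$ and $h_{\M}$ are $1$-Lipschitz in $a$ (because $\|u(x,y)\| \le 1$ and $\|v\| \le 1$ for all $v \in \Delta_{mn}$), so $g(a) \le h_{\M}(a) + O(\delta)$ for \emph{every} $a \in S^{d-1}$; by the separation characterization this says precisely that $\M^{O(\delta)}$ is response-satisfiable, hence approachable, and running the whole procedure with the net parameter scaled down by a constant replaces $O(\delta)$ with $\delta$. The cost is $(1/\delta)^{O(d)}$ net points times $\poly(m,n,k)$ per point, i.e. $(mnk/\delta)^{O(d)}$.

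The step I expect to be the main obstacle is the Lipschitz bookkeeping in the first branch: one must pin down the Lipschitz constants of $g$ and $h_{\M}$ explicitly so that a satisfied inequality on the net entails an inequality holding everywhere on $S^{d-1}$ with an additive loss of exactly the $O(\delta)$ that appears in the ``$\M^\delta$'' of the theorem, ensuring the two output branches leave no gap. Secondary points to verify are that a $\delta$-net of $S^{d-1}$ of size $(1/\delta)^{O(d)}$ can be constructed and enumerated within the claimed time, and that the game-value and support-function linear programs really are of size polynomial in $m$, $n$, and $k$ when $\M$ is presented as $\cC(\Phi)$ with its $O(mn+k)$ defining halfspaces.
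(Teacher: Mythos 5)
The paper does not give a proof of this theorem; it is stated and used as a black-box result imported from \cite{mannor2009approachability}, so there is no proof in the paper to compare your proposal against. That said, your reconstruction is a substantively correct sketch of the kind of argument that underlies such a result: Blackwell's equivalence of approachability and response-satisfiability (Theorem~\ref{thm:blackwell}), the directional reformulation that $\M$ is response-satisfiable iff $g(a) \leq h_{\M}(a)$ for every unit direction $a$ (with $g(a) = \max_y \min_x \langle a, u(x,y)\rangle$ a zero-sum game value and $h_{\M}$ a support function, both computable by linear programs of size $\poly(m,n,k)$), a $\delta$-net over $S^{d-1}$ of size $(1/\delta)^{O(d)}$, and Lipschitz continuity to pass from net directions to all directions at an additive $O(\delta)$ loss, which is exactly what produces the $\M^{\delta}$ slack in the theorem's first branch. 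The second branch is immediate from the separation witness, as you observe. The one caveat worth pinning down is precisely the bookkeeping you flag: the Lipschitz constants of $g$ and $h_{\M}$ in $a$ are $\max_{x,y}\|u(x,y)\|$ and $\max_{v\in\M}\|v\|$, which equal $1$ only in the case $u(x,y) = x\otimes y$ with $\M \subseteq \Delta_{mn}$. In the $k$-dimensional instantiation used in Theorem~\ref{thm:small_k} the payoff $u(x,y) = (u_{O,1}(x,y),\dots,u_{O,k}(x,y))$ can have norm up to $\sqrt{k}$, and the candidate set $\cU(\Phi)$ is a translated orthant, so one must first intersect with a ball containing the range of $u$ before the support-function LP and the Lipschitz estimate make sense. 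Both adjustments are routine and stay within the $(mnk/\delta)^{O(d)}$ budget, so your sketch is sound; it just needs these constants made explicit to close the first branch without a hidden constant-factor mismatch against the $\delta$ in the statement.
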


To use Theorem \ref{thm:tsitsiklis} to construct our approximate separation oracle, we will need two auxiliary lemmas. First, we will need to relate the property of (some expansion of) $\cC(\Phi)$ being approachable to the property of $\Phi$ belonging to $\cS(\eps)$. 

\begin{lemma}\label{lem:csp-expand}
Let $\Phi$ be a CSP assignment. If $\cC(\Phi)^{\eps}$ is a valid menu, then $\Phi \in \cS(\eps\sqrt{mn})$. 
\end{lemma}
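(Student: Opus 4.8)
The plan is to show that whenever $\cC(\Phi)^{\eps}$ is a valid menu, then $\cC(\Phi, \eps\sqrt{mn})$ is also a valid menu, which is exactly the statement that $\Phi \in \cS(\eps\sqrt{mn})$. The key observation is a geometric comparison between the two ``fattened'' versions of the candidate menu $\cC(\Phi)$: the Euclidean $\eps$-expansion $\cC(\Phi)^{\eps}$ (which adds all points within Euclidean distance $\eps$), and the ``utility-relaxed'' version $\cC(\Phi, \eps\sqrt{mn})$ (which relaxes each linear constraint $u_{O,i}(\csp) \le u_{O,i}(\csp_i)$ by an additive $\eps\sqrt{mn}$). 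I claim $\cC(\Phi)^{\eps} \subseteq \cC(\Phi, \eps\sqrt{mn})$; combined with upward-closure of menus under inclusion (Corollary~\ref{corollary:upwards_closed}), this immediately gives that $\cC(\Phi, \eps\sqrt{mn})$ is a valid menu.

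To prove the inclusion, take any $\csp \in \cC(\Phi)^{\eps}$, so there is some $\csp' \in \cC(\Phi)$ with $\norm{\csp - \csp'} \le \eps$. For each type $i$, we have $u_{O,i}(\csp') \le u_{O,i}(\csp_i)$ by definition of $\cC(\Phi)$. Now $u_{O,i}$ is a linear functional on $\Rset^{mn}$ with coefficients $u_{O,i}(a,b) \in [-1,1]$, so the vector of coefficients has Euclidean norm at most $\sqrt{mn}$. Therefore by Cauchy--Schwarz, $u_{O,i}(\csp) - u_{O,i}(\csp') = \langle \text{coeffs}, \csp - \csp'\rangle \le \sqrt{mn}\cdot\eps$, and hence $u_{O,i}(\csp) \le u_{O,i}(\csp_i) + \eps\sqrt{mn}$. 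Since this holds for all $i \in [k]$, we get $\csp \in \cC(\Phi, \eps\sqrt{mn})$, establishing the inclusion.

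I don't expect a serious obstacle here — the only thing to be careful about is the direction of the relaxation (expanding the feasible set corresponds to relaxing the constraints, and a larger set that contains a valid menu is still valid by upward closure) and the norm bound on the coefficient vector of $u_{O,i}$ (using $|u_{O,i}(a,b)| \le 1$ entrywise over $mn$ coordinates). One should also note that $\cC(\Phi)^{\eps}$ and $\cC(\Phi, \eps\sqrt{mn})$ are both intersected with $\Delta_{mn}$ in the relevant definitions, but since we are comparing them as subsets of $\Delta_{mn}$ this causes no issue. The mildly delicate point worth a sentence is that $\cC(\Phi)^{\eps}$ as literally defined is an expansion in all of $\Rset^{mn}$, so one restricts attention to $\cC(\Phi)^{\eps} \cap \Delta_{mn}$ when invoking the menu characterization; the same argument goes through verbatim.
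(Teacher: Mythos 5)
Your proof is correct and takes essentially the same approach as the paper: show the containment $\cC(\Phi)^{\eps} \subseteq \cC(\Phi, \eps\sqrt{mn})$ via a $\sqrt{mn}$ bound on how the linear functionals $u_{O,i}$ vary over an $\eps$-ball, then invoke upward closure of valid menus. The only difference is that you spell out the Cauchy--Schwarz step explicitly, which the paper leaves implicit.
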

\begin{proof}
Note that $\cC(\Phi)^{\eps} \subseteq \cC(\Phi, \eps\sqrt{mn})$, since if $\norm{\csp - \csp'} \leq \eps$, it must be the case that $|u_{O, i}(\csp) - u_{O, i}(\csp')| \leq \sqrt{mn}$ (since all coefficients of $u_{O, i}$ are bounded in $[-1, 1]$). Since menus are upwards closed under inclusion, this means that $\cC(\Phi, \eps\sqrt{mn})$ is a valid menu, and therefore $\Phi \in \cS(\eps\sqrt{mn})$.
\end{proof}

Secondly, we will show that a certificate of non-response-satisfiability can be used to construct a separating hyperplane between $\Phi$ and $\cS$.

\begin{lemma}\label{lem:separate}
Assume we are given a CSP assignment $\Phi = (\csp_1, \csp_2, \dots, \csp_k)$ and a $y \in \Delta_n$ such that $x \otimes y \not\in \cC(\Phi)$ for any $x \in \Delta_m$. Then it is possible (in $\poly(m, n, k)$ time) to construct a $kmn$-dimensional hyperplane separating $\Phi$ from $\cS$.
\end{lemma}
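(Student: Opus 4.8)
The plan is to turn the failure of response-satisfiability at the single point $y$ into a linear inequality on the CSP assignment that holds for every $\Phi' \in \cS$ but fails at $\Phi$. The geometric picture: the hypothesis says the affine ``slice'' $\{x \otimes y \mid x \in \Delta_m\}$ misses the convex set $\cC(\Phi)$. Since $\cC(\Phi) = \{\csp \in \Delta_{mn} \mid u_{O,i}(\csp) \le u_{O,i}(\csp_i)\ \forall i\}$, for each fixed $x$ the point $x \otimes y$ violating membership means some constraint $i$ is violated, i.e. $u_{O,i}(x \otimes y) > u_{O,i}(\csp_i)$. First I would argue that there is a \emph{single} convex combination $\lambda \in \Delta_k$ (a fixed weighting over the optimizer types) such that $\sum_i \lambda_i\, u_{O,i}(x \otimes y) > \sum_i \lambda_i\, u_{O,i}(\csp_i)$ for \emph{all} $x \in \Delta_m$ simultaneously — this is exactly a minimax / separating-hyperplane statement: consider the zero-sum-type game where the learner picks $x \in \Delta_m$ to minimize $\max_i \big(u_{O,i}(x\otimes y) - u_{O,i}(\csp_i)\big)$; the hypothesis says this value is strictly positive, and by the minimax theorem the type-player has a fixed mixed strategy $\lambda$ achieving a strictly positive value against every $x$. (One should be a little careful that $\csp_i$ is fixed here so $u_{O,i}(\csp_i)$ is a constant, and $x \mapsto u_{O,i}(x\otimes y)$ is linear in $x$; the payoff of this auxiliary game is bilinear in $(x,\lambda)$, so von Neumann applies.)

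Next, let $c_\lambda := \sum_i \lambda_i\, u_{O,i}(\cdot\,\otimes y): \Delta_m$-slice $\to \Rset$, but more usefully think of $\lambda$ as defining, for each type $i$, the linear functional $\csp \mapsto u_{O,i}(\csp)$ on $\Rset^{mn}$, and hence a linear functional on $\Phi = (\csp_1,\dots,\csp_k) \in \Rset^{kmn}$ given by $\ell(\Phi) := \sum_i \lambda_i\, u_{O,i}(\csp_i)$. The separating hyperplane will be $\ell(\Phi) = \beta$, where $\beta := \min_{x \in \Delta_m} \sum_i \lambda_i\, u_{O,i}(x \otimes y)$ is the value of the auxiliary game. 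By construction $\ell(\Phi) < \beta$ at the given assignment $\Phi$ (this is just the ``strictly positive value'' statement rewritten). It remains to check $\ell(\Phi') \ge \beta$ for every $\Phi' \in \cS$: if $\Phi' = (\csp'_1,\dots,\csp'_k)$ lies in $\cS$ then $\cC(\Phi')$ is a valid menu, so by Theorem~\ref{thm:characterization} there is some $x' \in \Delta_m$ with $x' \otimes y \in \cC(\Phi')$, i.e. $u_{O,i}(x'\otimes y) \le u_{O,i}(\csp'_i)$ for all $i$; taking the $\lambda$-weighted sum gives $\beta \le \sum_i \lambda_i\, u_{O,i}(x' \otimes y) \le \sum_i \lambda_i\, u_{O,i}(\csp'_i) = \ell(\Phi')$, as desired. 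So $\{\ell = \beta\}$ genuinely separates $\Phi$ from $\cS$.

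For the computational claim, everything is polynomial in $m,n,k$: the auxiliary game is an $m \times k$ bilinear (indeed, essentially matrix) game whose payoff entries are the numbers $u_{O,i}(e_a \otimes y) - u_{O,i}(\csp_i)$ for $a \in [m]$, $i \in [k]$, each computable in $O(mn)$ time; solving it for its value $\beta$ and the optimal mixed strategy $\lambda$ is a linear program of size $\poly(m,k)$. From $\lambda$ and $\beta$ one writes down the hyperplane coefficients on $\Rset^{kmn}$ directly (the coefficient block for $\csp_i$ is $\lambda_i\, u_{O,i}$, viewed as a vector in $\Rset^{mn}$). The main obstacle I expect is the first step: being careful about the direction of the minimax and the strictness — the hypothesis gives ``for all $x$, $\exists i$ with strict violation,'' and we need to upgrade to ``$\exists \lambda$, for all $x$, strict violation in the $\lambda$-average.'' The compactness of $\Delta_m$ and continuity guarantee the max over $i$ of the violation is bounded below by some $\eta > 0$ uniformly in $x$, and then von Neumann's theorem converts this into the fixed $\lambda$; I would also double-check the edge case where $y$ is such that the slice is degenerate, but $\Delta_m$ is nonempty so the argument goes through unchanged.
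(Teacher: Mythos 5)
Your proof is correct and takes essentially the same route as the paper: both apply a duality argument (you via von Neumann's minimax theorem on the auxiliary $m \times k$ game, the paper via LP duality / Farkas) to extract nonnegative weights $\lambda$ (resp.\ $h$) over the type constraints from the disjointness of $\Delta_m \otimes y$ and $\cC(\Phi)$, and then lift this to the hyperplane on $\Rset^{kmn}$ with coefficient block $\lambda_i u_{O,i}$ for coordinate $\csp'_i$. The only cosmetic difference is the choice of threshold constant — you use $\beta = \min_{x}\sum_i \lambda_i u_{O,i}(x\otimes y)$ (placing $\Phi$ strictly outside the halfspace containing $\cS$), whereas the paper uses $\sum_i h_i u_{O,i}(\csp_i)$ (placing $\Phi$ on its boundary) — and both yield valid separating hyperplanes.
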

\begin{proof}
Let $\cC_{\text{sep}}$ denote the convex set $\Delta_{m} \otimes y$. By assumption, we have that $\cC_{\text{sep}}$ and $\cC(\Phi)$ are disjoint convex sets. By the theory of LP duality, there must be a hyperplane separating $\cC(\Phi)$ from $\cC_{\text{sep}}$ that can be written as a linear combination of the defining halfspaces of $\cC(\Phi)$. Since $\cC(\Phi)$ is defined by halfspaces of the form $u_{O, i}(\csp) \leq u_{O, i}(\csp_i)$, this means that there is a vector $h \in \Rset_{\geq 0}^k$ such that the inequality

\begin{equation}\label{eq:sep1}
\sum_{i=1}^{k} h_{i}u_{O, i}(\csp) \leq \sum_{i=1}^{k} h_{i}u_{O, i}(\csp_i)
\end{equation}

\noindent
holds for all $\csp \in \cC(\Phi)$ and is violated for every $\csp \in \cC_{\text{sep}}$.

Now consider an arbitrary CSP assignment $\Phi' = (\csp'_1, \csp'_2, \dots, \csp'_k) \in \Delta_{mn}^k$. Consider the half-space constraint (in $kmn$-dimensional space) implied by the inequality

\begin{equation}\label{eq:sep2}
\sum_{i=1}^{k} h_{i}u_{O, i}(\csp'_i) \leq \sum_{i=1}^{k} h_{i}u_{O, i}(\csp_i)
\end{equation}

Note that $\Phi$ clearly satisfies this constraint. On the other hand, every $\Phi' \in \cS$ must have the property that $\cC(\Phi')$ contains an element of the form $x \otimes y$ (for some $x$), and thus must have non-empty intersection with $\cC_{\text{sep}}$. But any CSP $\csp \in \cC(\Phi')$ must satisfy 

\begin{equation}\label{eq:sep3}
\sum_{i=1}^{k} h_{i}u_{O, i}(\csp) \leq \sum_{i=1}^{k} h_{i}u_{O, i}(\csp'_i)
\end{equation}

\noindent
and therefore (by \eqref{eq:sep2}) also satisfies inequality \eqref{eq:sep1}, and thus cannot lie in $\cC_{\text{sep}}$. It follows that the hyperplane given by \eqref{eq:sep2} is a hyperplane separating $\Phi$ from $\cS$.
\end{proof}

We are now ready to construct our first separation oracle (that is efficient for small values of $mn$).

\begin{theorem}\label{thm:small_mn}
There exists an algorithm that takes as input a CSP assignment $\Phi$ and a precision parameter $\delta$, runs in time $(mnk/\delta)^{O(mn)}$ and either outputs that $\Phi \in \cS(\delta)$ or returns a hyperplane separating $\Phi$ from $\cS$.
\end{theorem}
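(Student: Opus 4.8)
The plan is to reduce the test ``$\Phi \in \cS$?'' to a single call of the Mannor--Tsitsiklis approachability tester (Theorem~\ref{thm:tsitsiklis}), applied to the candidate menu $\cC(\Phi)$ in the Blackwell instance with $d = mn$ and vector-valued payoff $u(x,y) = x \otimes y$. The first thing I would verify is that $\cC(\Phi)$ is presented in the form Theorem~\ref{thm:tsitsiklis} requires: by definition $\cC(\Phi) = \{\csp \in \Delta_{mn} \mid u_{O,i}(\csp) \le u_{O,i}(\csp_i)\ \forall i \in [k]\}$, which is an explicit polytope with $k + mn + 1$ facets (the $k$ optimizer constraints together with the nonnegativity and normalization constraints of $\Delta_{mn}$), so the number of defining faces is polynomial in $m, n, k$. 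As noted in the excerpt, validity of a set $\cM \subseteq \Delta_{mn}$ is equivalent to its response-satisfiability, hence to its approachability, in this instance (a restatement of Theorems~\ref{thm:characterization} and~\ref{thm:blackwell}); consequently $\Phi \in \cS$ iff $\cC(\Phi)$ is approachable.

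Concretely, I would run the algorithm of Theorem~\ref{thm:tsitsiklis} on $\cM = \cC(\Phi)$ with precision parameter $\delta' := \delta/\sqrt{mn}$. Since $d = mn$, this takes time $(mnk/\delta')^{O(mn)}$, which is $(mnk/\delta)^{O(mn)}$ after absorbing the polynomial overhead from the $\sqrt{mn}$ factor into the constant in the exponent. There are two outcomes. If the tester reports that $\cC(\Phi)^{\delta'}$ is approachable, then $\cC(\Phi)^{\delta'}$ is a valid menu, and Lemma~\ref{lem:csp-expand} (with $\eps = \delta'$) gives $\Phi \in \cS(\delta' \sqrt{mn}) = \cS(\delta)$, which is exactly the desired output. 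If instead the tester reports that $\cC(\Phi)$ is not approachable and returns a $y \in \Delta_n$ with $x \otimes y \notin \cC(\Phi)$ for every $x \in \Delta_m$, then this $y$ is precisely the hypothesis of Lemma~\ref{lem:separate}, which in $\poly(m, n, k)$ time produces a $kmn$-dimensional hyperplane separating $\Phi$ from $\cS$; I would return this hyperplane. In both branches the running time is dominated by the single call to Theorem~\ref{thm:tsitsiklis}, giving the claimed $(mnk/\delta)^{O(mn)}$ bound.

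The step I expect to require the most care is matching the two different notions of ``approximate'' at play. The approachability tester only certifies approachability of the \emph{expanded} set $\cC(\Phi)^{\delta'}$, not of $\cC(\Phi)$ itself, so in the ``yes'' branch one can only conclude $\Phi \in \cS(\delta)$ rather than $\Phi \in \cS$ --- but this is exactly the guarantee demanded of an approximate separation oracle, so no strengthening is needed. In the ``no'' branch, by contrast, the certificate $y$ is a genuine (slack-free) certificate of non-response-satisfiability for $\cC(\Phi)$ itself, so Lemma~\ref{lem:separate} applies verbatim and the resulting hyperplane genuinely separates $\Phi$ from all of $\cS$. Choosing $\delta' = \delta/\sqrt{mn}$ rather than $\delta$ is what reconciles the Euclidean expansion returned by the tester with the utility-based relaxation $\cS(\delta)$: Lemma~\ref{lem:csp-expand} is precisely the bookkeeping lemma that converts a Euclidean $\delta'$-expansion of $\cC(\Phi)$ into a $\delta'\sqrt{mn}$-relaxation of the optimizer-utility constraints defining membership in $\cS$.
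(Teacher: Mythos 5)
Your proposal is correct and follows essentially the same route as the paper: run the Mannor--Tsitsiklis tester on $\cC(\Phi)$ with precision $\delta' = \delta/\sqrt{mn}$, invoke Lemma~\ref{lem:csp-expand} in the ``approachable'' branch to conclude $\Phi \in \cS(\delta)$, and invoke Lemma~\ref{lem:separate} on the certificate $y$ in the ``not approachable'' branch. The extra care you take to check that $\cC(\Phi)$ has polynomially many facets and to reconcile the Euclidean expansion with the utility-based relaxation is a welcome elaboration but not a departure from the paper's argument.
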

\begin{proof}
We run the algorithm of \cite{mannor2009approachability} (Theorem \ref{thm:tsitsiklis}) on the set $\cC(\Phi)$ with precision $\delta' = \delta/\sqrt{mn}$. If this algorithm returns that $\cC(\Phi)^{\delta'}$ is approachable, then by Lemma \ref{lem:csp-expand}, we are guaranteed that $\Phi \in \cS(\delta)$, so we return that. On the other hand, if this algorithm returns a $y \in \Delta_{n}$ certifying that $\cC(\Phi)$ is not response-satisfiable, we follow the procedure in Lemma \ref{lem:separate} to construct a hyperplane separating $\Phi$ from $\cS$ and return that. The time of this procedure is dominated by the $(mnk/\delta)^{O(mn)}$ time it takes to run the procedure of \cite{mannor2009approachability}.
\end{proof}

\subsubsection{Small support}

To handle the case when $k$ is small, we consider a slightly different reduction to the problem of Blackwell approachability with the $k$-dimensional bilinear payoff function $u:\Delta_{m} \times \Delta_{n}: \Rset^{k}$ defined via $u(x, y) = (u_{O, 1}(x, y), u_{O, 2}(x, y), \dots, u_{O, k}(x, y))$. The key observation is that for any CSP $\csp \in \Delta_{mn}$ the $k$ values in $u(\csp)$ are sufficient for determining whether $\csp$ belongs to the candidate set $\cC(\Phi)$ for a given CSP assignment $\Phi$. In particular, for any CSP assignment $\Phi$, define the \emph{candidate utility set} $\cU(\Phi)$ via

\[\cU(\Phi) := \{u \in \Rset^k \mid u_i \leq u_{O, i}(\csp_i) \, \forall i \in [k]\}.\]

\noindent
Likewise, define the relaxed candidate utility set $\cU(\Phi, \eps)$ via

\[\cU(\Phi) := \{u \in \Rset^k \mid u_i \leq u_{O, i}(\csp_i) + \eps \, \forall i \in [k]\}.\]

The following lemma formally establishes the above analogy between the sets $\cU$ and the menus $\cC$.

\begin{lemma}\label{lem:util-analogy}
Fix a CSP assignment $\Phi$, a CSP $\csp$, and any $\eps \geq 0$. Then $\csp \in \cC(\Phi, \eps)$ iff $u(\csp) \in \cU(\Phi, \eps)$. Moreover, $\cC(\Phi, \eps)$ is a valid menu iff $\cU(\Phi, \eps)$ is approachable.  
\end{lemma}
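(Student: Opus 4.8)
The plan is to prove both claims by unwinding the definitions and then invoking the characterization of valid menus (Theorem~\ref{thm:characterization}) together with the Blackwell approachability theorem (Theorem~\ref{thm:blackwell}). The whole argument is essentially a change of coordinates.

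For the first claim, I would observe that membership in $\cC(\Phi,\eps)$ and membership in $\cU(\Phi,\eps)$ are literally the same system of inequalities, read through the coordinate map $u$. By definition, $\csp \in \cC(\Phi,\eps)$ iff $u_{O,i}(\csp) \le u_{O,i}(\csp_i) + \eps$ for every $i \in [k]$; and since $u(\csp) = (u_{O,1}(\csp), \dots, u_{O,k}(\csp))$ by definition of the vector-valued payoff, this is exactly the condition $u(\csp)_i \le u_{O,i}(\csp_i) + \eps$ for all $i$, i.e.\ $u(\csp) \in \cU(\Phi,\eps)$. (Here I read the displayed definition of $\cU(\Phi,\eps)$ as the set of $u \in \Rset^k$ with $u_i \le u_{O,i}(\csp_i) + \eps$.)

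For the second claim, the key bridge is that the vector-valued payoff $u$ restricted to product strategies agrees with the linear extension of each $u_{O,i}$ to $\Delta_{mn}$: since $u_{O,i}$ is bilinear, $u_{O,i}(x,y) = \sum_{a,b} x_a y_b\, u_{O,i}(a,b) = u_{O,i}(x \otimes y)$, so $u(x,y) = u(x \otimes y)$ for all $x \in \Delta_m$, $y \in \Delta_n$. Now $\cC(\Phi,\eps)$ is a closed convex subset of $\Delta_{mn}$, so by Theorem~\ref{thm:characterization} it is a valid menu iff for every $y \in \Delta_n$ there is some $x \in \Delta_m$ with $x \otimes y \in \cC(\Phi,\eps)$. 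Applying the first claim to $\csp = x \otimes y$, this is equivalent to: for every $y \in \Delta_n$ there is some $x \in \Delta_m$ with $u(x,y) \in \cU(\Phi,\eps)$ — precisely the statement that $\cU(\Phi,\eps)$ (a closed convex subset of $\Rset^k$) is response-satisfiable with respect to $u$. By the Blackwell approachability theorem, a closed convex set is response-satisfiable iff it is approachable, which yields that $\cC(\Phi,\eps)$ is a valid menu iff $\cU(\Phi,\eps)$ is approachable.

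I do not anticipate a substantive obstacle. The only points requiring care are (i) checking that $\cC(\Phi,\eps)$ and $\cU(\Phi,\eps)$ are genuinely closed and convex so that Theorems~\ref{thm:characterization} and~\ref{thm:blackwell} apply (both are finite intersections of half-spaces, intersected with a simplex in the first case), and (ii) the direction of Theorem~\ref{thm:blackwell} used: as stated it gives only ``response-satisfiable $\Rightarrow$ approachable,'' so I would additionally invoke the standard converse — if $\cU(\Phi,\eps)$ is not response-satisfiable, there is a $y$ with $u(\Delta_m, y)$ (a compact convex polytope) disjoint from the closed convex set $\cU(\Phi,\eps)$, and a strictly separating hyperplane then shows the optimizer can keep the time-averaged payoff bounded away from $\cU(\Phi,\eps)$ simply by always playing $y$, so $\cU(\Phi,\eps)$ is not approachable. (Alternatively, one can just cite the standard iff form of Blackwell's theorem, as the excerpt itself does when relating menus to approachability.)
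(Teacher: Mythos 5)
Your proof takes essentially the same route as the paper's: unwind the definitions coordinate-by-coordinate for the first claim, and for the second apply the response-satisfiability characterization of valid menus (Theorem~\ref{thm:characterization}) together with Blackwell's theorem, noting that $u(x,y)=u(x\otimes y)$ by bilinearity. Your added care about the converse direction of Theorem~\ref{thm:blackwell} (the paper implicitly uses the iff form; you supply the separating-hyperplane argument) is a correct and welcome fill-in, but otherwise the two arguments coincide.
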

\begin{proof}
If $\csp \in \cC(\Phi, \eps)$, then $u(\csp)_i = u_{O, i}(\csp) \leq u_{O, i}(\csp_i) + \eps$, and therefore $u(\csp)$ satisfies all the constraints of $\cU(\Phi, \eps)$. Likewise, if $u(\csp) \in \cU(\Phi, \eps)$, then $\csp$ satisfies all the constraints of $\cC(\Phi, \eps)$.

Recall that $\cC(\Phi, \eps)$ is a valid menu iff it is response-satisfiable: for any $y \in \Delta_n$ there exists an $x \in \Delta_m$ such that $x \otimes y \in \cC(\Phi, \eps)$. But $x \otimes y \in \cC(\Phi, \eps)$ iff $u(x, y) \in \cU(\Phi, \eps)$, and so it follows that $\cU(\Phi, \eps)$ is response-satisfiable (and thus approachable) iff $\cC(\Phi, \eps)$ is.  
\end{proof}

\noindent
We additionally have the following analogue of Lemma \ref{lem:csp-expand} for expansions of $\cU(\Phi)$.

\begin{lemma}\label{lem:csp-expand-util}
Let $\Phi$ be a CSP assignment. If $\cU(\Phi)^{\eps}$ is an approachable set, then $\Phi \in \cS(\eps)$. 
\end{lemma}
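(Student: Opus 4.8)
The plan is to mirror the argument of Lemma~\ref{lem:csp-expand}, but working in the $k$-dimensional utility space rather than the $mn$-dimensional CSP space. The key point is that the only ``geometry'' we need to relate $\cU(\Phi)^\eps$ to the relaxed sets $\cU(\Phi, \cdot)$ is that an $\eps$-ball around a point in $\Rset^k$, in Euclidean norm, already lies inside the coordinatewise $\eps$-expansion; in fact the containment $\cU(\Phi)^{\eps} \subseteq \cU(\Phi, \eps)$ holds with no dimension-dependent blowup, because the defining constraints of $\cU(\Phi)$ are simply $u_i \le u_{O,i}(\csp_i)$, i.e.\ each is a single-coordinate halfspace whose normal vector has unit Euclidean norm. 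This is exactly why the statement reads $\Phi \in \cS(\eps)$ here rather than $\Phi \in \cS(\eps\sqrt{mn})$ as in Lemma~\ref{lem:csp-expand}.

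Concretely, first I would observe that if $u \in \cU(\Phi)^{\eps}$ then there is some $u' \in \cU(\Phi)$ with $\norm{u - u'} \le \eps$; since $u'_i \le u_{O,i}(\csp_i)$ and $|u_i - u'_i| \le \norm{u-u'} \le \eps$, we get $u_i \le u_{O,i}(\csp_i) + \eps$ for every $i \in [k]$, i.e.\ $u \in \cU(\Phi, \eps)$. Hence $\cU(\Phi)^{\eps} \subseteq \cU(\Phi, \eps)$. Next, approachable sets are upward closed under inclusion in the same way menus are (a superset of a response-satisfiable set is response-satisfiable, since the same $x$ witnessing $u(x,y)$ in the smaller set also witnesses it in the larger), so if $\cU(\Phi)^{\eps}$ is approachable then $\cU(\Phi, \eps)$ is approachable. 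Finally, apply Lemma~\ref{lem:util-analogy}: $\cU(\Phi, \eps)$ approachable is equivalent to $\cC(\Phi, \eps)$ being a valid menu, which by the definition of $\cS(\eps)$ is exactly the statement $\Phi \in \cS(\eps)$.

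I don't anticipate a real obstacle here; the lemma is essentially a bookkeeping step packaging together the coordinatewise norm bound, upward-closure of approachability, and the $\cU$--$\cC$ dictionary of Lemma~\ref{lem:util-analogy}. The one place to be slightly careful is the upward-closure claim for approachable sets: if it has not been stated explicitly earlier I would justify it in one line via response-satisfiability (Theorem~\ref{thm:blackwell}) rather than invoking Corollary~\ref{corollary:upwards_closed}, which is phrased for menus in $\Delta_{mn}$ rather than for arbitrary approachable sets in $\Rset^k$. With that in hand the proof is three short sentences.
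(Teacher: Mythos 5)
Your proposal is correct and takes essentially the same approach as the paper: show the containment $\cU(\Phi)^{\eps} \subseteq \cU(\Phi,\eps)$ via a coordinatewise bound, invoke upward closure of approachability, and finish with Lemma~\ref{lem:util-analogy}. The only difference is that you make the upward-closure step explicit (with the correct justification via response-satisfiability), whereas the paper leaves it implicit in the phrase ``since if $\cU(\Phi,\eps)$ is approachable then \dots''.
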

\begin{proof}
It suffices to show that $\cU(\Phi)^{\eps} \subseteq \cU(\Phi, \eps)$, since if $\cU(\Phi, \eps)$ is approachable then $\cC(\Phi, \eps)$ is a valid menu, and therefore $\Phi \in \cS(\eps)$. Let $u' \in \cU(\Phi)^{\eps}$ and $u \in \cU(\Phi)$ be so that $\norm{u - u'} \leq \eps$. But then $|u_i - u'_i| \leq \eps$ for any coordinate $i \in [k]$, and it follows that $u' \in \cU(\Phi, \eps)$. 
\end{proof}

We can now construct our second separation oracle.

\begin{theorem}\label{thm:small_k}
There exists an algorithm that takes as input a CSP assignment $\Phi$ and a precision parameter $\delta$, runs in time $(mnk/\delta)^{O(k)}$ and either outputs that $\Phi \in \cS(\delta)$ or returns a hyperplane separating $\Phi$ from $\cS$.
\end{theorem}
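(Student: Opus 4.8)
The plan is to mirror the proof of Theorem~\ref{thm:small_mn}, but to run the algorithm of \cite{mannor2009approachability} on the $k$-dimensional vector-valued game with bilinear payoff $u(x,y) = (u_{O,1}(x,y), \dots, u_{O,k}(x,y))$ and target set $\cU(\Phi)$ (intersected with $[-1,1]^k$ so that it is bounded — this does not change response-satisfiability, since every realized payoff $u(x,y)$ lies in $[-1,1]^k$). By Lemma~\ref{lem:util-analogy}, $\cU(\Phi)$ is approachable in this game iff $\cC(\Phi)$ is a valid menu, so this is exactly the membership test for $\cS$ we want, now carried out in dimension $d = k$ rather than $d = mn$.

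Concretely, given $\Phi$ and $\delta$, I would invoke Theorem~\ref{thm:tsitsiklis} on $\cU(\Phi)$ with precision parameter $\delta$. In the first case it certifies that $\cU(\Phi)^{\delta}$ is approachable; then Lemma~\ref{lem:csp-expand-util} (with $\eps = \delta$) gives $\Phi \in \cS(\delta)$ and we output that. Note that, unlike in Theorem~\ref{thm:small_mn}, no $\sqrt{mn}$ rescaling of the precision is needed here, since an $\ell_2$-expansion of $\cU(\Phi)$ by $\delta$ is already contained in $\cU(\Phi, \delta)$ coordinatewise. In the second case it returns a $y \in \Delta_n$ with $u(x,y) \notin \cU(\Phi)$ for all $x \in \Delta_m$; by the first half of Lemma~\ref{lem:util-analogy} this is equivalent to $x \otimes y \notin \cC(\Phi)$ for all $x \in \Delta_m$, which is precisely the hypothesis of Lemma~\ref{lem:separate}. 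Applying that lemma produces, in $\poly(m,n,k)$ time, a $kmn$-dimensional hyperplane separating $\Phi$ from $\cS$, which we output.

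For the running time, the call to Theorem~\ref{thm:tsitsiklis} dominates, taking $(mnk/\delta)^{O(d)} = (mnk/\delta)^{O(k)}$ time with $d = k$; the separating-hyperplane construction of Lemma~\ref{lem:separate} is polynomial and absorbed into this bound. The one place that needs care — the main obstacle — is verifying that Theorem~\ref{thm:tsitsiklis}, as quoted for the CSP payoff $x \otimes y$ and subsets of $\Delta_{mn}$, applies verbatim to the general $k$-dimensional bilinear payoff $u$ and the (truncated) polytope $\cU(\Phi)$; this amounts to the observation that the Mannor--Tsitsiklis procedure only uses that the payoff is bilinear and that the target set is a polytope with polynomially many facets expressible in terms of the $u_{O,i}$, both of which hold here. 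Everything else is a direct substitution of $\cU$ for $\cC$ and $k$ for $mn$ in the argument of Theorem~\ref{thm:small_mn}.
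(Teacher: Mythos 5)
Your proof is correct and follows essentially the same route as the paper: invoke the Mannor--Tsitsiklis procedure on $\cU(\Phi)$ with precision $\delta$, use Lemma~\ref{lem:csp-expand-util} to convert an approachability certificate for $\cU(\Phi)^\delta$ into membership in $\cS(\delta)$, and use Lemma~\ref{lem:util-analogy} together with Lemma~\ref{lem:separate} to turn a non-response-satisfiability witness $y$ into a separating hyperplane. You are somewhat more careful than the paper on two points the paper glosses over — truncating the unbounded set $\cU(\Phi)$ to $[-1,1]^k$ before handing it to the approachability tester, and noting that Theorem~\ref{thm:tsitsiklis} must be read as applying to general bilinear $d$-dimensional payoffs rather than only to the CSP payoff $x\otimes y$ — but these are justifications of steps the paper takes implicitly, not a different argument.
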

\begin{proof}
Similar to the proof of Theorem \ref{thm:small_mn}, we run the algorithm of \cite{mannor2009approachability} (Theorem \ref{thm:tsitsiklis}) on the set $\cU(\Phi)$ with precision $\delta$. If this algorithm returns that $\cU(\Phi)^{\delta}$ is approachable, then by Lemma \ref{lem:csp-expand-util}, we are guaranteed that $\Phi \in \cS(\delta)$, so we return that. On the other hand, if this algorithm returns a $y \in \Delta_{n}$ certifying that $\cU(\Phi)$ is not response-satisfiable, then $\cC(\Phi)$ is likewise not response-satisfiable (by Lemma \ref{lem:util-analogy}), and we can again follow the procedure in Lemma \ref{lem:separate} to construct a hyperplane separating $\Phi$ from $\cS$ and return that. The time of this procedure is dominated by the $(mnk/\delta)^{O(k)}$ time it takes to run the procedure of \cite{mannor2009approachability}.
\end{proof}

Finally, by combining Lemma \ref{lem:from-weak-membership} with the better of these two separation oracles, we can prove Theorem \ref{thm:main_result_without_nr}.

\begin{proof}[Proof of Theorem~\ref{thm:main_result_without_nr}]
If $mn < d$, run the reduction of Lemma \ref{lem:from-weak-membership} with the separation oracle defined in Theorem \ref{thm:small_mn}. Otherwise, run the same reduction but with the separation oracle defined in Theorem \ref{thm:small_k}. The total time complexity is at most $(mnk/\eps)^{O(\min(mn, k))}$, as desired.
\end{proof}

\subsection{Hardness of testing the validity of menus}

While our algorithm finds an optimal commitment efficiently when either $k$ or $mn$ is small, our procedure does not run in time polynomial in the number of actions and the number of types simultaneously. In this section we show that our algorithmic approach would require significant alterations to do so by demonstrating that it is in general hard to decide whether a given CSP assignment can correspond to a valid menu (i.e., decide membership in $\cS$).

\begin{theorem}
\label{thm:hardness}
The problem of deciding whether $\cC(\Phi)$ is a valid menu (given as input a CSP assignment $\Phi \in \Delta_{mn}^k$ and a collection of optimizer payoffs $u_{O, i}$) is NP-hard.
\end{theorem}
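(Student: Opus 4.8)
The plan is to reduce from a known NP-hard problem about the existence of a point satisfying a collection of bilinear inequalities — the natural candidate being the fact (due to \cite{conitzer2006computing}) that computing optimal commitments in Bayesian Stackelberg games is NP-hard, or alternatively a direct reduction from a combinatorial problem such as \textsc{3-SAT} or \textsc{Independent Set}. Recall that by Theorem~\ref{thm:characterization}, $\cC(\Phi)$ is a valid menu iff for every $y \in \Delta_n$ there exists an $x \in \Delta_m$ with $x \otimes y \in \cC(\Phi)$; since $\cC(\Phi) = \{\csp \mid u_{O,i}(\csp) \le u_{O,i}(\csp_i)\,\forall i\}$, this amounts to asking whether, for every optimizer mixed strategy $y$, the learner has a response $x$ that simultaneously keeps all $k$ linear functionals $u_{O,i}(x,y)$ below their thresholds $v_i := u_{O,i}(\csp_i)$. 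The key point is that this ``for all $y$, exists $x$'' quantifier structure is exactly the kind of statement that encodes a co-NP-style (hence, by complementation, NP-hard) condition, and the $k$ simultaneous constraints give us enough degrees of freedom to encode a SAT instance.

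The concrete approach I would take: First, observe that deciding \emph{non}-validity of $\cC(\Phi)$ is equivalent to asking whether there exists a $y \in \Delta_n$ such that the polytope $\{x \in \Delta_m \mid u_{O,i}(x,y) \le v_i \ \forall i\}$ is empty. By LP duality (Farkas' lemma), emptiness of this polytope for a fixed $y$ is itself an existential statement over dual multipliers; so non-validity is a $\Sigma_2$-looking statement, but we will choose the instance so that it collapses to NP-hardness. The cleanest route is to make $\BR_L$ and the $u_{O,i}$ encode clause-gadgets: let the learner's actions correspond (roughly) to truth assignments of variables or to ``which clause to satisfy,'' let the optimizer's actions $y$ correspond to picking a clause to challenge, and design the payoffs $u_{O,i}$ and thresholds $v_i$ so that $x \otimes y \in \cC(\Phi)$ is feasible iff the learner's chosen assignment satisfies the challenged clause. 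Then $\cC(\Phi)$ is a valid menu iff every clause is satisfiable by a common assignment, i.e., iff the formula is satisfiable — except one must be careful that the learner may play a \emph{mixed} $x$ and the optimizer a mixed $y$, so the gadget must be robust to convex combinations (this is where bilinearity helps: $u_{O,i}(x,y)$ is linear in each argument, so it suffices to control behavior on pure pairs and appeal to extreme points, as in the proof of Lemma~\ref{lem:S_is_convex}).

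I expect the main obstacle to be exactly this robustness-to-mixing issue: a naive pure-strategy gadget for SAT typically breaks when players are allowed to randomize, because a mixed $x$ can ``hedge'' across several partial assignments and a mixed $y$ can average challenges, so the feasibility region $\{x : u_{O,i}(x,y)\le v_i\}$ can be nonempty for a fractional $y$ even when no integral solution exists. Overcoming this will require engineering the payoff matrices so that the constraints are tight in a way that forces (near-)pure play — for instance by adding ``consistency'' constraints (extra optimizer types $i$) that penalize any $x$ that is not supported on a single coherent assignment, together with a matching-pennies-style component that forces the optimizer's challenge $y$ to effectively concentrate on its best pure response. An alternative that sidesteps some of this is to reduce directly from the NP-hardness of deciding whether a given polytope (presented by inequalities) is ``response-satisfiable'' for a bilinear payoff, which is morally the content of the hardness side behind Theorem~\ref{thm:tsitsiklis}; if such a statement is available off the shelf, the reduction to our setting (choosing $u$, $\Phi$, and the $v_i$ to realize an arbitrary such instance inside $\Delta_{mn}$) becomes routine, and one only needs to verify that the instances arising are of the special ``candidate menu'' form $\cC(\Phi)$ for an actual CSP assignment $\Phi$ — i.e., that the thresholds $v_i$ can be written as $u_{O,i}(\csp_i)$ for some $\csp_i \in \Delta_{mn}$, which is a matter of scaling and adding slack actions.
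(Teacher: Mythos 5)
Your proposal essentially hedges between two routes. Your primary route (a direct SAT/Independent-Set gadget) you yourself flag as having a serious unsolved difficulty -- robustness to mixed strategies -- and indeed this difficulty is real and you give no concrete way to resolve it, so as written that route is not a proof. Your ``alternative'' route, however, is exactly the paper's approach: the paper invokes a hardness result of \cite{mannor2009approachability} (stated as Theorem~\ref{thm:tsitsiklis_hardness}) that deciding whether the \emph{non-negative orthant} is approachable for a given bilinear vector-valued payoff $u:\Delta_m\times\Delta_n\to\Rset^d$ is NP-hard, and then reduces that to validity of $\cC(\Phi)$. You correctly guessed that such an off-the-shelf statement should exist and correctly identified the remaining technical obligation: showing that the relevant thresholds $v_i$ can actually be realized as $u_{O,i}(\csp_i)$ for genuine CSPs $\csp_i\in\Delta_{mn}$, so that the constructed set is a bona fide candidate menu $\cC(\Phi)$.

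Where your sketch is incomplete is precisely in discharging that obligation, and your suggested fix (``scaling and adding slack actions'') is vaguer than what the paper does and risks changing the game. The paper's resolution is cleaner and worth internalizing: set $k=d$ and $u_{O,i}:=-u_i$, so that $\cU(\Phi)$ becomes the orthant $\{u:u_i\le u_{O,i}(\csp_i)\}$; then for each coordinate $i$, either $u_i$ is already single-signed on all pure profiles (in which case that coordinate is trivial and can be dropped or the instance immediately decided), or there are pure profiles with $u_i\le 0$ and $u_i\ge 0$, and by bilinearity/convexity their convex combination gives a $\csp_i\in\Delta_{mn}$ with $u_{O,i}(\csp_i)=0$ exactly. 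This makes $\cU(\Phi)$ literally the negative orthant, so (via the equivalence you and the paper both note between validity of $\cC(\Phi)$ and approachability of $\cU(\Phi)$, i.e.\ Lemma~\ref{lem:util-analogy}) the question becomes ``is the non-negative orthant approachable for $u$,'' matching the known NP-hard problem with no rescaling or added actions. So: drop the SAT-gadget thread entirely, commit to your alternative route, and replace ``scaling and slack actions'' with the negation-plus-zero-CSP construction; that closes the gap.
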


The proof follows from a similar hardness result in~\cite{mannor2009approachability} about the hardness of testing approachability of the non-negative orthant. We state this result below.

\begin{theorem}[~\cite{mannor2009approachability}]
\label{thm:tsitsiklis_hardness}
    The problem of taking as input a vector-valued function $u: \Delta_{m} \times \Delta_{n} \rightarrow \Rset^{d}$ and deciding whether the non-negative orthant $[0, \infty)^{d}$ is approachable is NP-hard.
\end{theorem}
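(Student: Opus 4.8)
The plan is to first reformulate approachability of the orthant as a purely bilinear feasibility question, and then reduce a gap version of the max-clique problem to that question.

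\textbf{Step 1: reformulation.} Write $u(x,y) = (x^\top A_1 y, \dots, x^\top A_d y)$ for $m\times n$ matrices $A_1,\dots,A_d$. Since $[0,\infty)^d$ is closed and convex, Blackwell's theorem (together with the response-satisfiability characterization of approachability used throughout this section; cf.\ Theorem~\ref{thm:blackwell}) says the orthant is approachable iff for every $y\in\Delta_n$ there is an $x\in\Delta_m$ with $u(x,y)\ge 0$ coordinatewise, i.e.\ iff $\min_{y\in\Delta_n}\max_{x\in\Delta_m}\min_{\ell\in[d]} x^\top A_\ell y \ge 0$. Writing $\min_{\ell\in[d]} c_\ell = \min_{\theta\in\Delta_d}\sum_\ell \theta_\ell c_\ell$ and applying the minimax theorem twice (to pull $\max_x$ outside and to recognize the zero-sum game value), this is equivalent to
\[
\min_{\theta\in\Delta_d}\ \mathrm{val}\!\Big(\textstyle\sum_{\ell}\theta_\ell A_\ell\Big)\ \ge\ 0 ,
\]
where $\mathrm{val}(B)=\max_{x\in\Delta_m}\min_{y\in\Delta_n} x^\top B y$. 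Equivalently, the orthant is \emph{not} approachable iff there exist $\theta\in\Delta_d$ and $y\in\Delta_n$ with $\sum_{\ell}\theta_\ell A_\ell y < 0$ in every coordinate --- that is, iff for some $y$ the polytope $\conv\{A_1 y,\dots,A_d y\}\subseteq\Rset^m$ meets the open negative orthant. So it suffices to show that deciding whether such a pair $(\theta,y)$ exists is NP-hard: a polynomial-time test for approachability would decide it.

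\textbf{Step 2: the reduction.} Re-index: coordinate $i\in[m]$ of $\sum_\ell \theta_\ell A_\ell y$ equals $\theta^\top \tilde A_i y$ where $(\tilde A_i)_{\ell j} = (A_\ell)_{ij}$; so the target is ``$\exists\theta\in\Delta_d, y\in\Delta_n$ with $\theta^\top \tilde A_i y < 0$ for all $i\in[m]$.'' I would reduce from the gap max-clique problem via the Motzkin--Straus formulation (equivalently, from NP-hardness of testing copositivity \cite{mannor2009approachability}): given a graph $G$ on $n$ vertices and a threshold $t$, there is an $n\times n$ matrix $Q$ with $O(1)$-bounded entries and a threshold constant so that $\omega(G)\ge t$ implies $\min_{y\in\Delta_n} y^\top Q y < 0$, while $\omega(G)< t$ implies $\min_{y\in\Delta_n} y^\top Q y \ge \beta$ for some $\beta=\Omega(1/n^2)$. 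Set $d=n$ (so $\theta,y\in\Delta_n$) and $m=2n+1$. For $i=1,\dots,n$ take $\tilde A_i = e_i\mathbf 1^\top - \mathbf 1 e_i^\top - \epsilon\, \mathbf 1\mathbf 1^\top$, which evaluates to $\theta_i - y_i - \epsilon$ on $\Delta_n\times\Delta_n$ (using $\mathbf 1^\top\theta = \mathbf 1^\top y = 1$); for $i=n+1,\dots,2n$ take its transpose-type counterpart (evaluating to $y_{i-n} - \theta_{i-n} - \epsilon$); and for $i=2n+1$ take $\tilde A_{2n+1}$ proportional to $Q$. All entries are $O(1)$-bounded, so after rescaling they lie in $[-1,1]$, and the first $2n$ forms are simultaneously negative exactly when $\|\theta - y\|_\infty < \epsilon$. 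Choosing $\epsilon$ a sufficiently small inverse polynomial (in particular $\epsilon < \beta/(2n\|Q\|_\infty)$), the full feasibility ``all $m$ forms negative'' holds iff $\exists y\in\Delta_n: y^\top Q y < 0$: if such a $y^\ast$ exists, $(\theta,y)=(y^\ast,y^\ast)$ works (the first $2n$ forms equal $-\epsilon<0$, the last is $(y^\ast)^\top Q y^\ast<0$); conversely if $y^\top Q y\ge\beta$ for all $y$, then for any $\theta,y$ with $\|\theta-y\|_\infty<\epsilon$ we get $\theta^\top Q y \ge \beta - \epsilon n\|Q\|_\infty > 0$, so no feasible pair exists. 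Hence $\omega(G)\ge t$ iff the orthant is \emph{not} approachable, and a polynomial-time approachability test decides max-clique.

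\textbf{The main difficulty} is the margin bookkeeping in Step 2. A single ``this coordinate is nonnegative'' constraint is too weak to pin $\theta$ exactly to $y$, so we can only force $\theta\approx_\epsilon y$, and then $\theta^\top Q y$ is merely an $O(\epsilon\|Q\|_\infty)$-perturbation of the quadratic form $y^\top Q y$; without a promised inverse-polynomial separation between the YES and NO cases of the source problem the NO direction of the reduction collapses. The remedy is to start from a gap version of max-clique / copositivity (which Motzkin--Straus supplies, with gap $\Omega(1/n^2)$) and to take $\epsilon$ an order of magnitude smaller than that gap; getting these two inverse polynomials to line up, while keeping all matrix entries $O(1)$ and $m,n,d=\poly(n)$, is the only genuinely delicate point. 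As a bonus, this same margin makes the reduction robust to $\delta$-perturbations of the orthant for $\delta = 1/\poly$, so the argument in fact establishes hardness of the approximate decision problem invoked in Theorem~\ref{thm:tsitsiklis}, not merely of the exact one.
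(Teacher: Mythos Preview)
The paper does not actually prove this statement: Theorem~\ref{thm:tsitsiklis_hardness} is quoted from \cite{mannor2009approachability} and used as a black box in the proof of Theorem~\ref{thm:hardness}. So there is no ``paper's proof'' to match against; what you have produced is an independent proof of the cited result.

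Your argument is correct. Step~1 is simply Blackwell's halfspace criterion specialized to the orthant: the supporting halfspaces of $[0,\infty)^d$ are exactly $\{v:\theta^\top v\ge 0\}$ for $\theta\in\Delta_d$, and each such halfspace is approachable iff $\mathrm{val}(\sum_\ell \theta_\ell A_\ell)\ge 0$, which after one more minimax swap gives the bilinear feasibility question you wrote down. Step~2 is a clean reduction from copositivity testing via Motzkin--Straus: with $Q=\alpha J-A_G$ for $\alpha$ the midpoint between $1-1/t$ and $1-1/(t-1)$, the quantity $\min_{y\in\Delta_n}y^\top Qy$ is at most $-\tfrac{1}{2t(t-1)}$ when $\omega(G)\ge t$ and at least $\tfrac{1}{2t(t-1)}$ otherwise, giving the promised $\Omega(1/n^2)$ gap. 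Your $2n$ ``coupling'' constraints force $\|\theta-y\|_\infty<\epsilon$, the perturbation bound $|\theta^\top Qy - y^\top Qy|\le \epsilon n\|Q\|_\infty$ is straightforward, and the strict-inequality bookkeeping goes through because the minimum is attained on the compact simplex.

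For context, the original proof in \cite{mannor2009approachability} proceeds by reduction from 3-SAT rather than from clique/copositivity, so your route is genuinely different. The Motzkin--Straus approach has the advantage you note at the end: it comes with an explicit inverse-polynomial gap for free, so the same construction immediately shows hardness of the $\delta$-approximate decision problem for $\delta=1/\poly(n)$, which is relevant to the algorithm in Theorem~\ref{thm:tsitsiklis}. The 3-SAT reduction is arguably more self-contained (it does not rely on the Motzkin--Straus identity), but it does not directly yield a gap without further work. One small stylistic point: citing \cite{mannor2009approachability} for copositivity hardness is circular here; the underlying fact goes back to Murty--Kabadi (1987) and follows from Motzkin--Straus exactly as you use it.
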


To prove Theorem~\ref{thm:hardness}, we will demonstrate how to reduce the approachability problem in Theorem~\ref{thm:tsitsiklis_hardness} to an instance of our problem of determining whether a menu is valid. The proof is deferred to Appendix~\ref{app:omitted}.


\subsection{A Polynomial Time Algorithm for the Maximin Objective}\label{sec:maximin}

In the previous section, we showed that it is computationally hard to determine whether a given menu is valid. If even checking whether a single menu is hard, this may suggest that it is hard to design a learning algorithm that implements the optimal menu.

Somewhat counterintuitively, in this section we will show that this is possible, at least in certain settings. In particular, we will design learning algorithms (without any additional constraints) that are nearly optimal for maximizing the learner's \emph{maximin utility}; the worst-case utility they might receive against any type of opponent. 

For any $u_L$ and set of $k$ optimizers $\{u_{O, i}\}_{i=1}^{k}$, let $\OPT = \sup_{\cM} \min_{i \in [k]} V_{L}(\cM, u_{O, i})$ denote the best asymptotic per-round maximin utility a learner can achieve against this set of optimizers (where the supremum is over all valid menus $\cM$). We have the following theorem.

\begin{theorem}
\label{thm:poly_time_general}
For any fixed $\eps > 0$, there exists a learning algorithm $\cA_{\eps}$ (Algorithm~\ref{alg:poly_time_maxmin}) that guarantees a per-round maximin utility of

$$\min_{i \in [k]} V_{L}(\cM(\cA_{\eps}), u_{O, i}) \geq \OPT - \eps$$

\noindent
and runs in time $\poly(n, m, k)$ per iteration. 
\end{theorem}

The primary insight behind the design of the algorithm in Theorem~\ref{thm:poly_time_general} is that even if we do not know whether a specific menu is approachable (in the sense of Blackwell approachability, \ref{thm:blackwell}), we can still attempt to run an algorithmic instantiation of the Blackwell approachability theorem to attempt to approach this set. 

What happens when we do try to approach an unapproachable set $\cM$? Most algorithms for Blackwell approachability do not immediately fail when run on such sets (indeed, if they did, that would contradict the hardness result of \cite{mannor2009approachability}). Instead, they will approach $\cM$ for some time until at some round (depending on the behavior of the adversary) they eventually fail to play an action, certifying that the set $\cM$ is unapproachable. 

We demonstrate this for the specific application of Blackwell Approachability we need in Theorem~\ref{thm:poly_time_general}. Consider the approachability game defined earlier where $u: \Delta_{m} \times \Delta_{n} \rightarrow \Rset^{mn}$ is defined via $u(x, y) = x \otimes y$, and let $\cC(\Phi)$ be the candidate menu for an arbitrary incentive-compatible CSP assignment $\Phi = (\csp_1, \csp_2, \dots, \csp_k) \in \cR$. We will try to force $u$ to approach $\cC(\Phi)$. Note that while we assume $\Phi$ is incentive-compatible, it is not necessarily the case that $\Phi \in \cS$, and so $\cC(\Phi)$ is not necessarily response satisfiable with respect to $u$. 

We say that a learning algorithm is \emph{abortable} if the learning algorithm has the option to terminate its own execution after a finite number of rounds (i.e., outputting an $\abort$ signal instead of playing an action)\footnote{We include a formal definition of what this means (and how this affects the definition of menus) in Appendix~\ref{app:anytime-abortable}.}. The following theorem provides guarantees on an abortable learning algorithm for approaching the set $\cC(\Phi)$ for any incentive-compatible $\Phi$ (in particular, note that the quantity on the LHS of \eqref{eq:blackwell-improper} can be thought of as the maximum extent to which the CSP $\csp = (1/T) \cdot (\sum_{t} x_t \otimes y_t)$ violates the constraints of $\cC(\Phi)$). 

\begin{theorem}\label{thm:blackwell-improper}
For any incentive-compatible CSP assignment $\Phi \in \cR$, there is an efficient abortable learning algorithm $\bwabort(\Phi)$ with the following properties.

\begin{itemize}
    \item If at round $\tau$, $\bwabort(\Phi)$ has not stopped, then its transcript of play up to this round satisfies:

    \begin{equation}\label{eq:blackwell-improper}
    \max_{i \in [k]} \left(\sum_{t=1}^{\tau} (u_{O,i}(x_t, y_t) - u_{O,i}(\csp_i))\right) = O(\sqrt{\tau \log k}).
    \end{equation}

    In particular, the menu $\cM(\bwabort(\Phi))$ is contained within the set $\cC(\Phi)$.
    
    \item If the learning algorithm $\bwabort(\Phi)$ ever aborts, then the set $\cC(\Phi)$ is not response-satisfiable with respect to $u$.
    \item Each 
    $\csp_i \in \Phi$ is contained in the menu $\cM(\bwabort(\Phi))$; i.e., there is a sequence of opponent actions against which this algorithm will never abort and where the CSP will converge to $\csp_i$.
\end{itemize}
\end{theorem}
\begin{proof}
Consider the following learning algorithm:

\begin{itemize}
    \item Initialize an instance of Anytime Hedge \cite{FreundSchapire1996} with $k$ actions (this algorithm has the guarantee that the external regret at round $t$ is at most $O(\sqrt{t \log k})$ for all $t \geq 1$). 
    \item For each round $t$ (for $t \geq 1$), do the following:

    \begin{itemize}
    \item Receive the action $p^{t} \in \Delta_{k}$ from the instance of Anytime Hedge.
    \item Choose an $x_t \in \Delta_{m}$ s.t. 

    $$\sum_{i=1}^{k} p^t_i (u_{O,i}(x_t, y) -  u_{O,i}(\csp_i)) \le 0$$
    
    \noindent
    for all $y \in \Delta_{n}$. We can find such an $x_t$ by solving a linear program; if no such $x_t$ exists, abort.
    \item Play this $x_t$ and observe the opponents action $y_t \in \Delta_{n}$
    \item  Feed Hedge the reward vector $r^t \in [-1, 1]^k$ defined via $r^{t}_i = u_{O,i}(x_t, y_t) - u_{O,i}(\csp_i)$.
    \end{itemize}
\end{itemize}

We claim that this learning algorithm has the desired guarantees. To see this, first note that if the above algorithm ever aborts, by von Neumann's minimax theorem, it must be the case that there exists a $y^{*} \in \Delta_n$ such that

$$\sum_{i=1}^{k} p^t_i (u_{O,i}(x, y^*) - u_{O,i}(\csp_i)) > 0$$

\noindent
for all $x \in \Delta_{m}$. But now, $y^{*}$ is a certificate that $\cC(\Phi)$ is not response-satisfiable with respect to $u$ (i.e., there is no $x$ such that $u(x, y^{*}) = x \otimes y^{*}$ satisfies all the constraints of $\cC(\Phi)$). 

On the other hand, note that if the algorithm has successfully run for $\tau$ rounds, then

\begin{eqnarray*}
& & \max_{i \in [k]} \left(\sum_{t=1}^{\tau} (u_{O,i}(x_t, y_t) - u_{O,i}(\csp_i))\right) \\
&\leq& \max_{i \in [k]} \left(\sum_{t=1}^{\tau} (u_{O,i}(x_t, y_t) - u_{O,i}(\csp_i))\right) - \sum_{t=1}^{\tau} \sum_{i=1}^{k} p^t_i (u_{O,i}(x, y^*) - u_{O,i}(\csp_i)) \\
&=& \max_{i \in [k]} \sum_{t=1}^{\tau} r_{t, i} - \sum_{t=1}^{\tau}  \langle p^t, r_{t}\rangle \\
&=& \Reg(\mathrm{AnytimeHedge}) = O(\sqrt{\tau \log k}).
\end{eqnarray*}

Finally, this algorithm, as written, does not guarantee that $\csp_i$ belongs to its menu. However, we can easily modify this algorithm in the same way as Lemma 3.5 of \cite{arunachaleswaran2024paretooptimal} by adding a cheap talk phase (or implementing one through the details of the protocol) where we cooperate with the opponent to play a sequence of actions converging to the $\csp_i$ of their choice. See also the discussion in Section~\ref{app:explicit_algorithms} for more details.
\end{proof}

\begin{algorithm}
\caption{Polynomial Time Learning Algorithm for Maximin Value}
\label{alg:poly_time_maxmin}
\begin{algorithmic}[1]
\STATE \textbf{Input}: parameter $\eps > 0$
\STATE Initialize $V = 1$ \COMMENT{An upper bound on $\OPT$}
\WHILE{true}
    \STATE Set $\values(V) \gets \{ \phi \mid u_L(\phi) \ge V \}$
    \STATE Set CSP assignment $\Phi(V) \gets \{\phi_i\}_{i \in [k]}$ where:
    \STATE \hspace{1em} $\phi_i \in \argmax_{\phi \in \values(V)} u_{O,i}(\phi)$ 
    \STATE \COMMENT{Note that $\Phi(V)$ is incentive-compatible ($\Phi(V) \in \cR$) since each $\csp_i$ maximizes $u_{O, i}$ over all CSPs in $\cF(V)$.}
    \STATE Run the learning algorithm $\bwabort(\Phi(V))$ (see Theorem~\ref{thm:blackwell-improper}) until it aborts.
    \STATE Set $V \gets V - \eps$ \COMMENT{note that this is only reached if the previous step aborts}
\ENDWHILE
\end{algorithmic}
\end{algorithm}

Based on this idea, we show how to construct an efficient algorithm that maximizes the maximin objective in the following way (described in Algorithm~\ref{alg:poly_time_maxmin}). The core idea is to maintain an upper bound $V$ on $\OPT$. The algorithm runs in epochs, where in each epoch we find the opponent-optimal CSP assignment $\Phi$ supported on CSPs that guarantee the learner at least utility $V$. We then try to approach $\cC(\Phi)$ by running the abortable learning algorithm in Theorem~\ref{thm:blackwell-improper}. If we abort, we decrease $V$ a little bit and start a new epoch. 

We can now prove Thorem~\ref{thm:poly_time_general}.

\begin{proof}[Proof of Theorem~\ref{thm:poly_time_general}]
We first make the following observation: at every step of the game, the variable $V$ in Algorithm~\ref{alg:poly_time_maxmin} never drops below $\OPT - \eps$. To see this, assume $V \leq \OPT$, and consider the CSP assignment $\Phi$ constructed by picking $ \csp_i \in \argmax_{\csp \in \values(V)} u_{O,i}(\csp)$. We will show that $\cC(\Phi)$ is approachable, which implies that the learning algorithm $\cA$ in Theorem~\ref{thm:blackwell-improper} will never abort when run on $\Phi$ (and therefore, $V$ never further decreases after the first time it drops below $\OPT$).

Consider the CSP assignment $\Phi'$ constructed from $\values(\OPT)$. Note that $\cC(\Phi') \subseteq \cC(\Phi)$; this follows almost directly from the fact that $\values(\OPT) \subseteq \values(V)$, implying $u_{O,i}(\csp) \ge u_{O,i}(\csp')$, which in turn guarantees that each constraint hyperplane of $\cC(\csp)$ is weaker than the corresponding constraint in $\cC(\csp')$. 
Now, we also know that an optimal menu $\mathcal{M}$ exists and attains $\OPT$ -- let $\Phi^*$ be the CSP assignment (i.e. $\csp^*_i$ is the profile in $\mathcal{M}$ maximizing the optimizer type's utility). Clearly, $u_{O,i}(\csp^*_i) \le u_{O,i}(\csp'_i) $, which directly implies that $\cC(\csp^*) \subseteq  \cC(\csp')$. But, tautologically, $\mathcal{M} \subseteq \cC(\Phi^*)$ and therefore, since approachability is upwards closed, $\cC(\Phi)$ is approachable.

Armed with this observation, we now examine the menu $\cM(\cA_{\eps})$ of our algorithm. Let $\mathcal{V} = \{1, 1-\eps, 1-2\eps, \dots, V_{\min}\}$ be the set of possible values that any adversary can cause $V$ to take on; note that it contains every real number of the form $1-n\eps$ in the range $(\OPT - \eps, 1]$ (in particular, conversely to the above observation, an adversary can always cause $V$ to decrease if $V > \OPT$ by repeatedly playing a $y \in \Delta_{n}$ that demonstrates that $\cC(V)$ is not approachable). We claim that $\cM(\cA_{\eps})$ has the following properties:

\begin{enumerate}
    \item For each $V \in \mathcal{V}$, every CSP $\csp$ in the CSP assignment $\Phi(V)$ belongs to $\cM(\cA_{\eps})$.
    \item The menu $\cM(\cA_{\eps})$ is a subset of $\cC(\Phi(V_{\min}))$.
\end{enumerate}

The first property is true since each such CSP is explicitly added to one of the menu of one of the potential instances of $\bwabort$. The second property is true\footnote{Here we are implicitly using the fact that the menu of an algorithm constructed in this way is a subset of the convex hull of all of the abortable sub-algorithms. See Lemma~\ref{lem:sub-abortable} in Appendix~\ref{app:anytime-abortable} for a proof of this.} because each of the individual menus $\cM(\bwabort(\Phi(V))$ is contained within $\cC(\Phi(V))$ (by the regret guarantee in Theorem~\ref{thm:blackwell-improper}), and these menus are nested ($\cC(\Phi(1)) \subseteq \cC(\Phi(1-\eps)) \subseteq \dots \subseteq \cC(\Phi(V_{\min}))$). Note that every $\csp_i \in \Phi(V)$ is in fact contained in the set $\cC(\Phi(V))$, i.e., $\Phi(V)$ satisfies the incentive compatibility conditions ($\Phi(V) \in \mathcal{R}$) -- this is because each $\csp_i$ is the maximum value point in $F(V)$ for optimizer payoff $u_{O,i}$. 

We can now evaluate the values $V_{L}(\cM(\cA_{\eps}), u_{O,i})$. Consider the CSP $\csp_i$ that the optimizer $u_{O, i}$ picks from $\cM(\cA_{\eps})$. Let $\csp'_i$ be this optimizer's CSP assignment in $\Phi(V_{\min})$. We claim that $u_{O, i}(\csp'_i) = u_{O, i}(\csp_i)$. Indeed, this follows from the fact that $\cM(\cA_{\eps}) \subseteq \cC(\Phi(V_{\min}))$, and by definition no CSP in $\cC(\Phi(V_{\min}))$ has better utility for optimizer $i$ than their assigned one. On the other hand, we also know that $\csp'_i \in \cM(\cA_{\eps})$. So, since the optimizer breaks ties in favor of the learner, we must have $V_{L}(\cM(\cA_{\eps}), u_{O,i}) \geq u_L(\csp'_i)$. But finally, since $\csp'_i$ was chosen from $\cF(V_{\min})$, it must be the case that $u_{L}(\csp'_i) \geq V_{\min} \geq \OPT - \eps$, as desired.
\end{proof}

It is interesting to consider the strategic situation faced by one of the opponents facing Algorithm~\ref{alg:poly_time_maxmin}. We can think of each stage of Algorithm~\ref{alg:poly_time_maxmin} as a challenge from the learner to the opponent of the form: ``I think I can keep the average CSP in the set $\cC(\Phi(V))$. If I can, it is best for you to play your assigned CSP $\csp_i \in \Phi$ (and I will cooperate with you to reach that CSP). Otherwise, you must demonstrate to me that the set $\cC(\Phi(V))$ is in fact not approachable (at which point I will revise my play).'' In this way, the learner delegates the computational complexity of computing the optimal menu $\cM$ to their opponent, who either must help the learner compute this menu or settle for a potentially worse outcome. As a result, although the guarantee of Theorem~\ref{thm:poly_time_general} is stated only in terms of the guarantee achievable against a perfectly rational (and computationally unconstrained) opponent, it also makes sense (and in fact, is even better) to play Algorithm~\ref{alg:poly_time_maxmin} against computationally-bounded rational opponents. 

Finally, in this section we have restricted ourself to the maximin objective. It is natural to wonder whether we can achieve a similar guarantee for the expected utility objective studied throughout the rest of the paper -- we leave this as a major open question. We note that the current approach takes advantage in several places of the ``one-dimensional'' nature of the maximin objective (for example, it is important that the optimal CSP assignment is simply the one that assigns each opponent their favorite CSP with value at least $V$ for the optimizer), and it is not clear how to adapt it to these more general objectives.

\bibliographystyle{alpha}
\bibliography{references}

\newpage
\appendix

\section{Omitted Proofs}\label{app:omitted}

\subsection{Proof of Theorem~\ref{thm:nsr_not_optimal}}
Consider a game with $m=3$ learner actions (labeled $A$, $B$ and $C$) and $n=2$ optimizer actions (labeled $R$ and $S$). We will consider a single optimizer type ($k=1$). The utility functions $u_{L}$ and $u_{O}$ are shown in Table~\ref{table:1}.

 \begin{table}
    \centering
    \setlength{\extrarowheight}{2pt}
    \begin{tabular}{*{4}{c|}}
      \multicolumn{2}{c}{} & \multicolumn{2}{c}{Optimizer}\\\cline{3-4}
      \multicolumn{1}{c}{} &  & $R$  & $S$ \\\cline{2-4}
      \multirow{2}*{Learner}  & $A$ & $(0,0)$ & $(3,1)$ \\\cline{2-4}
      & $B$ & $(7 + \epsilon,0)$ & $(2 + \epsilon,1)$ \\\cline{2-4}
     & $C$ & $(7,3)$ & $(1,0)$ \\\cline{2-4}
    \end{tabular} \caption{Payoffs for the counterexample game in the proof of Theorem~\ref{thm:nsr_not_optimal}}
    \label{table:1}
  \end{table}

If the learner plays a no-swap-regret algorithm, the optimizer will respond by playing their Stackelberg equilibrium strategy (by the characterization of $\M_{NSR}$ in Lemma~\ref{lem:nsr_characterization}). The unique Stackelberg equilibrium of this game can be computed to be $(A, S)$, under which the learner earns utility $3$ and the optimizer earns utility $1$.

Now, consider the menu $\M = \conv(\M_{NSR} \cup \{\csp\})$ for the CSP $\csp = \frac{1}{2}(C, R) + \frac{1}{2}(A, S)$. It can be verified that the CSP $\csp$ belongs to $\M_{NR}$ and so $\M$ is a valid no-regret menu. The optimizer must either select a CSP from $\M_{NSR}$ (in which case their best action, as before, is to select their Stackelberg equilibrium) or the CSP $\csp$. If they select $\csp$, the optimizer receives utility $2$ and the learner receives utility $5$. Since this is a higher utility for the optimizer than their Stackelberg leader value, they will select this new CSP, and the learner's resulting utility is thus also strictly better ($3 \rightarrow 5$) under this new menu.

\subsection{Proof of Theorem~\ref{thm:hardness}}

\begin{proof}[Proof of Theorem~\ref{thm:hardness}]
Consider a specific instance of the NP-hard approachability problem in Theorem~\ref{thm:tsitsiklis_hardness}, specified by a vector-valued function $u: \Delta_m \times \Delta_n \rightarrow \Rset^{d}$. We will show how to (in polynomial time) construct an $m$-by-$n$ game $G$ with $k=d$ optimizer payoffs $u_{O, i}$ and a CSP assignment $\Phi$ such that $\cC(\Phi)$ is a valid menu iff the nonnegative orthant is approachable in the vector-valued game specified by $u$.

We will construct our game as follows. We will set $k = d$ and $u_{O, i}(x, y) = -u_{i}(x, y)$ for each $i \in [d]$. We will set $u_L$ arbitrarily (note that it does not feature in the definition of $\cC(\Phi)$ or whether it is a valid menu). Finally, we will construct our CSP assignment $\Phi = (\csp_1, \csp_2, \dots, \csp_k)$ as follows. For each $i \in [k]$ we search over all the $mn$ pure action pairs, finding a pair $a^{-}_i, b^{-}_i$ where $u_{i}(a^{-}_i, b^{-}_i) \leq 0$, and a pair $a^{+}_i, b^{+}_i$ where $u_{i}(a^{+}_i, b^{+}_i) \geq 0$. Note that if we cannot find the first pair, then the $i$th coordinate of $u(x, y)$ is always non-negative and we can drop it from $u$; on the other hand if we cannot find the second pair, then the $i$th coordinate of $u(x, y)$ is never non-negative and we can simply return that the non-negative orthant is not approachable. Finally, set $\csp_i$ to the convex combination of these two pure strategy profiles such that $u_{O,i}(\csp_i) = -u_{i}(\csp_i) = 0$.

Now, recall (by Lemma \ref{lem:csp-expand-util})  that $\cC(\Phi)$ is a valid menu iff the set $\cU(\Phi)$ is approachable under the vector-valued payoff $v(x, y) = (u_{O, 1}(x, y), u_{O, 2}(x, y), \dots, u_{O, k}(x, y))$. But for this $\Phi$ (since $u_{O, i}(\csp_i) = 0$), $\cU(\Phi)$ is exactly the negative orthant, and therefore this problem is equivalent to the problem of testing whether the non-negative orthant is approachable under our original payoff $u(x, y) = -v(x,y)$. This completes the reduction and the proof.
\end{proof}

\section{From Learning Algorithms to Menus}
\label{app:algtomenus}

In this appendix we outline the details of how to translate a learning algorithm $\A$ (described as a family of functions from histories to next actions) to a menu $\M$ (described as a convex subset of $mn$-dimensional space); in particular, we show that the asymptotic utilities received by the learner in a game where they commit to a specific algorithm correspond to the asymptotic utilities they receive in the menu commitment form of this problem. This follows from results contained in \cite{arunachaleswaran2024paretooptimal}, and we largely follow their presentation.


\subsection{Asymptotic Algorithmic Commitment}

We begin by establishing the corresponding utility-theoretic quantities for the problem of algorithmic commitment. Recall that we assume that the learner is able to see $u_{L}$ and $\optdist$ before committing to their algorithm $\cA$, but not the  $u_{O}$ privately drawn by the optimizer. The optimizer, who knows $u_L$ and $u_{O}$, will approximately best-respond by selecting a sequence of actions that approximately (up to sublinear $o(T)$ factors) maximizes their payoff. They break ties in the learner's favor. Formally, for each $T$, let

\begin{equation*}
    V_{O}(\cA, u_{O}, T) = \sup_{(x_1, \dots, x_T) \in \Delta_{m}^{T}} \frac{1}{T}\sum_{t=1}^{T} u_{O}(x_t, y_t)
\end{equation*}

\noindent
represent the maximum per-round utility of the optimizer with payoff $u_O$ playing against $\cA$ in a $T$ round game (here and throughout, each $y_t$ is determined by running $A^{T}_{t}$ on the prefix $x_1$ through $x_{t-1}$). For any $\eps > 0$, let 

\begin{equation*}
\cX(\cA, u_{O}, T, \eps) = \left\{(x_1, x_2, \dots, x_T) \in \Delta_{m}^T \bigm| \frac{1}{T}\sum_{t=1}^{T} u_{O}(x_t, y_t) \geq V_{O}(\cA, u_O, T) - \eps \right\}
\end{equation*}

\noindent
be the set of $\eps$-approximate best-responses for the optimizer to the algorithm $\cA$. Finally, let

\begin{equation*}
V_{L}(\cA, u_O, T, \eps) = \sup_{(x_1, \dots, x_T) \in \cX(\cA, u_O, T, \eps)} \frac{1}{T}\sum_{t=1}^{T} u_{L}(x_t, y_t)
\end{equation*}

\noindent
represent the maximum per-round utility of the learner under any of these approximate best-responses. 

We are concerned with the asymptotic per-round payoff of the learner as $T \rightarrow \infty$ and $\eps \rightarrow 0$. Specifically, let

\begin{equation}\label{eq:learner_value}
V_{L}(\cA, u_O) = \lim_{\eps \rightarrow 0} \liminf_{T \rightarrow \infty} V_{L}(\cA, u_O, T, \eps).
\end{equation}

\noindent
Note that the outer limit in \eqref{eq:learner_value} is well-defined since for each $T$, $V_{L}(A, u_O, T, \eps)$ is decreasing in $\eps$ (being a supremum over a smaller set). 

We also define an approximate measure $V^\varepsilon_L$ in the following manner:

\begin{equation}\label{eq:tiebreak_learner_value}
V^\varepsilon_{L}(\cA, u_O) =  \liminf_{T \rightarrow \infty} V_{L}(\cA, u_O, T, \eps).
\end{equation}

Once again, this is well defined since all payoffs are bounded in $[-1,1]$ and the lim inf always exists. This approximate measure should be understood as the optimizer best-responding approximately up to a $\varepsilon$ additive error and tie-breaking in favor of the learner.

The final objective of the learner is the expectation of the learner's asymptotic per-round payoff over the draws of the optimizer's payoff function:

\begin{equation}\label{eq:learner_value2}
V_{L}(\cA, \optdist) = \mathbb{E}_{u_O \sim \optdist} [ V_{L}(\cA, u_O) ] = \sum_{i=1}^k \alpha_i V_{L}(\cA, u_{O,i}) .
\end{equation}

\noindent
We define $V_L^\varepsilon(\cA,\optdist)$ similarly.

\subsection{Menus as Limit Objects}

We now switch our attention to formally defining menus. For an algorithm $A^{T}$ that defines the algorithm $\A$ run for $t$ rounds, define the \emph{menu} $\M(A^T) \subseteq \Delta_{mn}$ of $A^{T}$ to be the convex hull of all CSPs of the form $\frac{1}{T}\sum_{t=1}^T x_t \otimes y_t$, where $(y_1, y_2, \dots, y_T)$ is any sequence of optimizer actions and $(x_1, x_2, \dots, x_T)$ is the response of the learner to this sequence under $A^T$ (i.e., $x_t = A^{T}_t(y_1, y_2, \dots, y_{t-1})$).

If a learning algorithm $\cA$ has the property that the sequence $\M(A^1), \M(A^2), \dots$ converges under the Hausdorff metric\footnote{The \emph{Hausdorff distance} between two bounded subsets $X$ and $Y$ of Euclidean space is given by $d_H(X, Y) = \max(\sup_{x \in X} d(x, Y), \sup_{y \in Y} d(y, X))$, where $d(a, B)$ is the minimum Euclidean distance between point $a$ and the set $B$.}, we say that the algorithm $\cA$ is \emph{consistent} and call this limit value the \emph{asymptotic menu} $\M(\cA)$ of $\cA$. More generally, we will say that a subset $\M \subseteq \Delta_{mn}$ is an asymptotic menu if it is the asymptotic menu of some consistent algorithm. It is possible to construct learning algorithms that are not consistent (for example, imagine an algorithm that runs multiplicative weights when $T$ is even, and always plays action $1$ when $T$ is odd); however even in this case we can find subsequences of time horizons where this converges and define a reasonable notion of asymptotic menu for such algorithms. See the appendix of~\cite{arunachaleswaran2024paretooptimal} for a discussion of this phenomenon.

For any asymptotic menu $\M$, define $V_{L}(\M, u_O)$ to be the utility the learner ends up with under this process. Specifically, define $V_{L}(\M, u_O, \varepsilon) = \max \{ u_{L}(\csp) \mid u_O(\csp) \ge \max_{\csp \in \M} u_{O}(\csp) - \varepsilon\}$ and $V_{L}(\M, u_O) = V_{L}(\M, u_O, 0)$. We can verify that this definition is compatible with our previous definition of $V_{L}$ as a function of the learning algorithm $\cA$ (see~\cite{arunachaleswaran2024paretooptimal} for proof). 

\begin{lemma}\label{lem:menu_alg_equiv}
For any learning algorithm $\cA$, $V_{L}(\M(\cA), u_O) = V_{L}(\cA, u_O)$ and $V_{L}(\M(\cA), u_O , \varepsilon) = V_{L}^\varepsilon(\cA, u_O)$.
\end{lemma}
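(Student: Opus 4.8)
The plan is to first prove the $\eps$-parametrized identity $V_{L}(\M(\cA), u_O, \eps) = V^{\eps}_{L}(\cA, u_O) = \liminf_{T\to\infty} V_{L}(\cA, u_O, T, \eps)$ for every fixed $\eps > 0$, and then recover the unparametrized identity $V_{L}(\M(\cA),u_O) = V_{L}(\cA,u_O)$ by letting $\eps \downarrow 0$. For the second step, recall that on the algorithm side $V_{L}(\cA,u_O) = \lim_{\eps\downarrow 0}\liminf_{T\to\infty} V_{L}(\cA,u_O,T,\eps) = \lim_{\eps\downarrow 0} V^{\eps}_{L}(\cA,u_O)$ by definition, and on the menu side $V_{L}(\M(\cA),u_O) = V_{L}(\M(\cA),u_O,0)$; so it suffices to check that $\lim_{\eps\downarrow 0} V_{L}(\M(\cA),u_O,\eps) = V_{L}(\M(\cA),u_O,0)$. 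This is a routine compactness argument: writing $F_\eps = \{\csp \in \M(\cA) : u_O(\csp) \ge \max_{\csp'\in\M(\cA)} u_O(\csp') - \eps\}$, the $F_\eps$ are nested compact sets with $\bigcap_{\eps>0} F_\eps = F_0$, and since $u_L$ is continuous, $\max_{F_\eps} u_L$ decreases to $\max_{F_0} u_L$ (if not, a subsequential limit of near-maximizers would lie in $F_0$ and exceed $\max_{F_0} u_L$). Throughout I will use three facts from the setup: (i) $\M(A^T) \to \M(\cA)$ in the Hausdorff metric $d_H$ (the consistency assumption; for inconsistent $\cA$ one passes to a convergent subsequence of horizons, which is exactly the situation discussed in \cite{arunachaleswaran2024paretooptimal}); (ii) both $u_L$ and $u_O$, being linear functionals with coefficients in $[-1,1]$, are $\sqrt{mn}$-Lipschitz on $\Delta_{mn}$; and (iii) $\M(A^T) = \conv(\text{realizable CSPs at horizon } T)$, where a realizable CSP is one of the form $\frac1T\sum_t x_t \otimes y_t$ for an optimizer action sequence $(x_1,\dots,x_T)$ and the learner responses $y_t = A^T_t(x_1,\dots,x_{t-1})$.

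\textbf{Optimizer value and the easy direction.} First, since maximizing a linear functional over a set equals maximizing it over the convex hull, $V_{O}(\cA,u_O,T) = \max_{\csp \in \M(A^T)} u_O(\csp)$, and by (i) and (ii) this converges to $v_O := \max_{\csp\in\M(\cA)} u_O(\csp)$. For the inequality $\liminf_T V_{L}(\cA,u_O,T,\eps) \le V_{L}(\M(\cA),u_O,\eps)$: any sequence in $\cX(\cA,u_O,T,\eps)$ induces a realizable CSP $\rho_T \in \M(A^T)$ with $u_O(\rho_T) \ge V_{O}(\cA,u_O,T) - \eps$ and $\frac1T\sum_t u_L(x_t,y_t) = u_L(\rho_T)$. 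By (i), there is $\csp_T \in \M(\cA)$ with $\norm{\csp_T - \rho_T} = o(1)$, hence $u_L(\csp_T) \ge u_L(\rho_T) - o(1)$ and $u_O(\csp_T) \ge v_O - \eps - o(1)$. Taking a subsequence along which $\rho_T$ (hence $\csp_T$) converges and its $u_L$-value tends to $\liminf_T V_{L}(\cA,u_O,T,\eps)$, the limit point lies in $\M(\cA)$ and satisfies $u_O \ge v_O - \eps$ (the constraint set is closed, so the $o(1)$ slack vanishes in the limit), so it is feasible for $V_{L}(\M(\cA),u_O,\eps)$; this gives the claimed bound.

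\textbf{The reverse direction (the crux).} We must show $\liminf_T V_{L}(\cA,u_O,T,\eps) \ge V_{L}(\M(\cA),u_O,\eps)$. Pick $\csp^\star \in \M(\cA)$ with $u_O(\csp^\star) \ge v_O - \eps$ and $u_L(\csp^\star) = V_{L}(\M(\cA),u_O,\eps)$; by moving $\csp^\star$ an arbitrarily small amount toward the face maximizing $u_O$ we may assume $u_O(\csp^\star) > v_O - \eps$ strictly, losing only $o(1)$ in $u_L$. By (i), for large $T$ the point $\csp^\star$ lies within $o(1)$ of $\M(A^T) = \conv(\text{realizable CSPs})$, so it is approximated by a convex combination $\sum_j \lambda_j \rho^{(T)}_j$ of realizable CSPs. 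The obstacle — and the one genuinely structural step — is to convert this convex combination back into a \emph{single} realizable CSP $\rho$ (at horizon $T$, or some $T' \ge T$) with $u_O(\rho) \ge V_{O}(\cA,u_O,T') - \eps$ and $u_L(\rho) \ge V_{L}(\M(\cA),u_O,\eps) - o(1)$; equivalently, to show that realizable CSPs, not just their convex hull, become dense in $\M(\cA)$ as $T \to \infty$. (One cannot simply take the best component $\rho^{(T)}_j$, since the component with largest $u_L$ need not satisfy the $u_O$-constraint.) The intuition is that over a long horizon the optimizer can time-share between the action sequences witnessing the various $\rho^{(T)}_j$, so that averages of realizable CSPs are themselves approximately realizable; making this precise for an arbitrary algorithm is exactly what the asymptotic-menu machinery of \cite{arunachaleswaran2024paretooptimal} provides (and is also essentially the content of the approachability-based construction recalled in Appendix~\ref{app:explicit_algorithms}), so I would invoke their argument here rather than reprove it. Granting this, we get $\liminf_T V_{L}(\cA,u_O,T,\eps) \ge u_L(\csp^\star) - o(1)$, and combining with the easy direction yields $V^{\eps}_{L}(\cA,u_O) = V_{L}(\M(\cA),u_O,\eps)$; the $\eps\downarrow 0$ reduction from the first paragraph then finishes the proof.
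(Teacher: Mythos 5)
The paper itself does not write out a proof of this lemma---it states the claim and defers entirely to \cite{arunachaleswaran2024paretooptimal}. Your scaffolding is correct and consistent with that: the reduction of the $\eps=0$ case to the $\eps$-parametrized one via the nested-compact-sets/continuity argument is sound; the identification $V_O(\cA,u_O,T)=\max_{\M(A^T)}u_O$ and the easy ($\le$) inequality via Hausdorff convergence of $\M(A^T)$ are correct; and you rightly pinpoint the genuine content---that a point of $\M(\cA)$ satisfying the $u_O$-constraint must be approximated by a \emph{single} realizable trajectory CSP rather than merely a convex combination of them---as exactly what must be imported from \cite{arunachaleswaran2024paretooptimal}, which is the same place the paper points. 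One small inaccuracy in an aside: you suggest this density fact is ``essentially the content of the approachability-based construction recalled in Appendix~\ref{app:explicit_algorithms},'' but that construction goes the other way (it builds an algorithm realizing a \emph{given} valid menu), whereas here you need a statement about the realizable CSPs of an \emph{arbitrary given} algorithm $\cA$. Since you explicitly invoke the reference rather than lean on that intuition, this does not create a gap, but the analogy is not quite right.
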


In particular, Lemma~\ref{lem:menu_alg_equiv} establishes that it is equivalent to optimize over the set of valid menus $\M$ (menus that correspond to $\M(\cA)$ for some learning algorithm $\cA$), and that any optimal solution to the menu commitment problem corresponds to an optimal solution to the algorithmic commitment problem (and vice versa).

\subsection{From Optimal Menus to Optimal Algorithms}
\label{app:explicit_algorithms}
Most of our constructions throughout this paper will return solutions in the form of optimal menus to commit to. However, ultimately, our goal is to run explicit learning algorithms that actually implement these menus. 

In general, given a menu $\cM$ and an efficient separation oracle for this menu, \cite{arunachaleswaran2024paretooptimal} give a general procedure (via an algorithmic implementation of Blackwell approachability) for constructing an algorithm $\cA$ with $\cM(\cA) = \cM$. For completeness, in this section we describe at a high-level what this procedure looks like for the constructions we introduce in this paper (see Lemmas 3.4 and 3.5 of \cite{arunachaleswaran2024paretooptimal} for details).

\paragraph{Optimal No-Regret Commitment} The optimal menu returned by our algorithm for optimal no-regret commitment always has the form $\M = \conv(\M_{NSR} \cup \{\csp_1, \csp_2, \dots, \csp_k\})$, where $\Phi = (\csp_1, \csp_2, \dots, \csp_k)$ is the optimal CSP assignment we computed. We can implement this menu as follows: first, we let the opponent pick any of the CSPs $\csp_i$. We then provide the opponent with a schedule of action profiles (specifying the action pair played at time $t$ for all $1 \leq t \leq T$) whose CSP converges to $\csp_i$ while never leaving $\M$. Finally, if the opponent ever defects from this schedule, we fall back to playing any no-swap-regret algorithm $\cA_{NSR}$. 

\paragraph{Optimal General Commitment} The optimal menu returned by our algorithm for optimal general commitment always has the form $\M = \cC(\Phi, \eps)$, for some CSP assignment $\Phi = (\csp_1, \csp_2, \dots, \csp_k)$ and $\eps > 0$. Note that since $\cC(\Phi, \eps)$ is defined by at most $k$ half-space constraints, it is easy to construct a separation oracle for $\M$. We now proceed essentially identically to no-regret commitment, with the exception that instead of playing a no-regret algorithm $\cA_{NSR}$ as our ``fallback'' algorithm, we play a Blackwell approachability algorithm $\cA'$ that guarantees that the CSP converges to the set $\M$ (we can implement this efficiently with the above-mentioned separation oracle). Technically, as written, this procedure may converge to a sub-menu $\M' \subseteq \M$  with the property that every $\csp_i$ is contained in $\M'$, but it is easy to check that such a menu is also optimal (alternatively, we can do the padding described in Lemma 3.5 of \cite{arunachaleswaran2024paretooptimal} to design an algorithm $\cA$ where $\cM(\cA)$ is the full menu $\M$).

\subsection{Anytime and Abortable Learning Algorithms}\label{app:anytime-abortable}

In Section~\ref{sec:maximin}, our learning algorithm makes use of a sub-algorithm which is both \emph{anytime} (i.e., can be run without knowing the time horizon $T$ in advance) and \emph{abortable} (i.e., may decide to stop after a given round). In this Appendix, we define both of these classes of learning algorithms (and associated concepts of menus) formally.

An \emph{anytime learning algorithm} $\cA$ is a learning algorithm where all the horizon-dependent algorithms agree on every prefix (i.e., the functions $A^{T}_{t}: \Delta_{n}^{t-1} \rightarrow \Delta_{m}$ are the same for every $T \geq t$). We can therefore think of an anytime learning algorithm $\cA$ as being specified by a single infinite collection of functions $A_1, A_2, \dots, A_{t},\dots$, where $A_{t}: \Delta_{n}^{t-1} \rightarrow \Delta_{m}$ maps the transcript of play $y_1, y_2, \dots, y_{t-1}$ by the optimizer to the next action $x_t$ of the learner. Since anytime learning algorithms are themselves learning algorithms, we can define their menus the same way as before.

An \emph{abortable learning algorithm} is a variant of an anytime\footnote{It also makes sense to define finite-horizon abortable learning algorithms, but the only abortable learning algorithm we construct in this paper (Lemma~\ref{thm:blackwell-improper}) is also an anytime learning algorithm.} learning algorithm where each round function $A_t$ has the option of ``aborting''; formally, such an algorithm is a collection of round functions $A_t$ (for all $t \geq 1$), where each $A_t$ is now a function $A_t: \Delta_{n}^{t-1} \rightarrow \Delta_{m} \cup \{\abort\}$, and it is understood that a transcript of play stops the first time the learner outputs $\abort$. This also requires a slight change in the definition of menu: we now define the finite time-horizon menu $\cM(\cA^{T}) \subseteq \Delta_{mn}$ to be the convex hull of all CSPs of all \emph{non-aborting} length $T$ transcripts where the learner plays the sequence of actions $x_1, x_2, \dots, x_T \in \Delta_m$ against the optimizer's sequence of actions $y_1, y_2, \dots, y_T \in \Delta_n$ (notably, $x_t \neq \abort$ for all $t \in [T]$; if there is no non-aborting sequence transcript of length $T$, the menu is the empty set). We similarly define $\cM(\cA)$ to be the limit of the convex sets $\cM(\cA^{T})$ in the Hausdorff metric as $T \rightarrow \infty$. 

Our main use of abortable learning algorithms will be to construct a specific (non-abortable) learning algorithm ($\bwabort$, in Theorem~\ref{thm:blackwell-improper}), implemented by running a specific sequence of abortable learning algorithms in order, switching to the next one as soon as the current one aborts. The following lemma demonstrates how we can bound the menu of the overall algorithm by the menu of sub-algorithms.

\begin{lemma}\label{lem:sub-abortable}
Let $\cB_{1}, \cB_{2}, \dots, \cB_{K}$ be a sequence of $K$ abortable learning algorithms with the guarantee that $\cB_{K}$ never aborts. Consider the (standard) learning algorithm $\cA$ constructed by running, in order, each of the algorithms $\cB_{k}$ until it aborts, and then switching to the next algorithm $\cB_{k+1}$. Then $\cM(\cA) \subseteq \conv(\cM(\cB_1) \cup \cM(\cB_2) \cup \dots \cup \cM(\cB_K))$.
\end{lemma}
\begin{proof}
Consider an arbitrary length $T$ transcript of $\cA$, with corresponding CSP $\csp \in \cM(\cA^T)$. Let $T_k$ be the number of rounds in this transcript where the algorithm $\cB_{k}$ was run, and let $\csp_k \in \cM(\cB_{K}^{T_k})$ be the CSP corresponding to this subtranscript. We therefore have that

$$T\csp = \sum_{k=1}^{K} T_k \csp_k$$

Let $S_{\mathrm{long}}$ be the subset of indices $k$ where $T_k \geq T^{0.9}$, and let $S_{\mathrm{short}} = [k] \setminus S_{\mathrm{long}}$ contain the remaining indices. We can partition the above sum accordingly:

$$T\csp = \sum_{k \in S_{\mathrm{long}}} T_k \csp_k + \sum_{k' \in S_{\mathrm{short}}} T_{k'} \csp_{k'}.$$

\noindent
Now, note that $\sum_{k' \in S_{\mathrm{short}}} T_{k'} \leq KT^{0.9}$, so

$$\frac{1}{T}\left|\sum_{k' \in S_{\mathrm{short}}} T_{k'}\csp_{k'}\right| \leq KT^{-0.1}$$

\noindent
(where here the norm is the $\ell_2$ norm). In particular, letting $T_{\mathrm{long}} = \sum_{k \in S_{\mathrm{long}}} T_k$ (and noting that $|1/T_{\mathrm{long}} - 1/T| \leq T_{\mathrm{short}}/(TT_{\mathrm{long}}) \leq KT^{-0.1}/T_{\mathrm{long}}$), this implies that  

$$\left|\csp - \frac{1}{T_{\mathrm{long}}} \sum_{k \in S_{\mathrm{long}}} T_k \csp_k\right| \leq 2KT^{-0.1}.$$

The CSP $\csp$ is therefore within distance $o(T)$ of the CSP $\csp' = \sum_{k \in S_{\mathrm{long}}} (T_k/T_{\mathrm{long}}) \csp_k$. This itself is a convex combination of CSPs $\csp_k \in \cM(\cB_{K}^T)$ and therefore has the property that

$$\csp' \in \conv\left(\bigcup_{k \in S_{\mathrm{long}}} \cM(\cB_{k}^{T_k})\right),$$

\noindent
where each $T_k \geq T^{0.9}$. As we take $T \rightarrow \infty$, the above set must therefore converge in Hausdorff distance to the convex hull of the sets $\cM(\cB_k)$ for $k \in S_{\mathrm{long}}$. It follows that for sufficiently large $T$, any CSP $\csp \in \cM(A^{T})$ must be arbitrarily close to some CSP in $\conv(\cM(\cB_1) \cup \cM(\cB_2) \cup \dots \cup \cM(\cB_K))$, establishing the theorem.
\end{proof}

\end{document}